\newtheorem{Theorem}{Theorem}[chapter]
\newtheorem{Proposition}[Theorem]{Proposition}
\newtheorem{Lemma}[Theorem]{Lemma}
\newtheorem{Corollary}[Theorem]{Corollary}
\theoremstyle{Definition}
\newtheorem{Remark}[Theorem]{Remark}
\newcommand{\newg}{G} 
\newcommand{\og}{{\overline{g}}}
\newcommand{\coneg}{\tilde g}
\newcommand{\wmcN}{\,\,\,\widehat{\!\!\!\mcN}}
\newcommand{\rmnote}[1]{}
\newcommand \al {\alpha}
\newcommand \mcU {{\mycal U}}
\newcommand \mcV {{\mycal V}}
\newcommand \mcO {{\mycal O}}
\newcommand{\myqed} {\hfill $\Box$}
\newcommand{\mcE}{{\mycal E}}
\newcommand{\mcM}{{\mycal M}}
\newcommand{\mcP}{{\mycal P}}
\newcommand{\Id}{\mbox{\rm Id}} 
\newcommand{\const}{\mbox{\rm const}} 
\newcommand{\hchit}{{}^t{}\hchi}
\newcommand{\hchibt}{{}^t{}\hchib}
\newcommand{\bean}{\begin{eqnarray}\nonumber}
\newcommand{\beal}[1]{\begin{eqnarray}\label{#1}}
\newcommand{\eeal}[1]{\label{#1}\end{eqnarray}}
\newcommand{\bel}[1]{\begin{equation}\label{#1}}
\newcommand{\tmcM}{\,\,\,\,\widetilde{\!\!\!\!\mcM}}
\newcommand{\hJ}{{\hat J}}
\newcommand{\sthd}{{}^{\star_g}}
\newcommand{\ts}{{\cal S}} 
\newcommand{\hD}{\widehat D}
\newcommand{\hxi}{\hat \xi}
\newcommand{\hxib}{\hat{\xib}}
\newcommand{\heta}{\hat \eta}
\newcommand{\hetab}{\hat{\etab}}
\newcommand{\home}{\hat{\omega}}
\newcommand{\homb}{\hat{\underline\omega}}
\newcommand{\obe}{\mathring{\beta}}
\newcommand{\obeta}{\obe}
\newcommand{\obetab}{\oubeta}
\newcommand{\oube}{\,\,\mathring{\!\!\ubeta}}
\newcommand{\oubeta}{\oube}
\newcommand{\osi}{\mathring{\sigma}}
\newcommand{\osigma}{\osi}
\newcommand{\orho}{\mathring{\rho}}
\newcommand{\oro}{\orho}
\newcommand{\hups}{\hat{\ups}}
\newcommand{\hupsb}{\hat{\upsb}}
\newcommand{\KlNhat}[1]{\overline{#1}}
\newcommand{\KlNwidehat}[1]{\overline{#1}}
\global\let\AddToReset=\@addtoreset}
\newcommand{\tr}{\mbox{tr}}
\newcommand{\be}{\begin{equation}}
\newcommand{\bea}{\begin{eqnarray}}
\newcommand{\eea}{\end{eqnarray}}
\newcommand{\beaa}{\begin{eqnarray*}}
\newcommand{\eeaa}{\end{eqnarray*}}
\newcommand{\bseq}{\begin{subeq}}
\newcommand{\eseq}{\end{subeq}}
\newcommand{\eql}[1]{\arrlabel{#1}}
\newcommand{\alp}{\alpha}
\def \rectangle#1#2{\hbox{\vrule\vbox to #2
{\hrule\hbox to
#1{\hfil}\vfil\hrule}\vrule}}
\newcommand{\edd}{\end{document}}
\newcommand{\dd}{\mbox{${\hD}$}}
\newcommand{\nabb}{\mbox{$\nabla \mkern-13mu /$\,}}
\newcommand{\hnabb}{\mbox{$\hnabla \mkern-13mu /$\,}}
\newcommand{\hot}{\KlNwidehat{\otimes}_{\mathit{s}}}
\newcommand{\nn}{\nonumber}
\newcommand{\nnn}{}
\newcommand{\wdd}{d} 
\newcommand{\chib}{\underline{\chi}{}}
\newcommand{\hchi}{\hat{{\chi}}}
\newcommand{\hchib}{\hat{\underline{\chi}}{}}
\newcommand{\hzeb}{\hat{\zetab}}
\newcommand{\Vol}{{\mathrm{Vol}}_g}     
\newcommand{\xib}{\underline{\xi}}
\newcommand{\D}{{\cal D}}
\newcommand{\M}{{\mycal M}}
\newcommand{\ua}{{\underline{\alpha}}{}}
\newcommand{\ualpha}{{\ua}}
\newcommand{\ualp}{{\ua}}
\newcommand{\bb}{{\underline{\beta}}{}}
\newcommand{\ub}{{\bb}}
\newcommand{\ubeta}{{\bb}}
\newcommand{\dual}{\mbox{}^{\star}\!}
\newcommand{\si}{\sigma}
\newcommand{\ro}{\rho}
\newcommand{\ze}{\zeta}
\newcommand{\hze}{\hat\zeta}
\newcommand{\divv}{\mbox{div}\mkern-19mu /\,\,\,\,}
\newcommand{\hdivv}{\widehat{\mathrm{div}}\mkern-19mu /\,\,\,\,}
\newcommand{\vp}{\varphi}
\newcommand{\ddd}{\nabb}
\newcommand{\hdddd}{{\bf \hat D} \mkern-13mu /\,}
\def\frac#1#2{{{#1}\over{#2}}}
\renewcommand{\qed}{\hfill $\Box$\bigskip}
\newcommand{\eeq}{\end{equation}}
\newcommand{\ee}{\end{equation}}
\newcommand{\beqa}{\begin{eqnarray}}
\newcommand{\beqar}{\begin{deqarr}}
\newcommand{\beqarn}{\begin{deqarr}\nonumber}
\newcommand{\beqarl}[1]{\begin{deqarr}\label{#1}}
\newcommand{\eeqa}{\end{eqnarray}}
\newcommand{\eeqar}{\end{deqarr}}
\newcommand{\eeqarl}[2]{\label{#1}\arrlabel{#2}\end{deqarr}}
\newcommand{\beqan}{\begin{eqnarray*}}
\newcommand{\eeqan}{\end{eqnarray*}}
\newcommand{\ba}{\begin{array}}
\newcommand{\ea}{\end{array}}
\newcommand{\mcN}{{\mycal N}}
\newcommand{\DD}{D}
\DeclareFontFamily{OT1}{rsfs}{} \DeclareFontShape{OT1}{rsfs}{m}{n}{
 <-7> rsfs5 <7-10> rsfs7 <10-> rsfs10}{}
\DeclareMathAlphabet{\mycal}{OT1}{rsfs}{m}{n}
\newcounter{mnotecount}[section]
\newcommand{\R}{\mathbb R}
\newcommand{\N}{\mathbb N}
\newcommand{\eq}[1]{(\ref{#1})}
\newcommand{\Eq}[1]{Equation~\eq{#1}}
\newcommand{\Eqs}[2]{Equations~\eq{#1}-\eq{#2}}
\newcommand{\fij}{f_{i_j}}
\newcommand{\Cdiv}{C_{\mathrm{div}}}
\newcommand{\hCdiv}{C_{\mathrm{div,\psi\psi}}}
\newcommand{\hhCdiv}{C_{\mathrm{div,\varphi\varphi}}}
\newcommand{\tCvarphi}{\tilde C{}_\varphi}
\newcommand{\tCpsi}{\tilde C{}_\psi}
\newcommand{\zC}{\mathring{C}}
\newcommand{\zA}{\mathring{A}}
\newcommand{\zGamma}{\mathring{\Gamma}}
\newcommand{\znabla}{\mathring{\nabla}}
\newcommand{\ywi}[1]{\|#1\|_{W^{1,\infty}(Y)}}
\newcommand{\ywti}[1]{\|#1\|_{W^{2,\infty}(Y)}}
\newcommand{\ylin}[1]{\|#1\|_{L^\infty(\mcN^+\cup \mcN^-)}}
\newcommand{\yli}[1]{\|#1\|_{L^\infty(Y)}}
\newcommand{\yltwo}[1]{\|#1\|_{L^2(Y)}}
\newcommand{\ykn}[1]{\|#1\|_{H^k(Y)}}
\newcommand{\ykmtn}[1]{\|#1\|_{H^{k-2}(Y)}}
\newcommand{\ykmthn}[1]{\|#1\|_{H^{k-3}(Y)}}
\newcommand{\ykns}[1]{\|#1\|^2_{H^k(Y)}}
\newcommand{\ykmns}[1]{\|#1\|^2_{H^{k-1}(Y)}}
   \newcommand{\mcK}{{\mycal K}}
   \newcommand{\mcC}{{\mycal C}}
\begin{document}

\keywords{Characteristic Cauchy problem, Symmetric hyperbolic systems, Wave equations}
\mathclass{Primary 35L45; Secondary 35L52, 58J45, 83C05.}

\thanks{
Partly based upon work supported by the National Science Foundation
under Grant No. 0932078 000, while two of the authors (PTC, RTW)
were in residence at the Mathematical Science Research Institute in
Berkeley, California, during the fall semester  of 2014. Part of
work on this paper has been carried-out at the Erwin Schr\"odinger
Institute, Vienna. Partially supported by Narodowe Centrum Nauki
(Poland) under the grant DEC-2011/03/B/ST1/02625 and the Austrian
Science Fund (FWF) under project P 23719-N16.  RTW acknowledges
financial support from the Berlin Mathematical School (BMS).
PTC acknowledges useful discussions with Helmut Friedrich.  RTW is grateful to  the group ``Geometry in Potsdam" and its members for hospitality during part of work of this paper.
}

\abbrevauthors{A. Cabet, P.T. Chru\'sciel and R.T. Wafo}
\abbrevtitle{Characteristic initial value problem}

\title{On the characteristic initial value problem for nonlinear symmetric hyperbolic
systems, including Einstein equations}

\author{Aurore Cabet}
\address{D\'epartement d'Informatique
\\
Universit\'e de Tours, France
\\
E-mail: aurore.cabet@univ-tours.fr}

\author{Piotr T. Chru\'sciel}
\address{Faculty of Physics and Erwin Schr\"odinger Institute, Universit\"at Wien, Austria\\
homepage.univie.ac.at/piotr.chrusciel\\
E-mail: piotr.chrusciel@univie.ac.at}

\author{Roger Tagne Wafo}
\address{University of Douala, Cameroon\\
Faculty of Science\\
Department of Mathematics and Computer Science
\\
E-mail:
rtagnewafo@yahoo.com}

\maketitledis

\tableofcontents
\begin{abstract}
We consider a characteristic initial value problem for a class of
symmetric hyperbolic systems with initial data given on two smooth
null intersecting characteristic surfaces. We prove existence of
solutions on a future neighborhood of the initial surfaces. The
result is applied to general semilinear wave equations, as
well as the Einstein equations with or without sources, and
conformal variations thereof.
\end{abstract}

\makeabstract

\chapter{Introduction}

\label{sec:intro}

There are several reasons why a characteristic Cauchy problem is of
interest in general relativity. First, the general relativistic
constraint equations on characteristic surfaces are trivial to solve
(see, e.g., \cite{ChPaetz,CCM2,RendallCIVP}), while they are not on
spacelike ones. Thus, a good understanding of the characteristic
Cauchy problem is likely to provide more flexibility in constructing
space-times with interesting properties. Next, an observer can in
principle measure the initial data on her past light cone, and use
those to determine the physical fields throughout her past by
solving the field equations backwards in time; on the other hand,
initial data on a spacelike surface near the observer can not be
measured instantaneously. Finally, Friedrich's conformal field
equations may be used to construct space-times using initial data
prescribed on past null
infinity~\cite{ChPaetz2,FriedrichNullData,FriedrichCMP86} which, at
least in some situations, is a null cone emerging from a single
point representing past timelike infinity.

 The characteristic initial value
problem for the vacuum Einstein equations with initial data given on
two smooth null intersecting hypersurfaces has been studied by
several authors
\cite{CagnacEinsteinCRAS1,CagnacEinsteinCRAS2,SachsCIVP,Dautcourt,%
PenroseCIVP,ChristodoulouMzH,F1,F2,DamourSchmidt,%
MzHSeifertCIVP}; compare, in different settings,
\cite{CaciottaNicoloI,CaciottaNicoloII,HaywardNullSurfaceEquations}.
The most satisfactory treatment of the local evolution problem, for
a large class of quasi-linear wave equations and symmetric
hyperbolic systems, has been given by Rendall~\cite{RendallCIVP},
who proved existence of a solution in a neighborhood of the
intersection of the initial data hypersurfaces. A similar result for
a neighborhood of the tip of a light-cone has been established by
Dossa~\cite{DossaAHP}. The region of existence has been extended by
Cabet~\cite{Cabet1,Cabet2} for a class of non-linear wave-equations
satisfying certain structure conditions. In these last papers
existence of the solution in a whole neighborhood of the initial
data hypersurfaces, rather than of their intersection, is
established. We will refer to this kind of results  as ``the
neighborhood theorem''.   Similar results have been established by
Dossa and
collaborators~\cite{DossaBah,DossaTouadera,DossaWafo,DossaHoupa1,DossaHoupa2}
for various families of semilinear wave equations. Finally,
Luk~\cite{Luk} established the neighborhood theorem for the vacuum
Einstein equations in four space-time dimensions, in an argument
which makes use of the specific structure of the nonlinearities
occurring in those equations.

 The aim of this work is to show that no conditions on the non-linearity are necessary for existence near
an (optimal) maximal subset of the initial data hypersurfaces for
the large class of non-linear wave equations which can be written in
a doubly-null form.

We further show that our result applies to Einstein equations in
four space-time dimensions, as well as to a version, due to
Paetz~\cite{TimConformal}, of the conformal field equations of
Friedrich.

As a result we obtain that vacuum general relativistic characteristic initial data with suitable asymptotic behavior (as analyzed in detail in~\cite{ChPaetz3,PaetzScri}) lead to space-times with a piece of smooth Scri, without any smallness conditions on the data.%
\footnote{Once this work was completed we have been made aware of a
similar result in~\cite{LiZhuScri}.}
  Moreover, a  \emph{global-to-the-future} Scri is obtained if the data are sufficiently close to Minkowskian ones.

Higher-dimensional Einstein equations can be handled by a variation
of our techniques, this will be discussed elsewhere.

Our analysis is tailored to a setting where the initial data are
given on two transversely intersecting smooth characteristic
surfaces. The characteristic initial value problem with initial data
on a light cone issued from a point is readily reduced to the one
considered here, by first solving locally near the tip
(see~\cite{DossaAHP,ChConeExistence} and references therein), and
then using the results proved here to obtain a solution near the
maximal domain, within the light-cone, of existence of solutions of
the transport equations.

\chapter{The basic energy identity} \label{Sec:enid}
 Let $Y$ be a $(n-1)$-dimensional compact manifold
 without
 boundary. We are interested in quasi-linear first order
 symmetric hyperbolic systems of the
 form
\bel{fos} Lf = \newg\,,
\ee
on subsets of
\begin{eqnarray}\label{manif}
  \tmcM:=\{u\in[0,\infty),v\in
  [0,\infty),y\in Y\}\,.
\end{eqnarray}
In \eq{fos}, $f$ is assumed to be a section of a real vector bundle
over $\tmcM$, equipped with a scalar product, similarly for $\newg$.
We will use the same symbol $\nabla$, respectively $\langle\cdot
,\cdot \rangle$, to denote connections, respectively scalar
products, on all relevant vector bundles. Both the scalar product
and the connection coefficients are allowed to depend upon $f$, and
we assume that $\nabla$ is compatible with $\langle \cdot,\cdot
\rangle$. Similarly $\tmcM$ will be assumed to be equipped with a
measure $d\mu$, possibly dependent upon $f$. $L$ is a first order
operator of the form
$$L=A^\mu\nabla_\mu\,,$$
where the $A^\mu$'s are self-adjoint, and are smooth functions of
$f$ and of the space-time coordinates. The summation convention is
used throughout.

Let $q_r$, $r=1,\ldots,m$, denote a collection of smooth vector
fields on $Y$ such that for each $y\in Y$ the vectors $q_r(y)$ span
$T_yY$; clearly $m\ge \dim Y$. We will often write $\znabla_r$ for $\znabla_{q_r}$.

For $k\in \N$ let $\mcP^k$ denote the
collection  of differential operators of the form
\bel{Pform} \znabla_{q_{r_1}}\ldots  \znabla_{q_{r_\ell}}\,,\quad
    0\le \ell\le k\,.
\ee
Here $\znabla$ is a fixed, arbitrarily chosen,
smooth connection which is $f$, $u$, and $v$--independent. We number
the operators \eq{Pform} in an arbitrary way and call them $P_r$,
thus
$$\mcP^k=\{P_r, {r=1},\ldots,{N(k)}\}\,,$$ for a certain
$N(k)$, with $P_1=1$, the identity map.

Let $w_r$ be any smooth functions on $\tmcM$, we set
\bel{bid0}X{^\mu}(k):= \sum_{r=1}^{N(k)}w_r \langle P_rf,A^\mu
P_rf\rangle\,,\ee
 so that
\bel{bid1}
 \nabla_\mu (X{^\mu}(k))
 =  \sum_r\Big\{\underbrace{\langle P_rf,A^\mu
P_rf\rangle\partial_\mu w_r}_{I_r}
 +w_r \big(\underbrace{\langle P_rf,(\nabla_\mu A^\mu) P_rf\rangle}_{II_r}+\underbrace{2\langle P_rf,L
P_rf\rangle}_{III_r}\big)\Big\}
 \,.
\ee
Let
$$
 \Omega_{a,b}=\underbrace{[0,a]}_{\ni u}\times \underbrace{[0,b]}_{\ni v}\times\underbrace{\; Y \;}_{\ni x^B }
 \,,
$$
and let $d\mu = du \, dv \, d\mu_Y$ be any measure, absolutely
continuous with respect to the coordinate Lebesgue measure, on
$\Omega_{ab}$, with smooth density function. From Stokes' theorem we have
$$\int_{\partial \Omega_{a,b}}X^\alpha(k) dS_\alpha = \int_{\Omega_{a,b}}\nabla_\mu(X^\mu(k)) d\mu\,,$$
so that \bean \int_{u=a}X^\alpha(k) dS_\alpha +
\int_{v=b}X^\alpha(k) dS_\alpha & = & \int_{u=0}X^\alpha(k)
dS_\alpha + \int_{v=0}X^\alpha(k) dS_\alpha\\
&& + \int_{\Omega_{a,b}}\nabla_\mu(X^\mu(k)) d\mu\,. \eeal{bid2}

From now on we specialise to $f$'s which are of the form
\bel{bid2.5}f= \left(%
 \begin{array}{c}
  \varphi \\
  \psi \\
 \end{array}%
\right)\,, \ee
with $A^v  $ and $A^u $ satisfying
\bel{bid3} A^u=\left(%
\begin{array}{cc}
  A^u_{\varphi\varphi} & 0 \\
  0 & 0 \\
\end{array}%
\right)\,,\
 A^v=\left(%
\begin{array}{cc}
  0 & 0 \\
  0 & A^v_{\psi\psi} \\
\end{array}%
\right)\,, \
 \mbox{
and  $A^v_{\psi\psi} > 0$, $A^u_{\varphi\varphi}> 0$.} \ee

  It is further assumed that the connections $\nabla $ and
$\znabla$ preserve the splitting \eq{bid2.5}. We will write
\bel{bid2.5a}\newg= \left(%
\begin{array}{c}
  \newg_\varphi \\
  \newg_\psi \\
\end{array}%
\right)\,.
\ee

From \eq{bid2.5}-\eq{bid3} we obtain, for fields supported in a
compact set $K$ \bean \int_{u=a}X^\alpha(k) dS_\alpha & \ge  & c(K)
\sum_r \int_{u=a} w_r \langle P_r\varphi, P_r\varphi\rangle \,dv\,
d\mu_Y\,, \nonumber \\ \label{bid4} & = :& c(K)\,
E_{k,\{w_r\}}[\varphi,a]\,,
\\ \int_{v=b}X^\alpha(k) dS_\alpha & \ge  & c(K) \sum_r \int_{v=b}
w_r \langle P_r\psi, P_r\psi\rangle \,du\, d\mu_Y\nonumber \\
& = :& c(K)\, \mcE_{k,\{w_r\}}[\psi,b] \,,
\eeal{bid6}
for some constant $c(K)$.
Equations \eq{bid1}-\eq{bid2} give thus
\bean E_{k,\{w_r\}}[\varphi,a] + \mcE_{k,\{w_r\}}[\psi,b] & \le &
C_1(K)\Big\{ E_{k,\{w_r\}}[\varphi,0] + \mcE_{k,\{w_r\}}[\psi,0]
\\ & & + \int_{\Omega_{a,b}}\sum_r (I_r + w_r (II_r +  III_r))\Big\}
 \,,
  \phantom{xxxx}
\eeal{bid5}
for some constant $C_1(K)$.

Let $
\lambda\ge 0$, we choose the weights to be independent of $r$:
\bel{bid6.5} w_r= e^{-\lambda(u+v)}\,,\ee and we will write
$E_{k,\lambda}$ for $E_{k,\{w_r\}}$ with this choice of weights,
similarly for $\mcE_{k,\lambda}$. From \eq{bid4} we find
\bean
E_{k,
\lambda}[\varphi,a]&=& 
\sum_{0\le 
j\le
k
}
 \int_{[0,b]\times Y} |
 \znabla_{q_{r_1}}\ldots \znabla_{q_{{r_j}}} 
 \varphi(a,v,\cdot)|^2e^{-\lambda(a+v)} dv
 d\mu_Y
 \\ & =:& 
\int_{0}^b e^{-\lambda( a+v)} \|
\varphi(a,v)\|^2_{H^{k}(Y)} dv\,, \eeal{bid7}
 where one recognises
the usual Sobolev norms $H^{k}(Y)$ on $Y$. One similarly has
\bean \mcE_{k,
\lambda}[\psi,b]&=&
\sum_{0\le 
j\le k}
 \int_{[0,a]\times Y} |
 \znabla_{q_{r_1}}\ldots \znabla_{q_{{r_j}}} 
 \psi(u,b,\cdot)|^2 
 e^{-\lambda(u+b)} du\, d\mu_Y
\\
  & =:&
  \int_0^a e^{-\lambda (u+b)} \|
 \psi(u,b)\|^2_{H^k(Y)} du\,.
\eeal{bid8}

We recall some  general inequalities, which will be used repeatedly.
Recall  that $Y$ is a compact manifold without boundary (compare,
however, Remark~\ref{R10V2014.1}, p.~\pageref{R10V2014.1}). First, we
have the  Moser product inequality:
 \bel{Moser1}
 \|fg\|_{H^k(Y)} \le C_M(Y,k) \Big(\|f\|_{L^\infty(Y)} \| g\|_{H^k(Y)}+ \|f\|_{H^k(Y)}
 \| g\|_{L^\infty(Y)}\Big)\,.
 \ee
Next, we have the
 Moser commutation inequality, for $0\le r\le k$:
 \bel{Moser2}
 \|P_r(fg)-P_r(f)g\|_{L^2(Y)} \le C_M(Y,k)
 \Big(\|f\|_{L^{\infty}(Y)} \| g\|_{H^k(Y)}+ \|f\|_{H^{k-1}(Y)} \| g\|_{W^{1,\infty}(Y)}\Big)\,.
 \ee
We shall also need the Moser composition inequality:
 \bel{Moser3}
 \|F(f,\cdot)\|_{H^k(Y)} \le \hat C_M\Big(Y,k,F,\|f\|_{L^\infty(Y)}
 \Big) \Big(\|F(f=0,\cdot)\|_{H^k(Y)}+
 \|f\|_{H^k(Y)}\Big)\,.
 \ee

The constants $C_M$ and $\hat C_M$ also depend upon the
connection $\znabla$.

We return to the energy identity on a set $\mcU\times Y$, with $\mcU$ coordinatised by $u$ and $v$.
If $X(k)$ is given by \eq{bid0}, with $w_r=e^{-\lambda(u+v)}$, then,
writing $LP_rf$ as $P_r Lf + [L,P_r]f$, and assuming
\bel{strict} \langle \varphi , A^u_{\varphi\varphi}\varphi\rangle
\ge c |\varphi|^2\,,\qquad \langle \psi , A^v_{\psi\psi}\psi\rangle
\ge c |\psi|^2
 \,,
\ee
with $c>0$, one obtains for $k>\frac{n-1}{2}$
\bean
 \lefteqn{\int_{\mcU\times Y} \nabla_\alpha(X^\alpha(k))d\mu  \le}&&
\\
 &&
 \!\!\!\!\!\!
    \int_\mcU e^{-\lambda(u+v)}\Big\{\left(\|\nabla_\mu
 A^\mu\|_{L^\infty(Y)} 
 -c\lambda\right)\|f\|^2_{H^{k}(Y)}
 +C(Y,k)\|f\|_{H^{k}(Y)} 
 \|\newg\|_{H^k(Y)} 
 \nonumber \\&&
 \qquad \qquad \qquad+ 2\int_{\mcU\times Y} \langle P_r f,[L,P_r]f\rangle e^{-\lambda(u+v)}
 \,d\mu
  \Big\}\,.
 \label{volest1}
\eea

 Some special cases are worth pointing out:

 \begin{enumerate}
 \item The case of ODE's in $u$ with a parameter $v$, or vice-versa,
 corresponds to $Y$ being a single point and $k=0$.
 \item The usual
 energy inequality for symmetric hyperbolic systems is obtained
 when $\mcU=I$ is an interval in $\R$.
 \end{enumerate}

 To control the
 commutators we will assume \eq{bid3}. We identify $(\varphi,0)$ with $\varphi$,
similarly for $(0,\psi)$ and $\psi$, and write \bean [A^\mu
\nabla_\mu,P_r]f &= &  [A^u \nabla_u,P_r]f+[A^v \nabla_v,P_r]f+[A^B
\nabla_B,P_r]f
\\&= & [A^u \nabla_u,P_r]\varphi+[A^v \nabla_v,P_r]\psi+[A^B
\nabla_B,P_r]f\,.\eeal{bid15a}
 Thus, it suffices to estimate
 $[A^u_{\varphi\varphi}\nabla_u,P_r]\varphi$, $[A^v_{\psi\psi}\nabla_v,P_r]\psi$,
 and $[A^B\nabla_B,P_r]f$. We define the relative connection coefficients $\Gamma_\mu$ by the formula
 \bel{relconcoef}
 \Gamma_\mu f:= \nabla_\mu f - \znabla_\mu f\,.
 \ee
By hypothesis the connections preserve the $(\varphi,\psi)$
decomposition, so that $\Gamma_\mu$ can be written as
 \bel{Gammdecom}
 \Gamma_\mu=\left(%
\begin{array}{cc}
  \Gamma_{\varphi\varphi,\mu} & 0 \\
  0 & \Gamma_{\psi\psi,\mu} \\
\end{array}%
\right)\,.
 \ee
This leads to the following form of $[A^B\nabla_B,P_r ]f$: \beaa
\yltwo{[A^B\nabla_B,P_r]f }& = &
\yltwo{[A^B\znabla_B,P_r]f+[A^B\Gamma_B,P_r]f }\,. \eeaa
 Using \eq{Moser2}-\eq{Moser3}, the first term is estimated as  \beaa 
 {C'_M
\Big(\|A \|_{W^{1,\infty}(Y)} \|f \|_{H^{k}(Y)}+ \|A \|_{H^{k}(Y)}
\| f\|_{W^{1,\infty}(Y)}\Big)}
\,,\eeaa 
and the second as
\beaa 
 {C''_M
\Big(\|A^B\Gamma_B\|_{W^{1,\infty}(Y)} \|f \|_{H^{k-1}(Y)}+
\|A^B\Gamma_B\|_{H^{k}(Y)} \| f\|_{L^{\infty}(Y)}\Big)}
\,,\eeaa 
leading, by \eq{Moser1}, to an overall estimation \bean
\yltwo{[A^B\nabla_B,P_r]f }& \le  & C\Big(Y,k,\|
f\|_{W^{1,\infty}(Y)},\| A \|_{W^{1,\infty}(Y)},\| \Gamma
\|_{W^{1,\infty}(Y)}\Big) \times \nonumber \\ && \Big( \|f
\|_{H^{k}(Y)}+ \|A\|_{H^{k}(Y)}+ \|\Gamma
\|_{H^{k}(Y)}\Big)\,.\eeal{alp0est}
Here we have written
\bel{10VI0.5}
 \|A\|_{H^{k}(Y)}= \sum_\mu\|A^\mu\|_{H^{k}(Y)}
 \,,
 \quad
 \|\Gamma\|_{H^{k}(Y)}= \sum_\mu\|\Gamma_\mu\|_{H^{k}(Y)}
 \,.
\ee

Writing $\nabla_\mu\varphi$ as
$\partial_\mu\varphi+\gamma_{\varphi\varphi,\mu}\varphi$ we have
\bean \lefteqn{ \underbrace{
    [A^u_{\varphi\varphi}(\partial_u+\gamma_{\varphi\varphi,u}),P_r
    ]\varphi}_{\alpha} = [A^u_{\varphi\varphi},P_r
    ]\underbrace{\partial_u\varphi}_{\nabla_u\varphi-\gamma_{\varphi\varphi,u}\varphi}+
    [A^u_{\varphi\varphi}\gamma_{\varphi\varphi,u},P_r ]\varphi
 }
 &&
\\\nonumber
&=& \underbrace{[A^u_{\varphi\varphi},P_r
]\Big\{(A^u_{\varphi\varphi})^{-1}\Big[\underbrace{-A^B_{\varphi\varphi}\nabla_B\varphi
-A^B_{\varphi\psi}\nabla_B\psi}_{-A^B_{\varphi\varphi}(\znabla_B+\Gamma_{\varphi\varphi,B})\varphi
-A^B_{\varphi\psi}(\znabla_B+\Gamma_{\psi\psi,B})\psi}+
\newg_\varphi \Big]}_{\alpha_1} -
\gamma_{\varphi\varphi,\mu}\varphi\Big\}
\\ &&
+\underbrace{[A^u_{\varphi\varphi}\gamma_{\varphi\varphi,u},P_r
]\varphi}_{\alpha_3}=:\alpha_1+\alpha_2+\alpha_3\,,\eeal{alp0est1}
with  $\alpha_2$ defined by the last equality.
%
Set:
$$
\tilde{A}^B_{\varphi\varphi,u}:=\left(A^u_{\varphi\varphi}\right)^{-1}A^B_{\varphi\varphi},\quad
\tilde{A}^B_{\varphi\psi,u}:=\left(A^u_{\varphi\varphi}\right)^{-1}A^B_{\varphi\psi},\quad
\tilde{\newg}_{\varphi}=
\left(A^u_{\varphi\varphi}\right)^{-1}\newg_\varphi
$$
and
$$
\tilde{A}^B_{\psi\varphi,v}:=\left(A^v_{\psi\psi}\right)^{-1}A^B_{\psi\varphi},\quad
\tilde{A}^B_{\psi\psi,v}:=\left(A^u_{\psi\psi}\right)^{-1}A^B_{\psi\psi},\quad
\tilde{\newg}_{\psi}=\left(A^v_{\psi\psi}\right)^{-1}\newg_\psi\,.
$$
By \eq{Moser2}-\eq{Moser3} we have the estimate \bean
\yltwo{\alpha_1} &\le& C_M \Big(\|A^u\|_{W^{1,\infty}(Y)}
\|\nabla_u\varphi \|_{H^{k-1}(Y)}+ \|A^u\|_{H^{k}(Y)} \|
\nabla_u\varphi\|_{L^{\infty}(Y)}\Big) \\ &\le& C\Big(Y,k,\|
f\|_{W^{1,\infty}(Y)},\| A^u \|_{W^{1,\infty}(Y)},\| \tilde{A}
\|_{L^{\infty}(Y)},\| \Gamma \|_{L^{\infty}(Y)},\|
\tilde{\newg}_{\varphi}\|_{L^{\infty}(Y)}\Big)\times \nonumber \\ &&
\Big( \|f \|_{H^{k}(Y)}+ \|A^u\|_{H^{k}(Y)}+ \|\tilde{A}
\|_{H^{k-1}(Y)}+ \|\Gamma \|_{H^{k-1}(Y)}+
\|\tilde{\newg}_\varphi\|_{H^{k-1}(Y)}\Big)\,.\nonumber
\\&&
 \eeal{alp1est}
Similarly,
\bean
 \yltwo{\alpha_2} &\le& C\Big(Y,k,\|
\varphi\|_{L^{\infty}(Y)},\| A^u \|_{W^{1,\infty}(Y)},\|
\gamma_{\varphi\varphi,u} \|_{L^{\infty}(Y)}\Big)\times
 \nonumber \\ && \Big( \|\varphi \|_{H^{k-1}(Y)}+ \|A^u\|_{H^{k}(Y)}+
 \|\gamma_{\varphi\varphi,u} \|_{H^{k-1}(Y)} \Big)\,,
 \label{alpha2}
\eea
and
\bean
 \yltwo{\alpha_3} &\le& C\Big(Y,k,\|
 \varphi\|_{L^{\infty}(Y)},\| A^u \|_{W^{1,\infty}(Y)},\|
 \gamma_{\varphi\varphi,u} \|_{W^{1,\infty}(Y)}\Big)\times
 \nonumber
\\
 && \Big( \|\varphi \|_{H^{k-1}(Y)}+ \|A^u\|_{H^{k}(Y)}+
 \|\gamma_{\varphi\varphi,u} \|_{H^{k}(Y)} \Big)\,.
 \label{alpha3}
\eea

By symmetry we have a similar contribution from
$[A^v_{\psi\psi}\nabla_v,P_r]\psi$. It follows that there exists a
constant
\bean \hat C_1 &= &  C\Big(Y,k,\| f\|_{W^{1,\infty}(Y)},\|
A\|_{W^{1,\infty}(Y)},\| \tilde{A}\|_{L^{\infty}(Y)},\|
\gamma\|_{W^{1,\infty}},
\\
 && \phantom{C\Big( }\| \Gamma\|_{W^{1,\infty}},\|
    \newg\|_{W^{1,\infty}},\|\tilde{\newg}\|_{L^{\infty}}\Big)
\eeal{5X13.11}
so that \eq{volest1} can be rewritten as
\bean
 \int_{\mcU\times Y} \nabla_\mu(X^\mu(k))d\mu
 &\leq&
 \int_\mcU e^{-\lambda(u+v)}\Big\{\left(\|\nabla_\mu
A^\mu\|_{L^\infty(Y)} 
-c\lambda\right)\|f\|^2_{H^{k}(Y)}
 \nonumber
\\
 && +
 \hat C_1 \|f\|_{H^{k}(Y)}\times\Big( \|f \|_{H^{k}(Y)}
  + \|A\|_{H^{k}(Y)}+
    \|\tilde{A}\|_{H^{k-1}(Y)}
\nonumber
\\ &&+
    \|\Gamma\|_{H^{k}(Y)}+\|\gamma\|_{H^{k}(Y)}+ \|\newg\|_{H^{k}(Y)}
    + \|\tilde{\newg}\|_{H^{k-1}(Y)}\Big)\Big\}
    \, du \, dv\,.
    \nonumber
\\
 \label{volest2}
\eea

\chapter{The iterative scheme} \label{sec:iterati-scheme}

\section{Outline of the iteration argument}\label{sub:Initial-data}

For the purpose of the arguments in this section, we let
$$
 \mcN^-:=\{u=0\,, v\in [0,b_0]\}\times Y
  \,,
   \quad
   \mcN^+:=\{u\in [0,a_0],v=0\}\times Y
   \,;
$$
we will see later how to handle general initial characteristic
hypersurfaces for systems arising from wave equations. The initial
data $\overline  f\equiv f|_{\mcN}$ will be given on
$$
 \mcN:=\mcN^-\cup \mcN^+
 \,,
$$
and will belong to a suitable Sobolev class.  More precisely, we are
free to prescribe $\overline \varphi (v) \equiv \varphi (0,v)$ on
$\mcN^-$ and $\overline \psi (u) \equiv \psi (u,0)$ on $\mcN^+$, and
then the fields $  \psi (0,v)$ on $\mcN^-$ and $ \varphi (u,0)$ on
$\mcN^+$ can be calculated by solving transport equations.  In this
section we assume that these equations have global solutions on
$\mcN^\pm$, this hypothesis will be relaxed later.

Throughout we use the convention that overlining a field denotes
restriction to $\mcN$ (consistently with the last paragraph).

Our hypotheses will be symmetric with respect to the variables $u$
and $v$, and therefore the result will also be symmetric. We will
construct solutions on a neighborhood of $\mcN^-$ in
$$  \Omega_{a_0,b_0}:=\{u\in [0,a_0]\,, v\in [0,b_0]\}\times Y
   \,,
$$
and a neighborhood of $\mcN^+$ can then be obtained by
applying the result to the system in which $u$ is interchanged with
$v$.

The method is to use a sequence $\overline  f_i$ of smooth initial
data approaching   $\overline  f$, and to solve a sequence of
linear problems: We let $f_0$ be any smooth extension of $\overline
f_0$ to $\Omega_{a_0,b_0}$. Then, given $f_i$, the field $f_{i+1}$
is defined as the solution of the linear system \bel{itpro} L_i
f_{i+1}=\newg_{i}\,, \ee where \bel{itpro1}
L_i=A^\mu(f_i,\cdot)\nabla(i)_\mu\,,\quad
\newg_{i}=\newg(f_{i},\cdot)\,, \ee
and where we have used the symbol $\nabla(i)$ to denote $\nabla$, as
determined by $f_i$. (The reader may wonder why we do not replace
$\nabla$ by an $f$-independent connection, putting all the
dependence of $\nabla$ upon $f$ into
 the right-hand side of the equation. However,  in some situations the new connection might not be compatible with the scalar product, which has been assumed in our calculations.)
 For smooth initial data and
$f_i$, \eq{itpro} always has a global smooth solution on
$\Omega_{a_0,b_0}$ by~\cite{RendallCIVP}.

By continuity, the  $f_i$'s will satisfy a certain set of
inequalities, to be introduced shortly, on a subset
$$  \Omega_i:=\{u\in [0,a_i]\,, v\in [0,b_0]\}\times Y
   \,.
$$
We will show that there exists $a_*>0$ such that $a_i\ge a_*$, so
that there will be a common domain
$$
 \Omega_*:=\{u\in [0,a_*]\,, v\in [0,b_0]\}\times Y
$$
on which the desired inequalities will be satisfied by all the
$f_i$'s. This will allow us to show convergence to a solution of the
original problem defined on $\Omega_*$.

We note that our system implies a system of non-linear constraint
equations on $\overline  f$, sometimes called \emph{transport
equations}. The solutions of these constraints might blow up in
finite time, see e.g.~\cite{Cabet1} for an example arising from a
semilinear wave equation. It is part of our hypotheses that the
constraints are satisfied throughout $\mcN$; in some situations this
might require choosing $a_0$ and $b_0$ small enough so that a smooth
solution of the constraint equations exists.
%

\section{Bounds for the iterative
scheme}\label{sub:Bounds-iterati-scheme}

In order to apply the energy identity of Section~\ref{Sec:enid} we
need to estimate the volume integrals appearing in  \eq{bid2}. We
could appeal to \eq{volest2}, but it is instructive to analyse
\eq{bid5} directly. All terms arising from $I_r$ in \eq{bid1} give a
negative contribution, bounded above by
\bel{bid11} -\lambda c(K)
\int_0^a \int_0^b e^{-\lambda (u+v)}\|f_{i+1}(u,v)\|^2_{H^k(Y)} du
\,
dv\,. \ee 
The terms arising from $II_r$ give a contribution which, using
obvious notation, is estimated by \bel{bid13}  \|(\nabla_\mu
A^\mu)_i\|_{L^\infty} \int_0^a \int_0^b e^{-\lambda
(u+v)}\|f_{i+1}(u,v)\|^2_{H^k(Y)} du \, dv\,. \ee
%
The estimation of the terms arising from $III_r$ requires care, as
we need to control $\lambda$-dependence of the constants. One can
proceed as follows:
\beaa
 \lefteqn{2 \sum_r \int_0^a\int_0^b \langle P_rf_{i+1},L
 P_rf_{i+1}\rangle e^{-\lambda(u+v)}du\,dv\,d\mu_Y}
\\ && = 2
\sum_r \int_0^a\int_0^b \langle P_rf_{i+1},P_r\newg_{i} +[L
_{i},P_r]f_{i+1}\rangle e^{-\lambda(u+v)}du\,dv \,d\mu_Y
\\
 &&
    \le 2 \int_0^a\int_0^b
    e^{-\lambda(u+v)}\|f_{i+1}(u,v)\|_{H^k(Y)}\Big(\underbrace{\|\newg_{i}(u,v)\|_{H^k(Y)}}_{III_{1}}
\\
&& \phantom{2 \int_0^a\int_0^b xx}
    +\underbrace{\sum_r\|[L_{i}
    ,P_r]f_{i+1}(u,v)\|_{L^2(Y)}}_{III_{2 }}\Big) du\,dv\,d\mu_Y
    \,.
\eeaa
The term $III_{1}$ can be estimated by the usual Moser inequality on
$Y$,
$$
\|\newg_{i}(u,v)\|_{H^k(Y)} \le
C\Big(k,Y,\|f_i(u,v)\|_{L^\infty(Y)}\Big)\Big(\|\mathring{\newg}
(u,v)\|_{H^k(Y)}+\|f_i(u,v)\|_{H^k(Y)}\Big)\,,$$
 where $\mathring{\newg} =\newg(f=0)$.
 Let $0<\epsilon\le 1$ be a constant which will be
determined later, the inequality $ab\le  a^2/(4\epsilon)+ \epsilon
b^2$ leads then  to a contribution of $III_{1}$ in \eq{bid5} which
is estimated by
\bean\lefteqn{
 C_3\Big(k,Y,\sup_{u,v}\|f_i(u,v)\|_{L^\infty(Y)}\Big)  \int_0^a
 \int_0^b e^{-\lambda (u+v)}\Big( \|\mathring{\newg}  (u,v)\|^2_{H^k(Y)}}&&
\\
 &&\phantom{xxxxx}+ \epsilon \|f_i(u,v)\|^2_{H^k(Y)}+
 c_1(\epsilon)\|f_{i+1}(u,v)\|^2_{H^k(Y)}\Big) du \, dv\,,
\eeal{bid15}
 with $c_1(\epsilon)\to \infty$ as $\epsilon \to 0$.
The analysis of $III_{2}$ proceeds as in \eq{bid15a}. Since the
$P_r$'s are $u$--independent we have
$$
 [A^u \partial_u,P_r]\varphi=[A^u ,P_r]\partial_u\varphi
  \,,
$$
and, calculating as in \eq{alp0est1}, we can use the equation
satisfied by $\varphi_{i+1}$ to replace $\partial_u \varphi_{i+1}$
by a first order differential operator in $f_{i+1}$ tangential to
$Y$ (with coefficients that perhaps depend upon $f_i$), similarly
for $A^v\partial_v \psi_{i+1}$. The Moser commutation inequality
\eq{Moser2} on $Y$  can then be used to obtain the following
estimation for the corresponding contribution to \eq{bid5}:
\bean
C_3\Big(k,Y,\sup_{u,v}\|f_{i}(u,v)\|_{W^{1,\infty}(Y)},\sup_{u,v}\|f_{i+1}(u,v)\|_{W^{1,\infty}(Y)}\Big)
\times\phantom{xxxxx}
\\  \int_0^a \int_0^b e^{-\lambda (u+v)}
\Big(\epsilon\|f_{i}(u,v)\|^2_{H^k(Y)}+c_2(\epsilon)\|f_{i+1}(u,v)\|^2_{H^k(Y)}
+\|\zA \|^2_{H^{k}(Y)} \nonumber
\\+ \|\mathring{\tilde{A}} \|^2_{H^{k-1}(Y)}+
\|\zGamma \|^2_{H^{k}(Y)}+\|\mathring{\gamma} \|^2_{H^{k}(Y)}+
\|\mathring{\newg} \|^2_{H^{k}(Y)}+
\|\mathring{\tilde{\newg}}\|^2_{H^{k-1}(Y)} \Big)du \, dv\,,
\eeal{bid17}
where
$$
\zA^\mu=A^\mu|_{f=0}\,,\quad \zGamma_\mu=\Gamma_\mu|_{f=0}\,, \quad
\mbox{etc.,}
$$
with norms defined as in \eqref{10VI0.5}.

Define
\bean
 C_0
  &=&
  1
  +
 \sup_{i\in \N,(u,v)\in ([0,a_0]\times
 \{0\})\cup(\{0\}\times[0,b_0])} \|\overline
 {f}_i(u,v)\|_{W^{1,\infty}(Y)}%
 \,.
  \phantom{xxxxx}
\eeal{bid9def}
We assume that $C_0$ is finite.

%

 Let $\mcK$ be a compact neighborhood of the image
of the initial data map $\overline {f}$. We will assume that the
sequence $\overline {f}_i$ converges to $\overline {f}$ in $L^\infty
(\mcN)$, similarly for first and second-order derivatives, in
particular we can assume that the image of $\overline {f}_i$  lies
in $\mcK$.

Let
\bel{Cdiv} \Cdiv:=\sup |\nabla_\mu A^\mu| +1\,, \ee
where the supremum is taken over all points in $\mcN^+\cup\mcN^-$
and over all $(\varphi,\psi,\nabla \varphi, \nabla \psi)$ satisfying
\bean
 &
 (\varphi,\psi)\in \mcK
 \,,
  \quad| \znabla_B f(u,v)|\le 2 C_0
  \,,
 \quad |\partial_u \psi|\le 2 \sup_i\ylin{{\overline{\partial  {\psi}}_i
\over\partial
  u}} +1\,,
  &
   \nonumber
  \\
  &
    |\partial_v
\varphi|\le 2 \sup_i\ylin{\overline{{\partial {\varphi}}_i
\over\partial
  v}} +1
  \,.
\eeal{6IX13.1}
(The suprema over $i$ will be finite in view of our hypotheses on the
sequence $ \overline {f}_i$.) We note that
\bel{9X13.1}
 \nabla_\mu A^\mu = \partial_\psi A^\mu \partial_\mu \psi
  + \partial_\varphi A^\mu \partial_\mu \varphi  + \mbox{terms independent of derivatives of $f$}
  \,,
\ee
so that to control $\Cdiv$ one needs to control those derivatives of
$f$ which appear in $\partial_\psi A^\mu \partial_\mu \psi
  + \partial_\varphi A^\mu \partial_\mu \varphi $.
Now, in the right-hand side of \eq{9X13.1} the values $\partial_u
\varphi$ and $\partial_v\psi$ can be algebraically
  determined in terms of  other fields involved using the field
  equations: Indeed, using \eq{fos} we can view
  $\partial_u\varphi$ as a function, say $F$, of $f$ and
  $\znabla_B f$. Then, when calculating $\Cdiv$, we consider all
  values of $F$ with $f\in \mcK$ and $|\znabla_B f|\le 2C_0$;
  similarly for $\partial_v\psi$.

\begin{Remark}{\rm
  \label{R9X13.1}
It should be clear from \eq{9X13.1} that the condition on
$\partial_v \varphi$ in \eq{6IX13.1} is irrelevant if $A^v$ does not
depend upon $\varphi$. Similarly the condition  on $\partial_u \psi$
in \eq{6IX13.1} is irrelevant if $A^u$ does not depend upon $\psi$.
}\qed\end{Remark}

Let $a_i$ be the largest number in $(0,a_0]$ such that
\beqar
 & \|(\nabla_\mu
 A^\mu)_i\|_{L^\infty(\Omega_{a_i,b_0})} \le \Cdiv \,,
 \label{bid9f}&
\\
 & \sup_{(u,v)\in
 [0,a_i]\times[0,b_0]}\|f_i(u,v)\|_{W^{1,\infty}(Y)},
 \label{bid9f-1}
 \le 4 C_0
  \,.
 &
 \arrlabel{bid9m}
\eeqar

For any $\epsilon>0$ we can  choose $\lambda$ large enough,
independent of $i$, so that the sum of \eq{bid11}, \eq{bid13}, and
of the $f_{i+1}$ contribution to  \eq{bid15} and \eq{bid17}, is
negative on
$$
 \Omega_{\hat a_i, b_0}\,, \ \mbox{where}\ \hat a_i=\min(a_1,\ldots,a_{i+1})
 \,.
$$
If we let $M_k(u,v)$ be any function satisfying
\bean
    M_k(u,v)
    &\ge &
    \|\mathring{\newg}  (u,v)\|^2_{H^k(Y)} +\|\mathring{\tilde{\newg}}  (u,v)\|^2_{H^{k-1}(Y)} + \|\zA \|^2_{H^{k}(Y)}\nonumber
\\
    &&
     + \|\mathring{\tilde{A}} (u,v)\|^2_{H^{k-1}(Y)} +\|\mathring{\gamma}(u,v) \|^2_{H^{k}(Y)} +
        \|\zGamma(u,v) \|^2_{H^{k}(Y)}
        \,,
        \phantom{xxxxx}
\eeal{5X13.21}
we conclude that:

\begin{Lemma}
\label{Lenergy} Let   $0\le b\le b_0<\infty$, suppose that $Y$ is
compact and that \eq{bid9m} holds. Then for every $0<\epsilon\le 1$
 there exist constants
$\lambda_0(k,C_0,\Cdiv,Y,\epsilon)$ and $C_4(a_0,b_0,Y,k,C_0,\Cdiv)$
such that for all $\lambda \ge \lambda_0$ and $0\le a\le \hat a_i\le
a_0$
 we have
\bean E_{k,\lambda}[\varphi_{i+1},a] +
\mcE_{k,\lambda}[\psi_{i+1},b] \le  C_4\Big\{
E_{k,\lambda}[\overline {\varphi}_{i+1}] +
\mcE_{k,\lambda}[\overline {\psi}_{i+1}]\phantom{xxxxxx}
\\+ \int_0^a \int_0^b e^{-\lambda (u+v)}\Big( M_k(u,v)
+ \epsilon\|f_{i}(u,v)\|^2_{H^k(Y)} \Big)du \, dv\Big\}\,.
\eeal{bid5a}
\myqed
\end{Lemma}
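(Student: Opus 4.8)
The plan is to assemble the estimates \eqref{bid11}, \eqref{bid13}, \eqref{bid15} and \eqref{bid17} prepared in Section~\ref{sub:Bounds-iterati-scheme} and then to fix $\lambda$ large enough that every volume contribution involving $f_{i+1}$ is dominated by the manifestly negative term coming from $I_r$. Concretely, I would start from the basic energy identity in the form \eqref{bid5}, applied to $f=f_{i+1}$, with $L$, $\nabla$, $\newg$ replaced by $L_i=A^\mu(f_i,\cdot)\nabla(i)_\mu$ and $\newg_i=\newg(f_i,\cdot)$ as in \eqref{itpro}--\eqref{itpro1}, and with the weights $w_r=e^{-\lambda(u+v)}$, $\lambda\ge0$. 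By the conventions \eqref{bid7}--\eqref{bid8} the left-hand side of \eqref{bid5} is $E_{k,\lambda}[\varphi_{i+1},a]+\mcE_{k,\lambda}[\psi_{i+1},b]$, and --- since $f_{i+1}$ carries the initial data $\overline{f}_{i+1}$ on $\mcN$ --- its first two right-hand side terms equal $E_{k,\lambda}[\overline{\varphi}_{i+1}]+\mcE_{k,\lambda}[\overline{\psi}_{i+1}]$ up to a multiplicative constant $C_1(K)$ controlled by $\mcK$ and $C_0$. So everything reduces to estimating $\int_{\Omega_{a,b}}\sum_r\bigl(I_r+w_r(II_r+III_r)\bigr)$. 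Here the crucial preliminary remark is that, since $a\le\hat a_i=\min(a_1,\dots,a_{i+1})$ does not exceed either $a_i$ or $a_{i+1}$, hypothesis \eqref{bid9m} applies on $\Omega_{a,b_0}$ for both the index $i$ and the index $i+1$; thus on that slab $\|(\nabla_\mu A^\mu)_i\|_{L^\infty(Y)}\le\Cdiv$, while $\sup_{u,v}\|f_i\|_{W^{1,\infty}(Y)}$ and $\sup_{u,v}\|f_{i+1}\|_{W^{1,\infty}(Y)}$ are $\le 4C_0$.

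With this in hand I would go through the three families of terms exactly as in Section~\ref{sub:Bounds-iterati-scheme}. The $I_r$-contribution is $\le0$: since $\partial_u w_r=\partial_v w_r=-\lambda w_r$ (and $w_r$ is independent of $Y$) while \eqref{strict} holds, it is bounded above by \eqref{bid11}, carrying a strictly negative coefficient $-\lambda c(K)$ in front of $\int_0^a\int_0^b e^{-\lambda(u+v)}\|f_{i+1}\|^2_{H^k(Y)}\,du\,dv$. The $II_r$-contribution is \eqref{bid13}, which the preliminary remark bounds by $\Cdiv\int_0^a\int_0^b e^{-\lambda(u+v)}\|f_{i+1}\|^2_{H^k(Y)}$. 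For the $III_r$-contribution I would write $L_iP_rf_{i+1}=P_r\newg_i+[L_i,P_r]f_{i+1}$. The piece $III_1$ coming from $P_r\newg_i$ is controlled by the Moser composition inequality \eqref{Moser3} applied to $\newg_i=\newg(f_i,\cdot)$, followed by $ab\le a^2/(4\epsilon)+\epsilon b^2$ with $\|f_{i+1}\|_{H^k(Y)}$ as one of the factors and all of the $1/\epsilon$ weight placed on the $f_{i+1}$-term, which gives \eqref{bid15} with the coefficient of $\|\mathring{\newg}\|^2_{H^k(Y)}$ remaining bounded as $\epsilon\to0$. The piece $III_2$, the commutator, is the one that takes some work: since the $P_r$ are $u$- and $v$-independent, $[A^u\partial_u,P_r]$ collapses to $[A^u,P_r]\partial_u$, and, calculating as in \eqref{alp0est1}, one uses the equation \eqref{itpro} satisfied by $f_{i+1}$ to replace $\partial_u\varphi_{i+1}$ and $\partial_v\psi_{i+1}$ by first order operators tangent to $Y$ with coefficients built from $f_i$; the Moser commutator inequality \eqref{Moser2}, together with \eqref{Moser1} and \eqref{Moser3}, then yields \eqref{bid17}, with a constant depending only on $k$, $Y$, $C_0$ by the preliminary remark.

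Collecting \eqref{bid11}, \eqref{bid13}, \eqref{bid15} and \eqref{bid17}, the total factor multiplying $\int_0^a\int_0^b e^{-\lambda(u+v)}\|f_{i+1}\|^2_{H^k(Y)}$ is bounded above by $C(k,C_0,\Cdiv,Y,\epsilon)-\lambda c(K)$. I would then pick $\lambda_0=\lambda_0(k,C_0,\Cdiv,Y,\epsilon)$ so that this quantity is $\le0$ for $\lambda\ge\lambda_0$ on $\Omega_{\hat a_i,b_0}$; that term --- and with it every other volume contribution involving $f_{i+1}$ --- is simply discarded. What then survives on the right is $E_{k,\lambda}[\overline{\varphi}_{i+1}]+\mcE_{k,\lambda}[\overline{\psi}_{i+1}]$, the term $\epsilon\|f_i\|^2_{H^k(Y)}$, and the combination $\|\mathring{\newg}\|^2_{H^k(Y)}+\|\mathring{\tilde{\newg}}\|^2_{H^{k-1}(Y)}+\|\zA\|^2_{H^k(Y)}+\|\mathring{\tilde{A}}\|^2_{H^{k-1}(Y)}+\|\mathring{\gamma}\|^2_{H^k(Y)}+\|\zGamma\|^2_{H^k(Y)}$, which is $\le M_k(u,v)$ by \eqref{5X13.21}. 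Factoring out one constant $C_4$ --- which, owing to the way the $\epsilon$-weight was distributed in the step leading to \eqref{bid15}, depends only on $a_0,b_0,Y,k,C_0,\Cdiv$ and not on $\epsilon$ --- gives precisely \eqref{bid5a}.

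I expect the only genuinely delicate step to be the handling of $III_2$: one must be sure that trading the transversal derivatives $\partial_u\varphi_{i+1}$, $\partial_v\psi_{i+1}$ for tangential ones via \eqref{itpro} introduces no hidden $\lambda$-dependence into the constants, and that the $W^{1,\infty}(Y)$-control required by \eqref{Moser2} is simultaneously available for $f_i$ \emph{and} $f_{i+1}$ on one and the same slab --- which is exactly why the estimate is asserted on $[0,\hat a_i]\times[0,b_0]$, with $\hat a_i=\min(a_1,\dots,a_{i+1})$, rather than on the a priori larger $[0,a_i]\times[0,b_0]$. Everything else amounts to bookkeeping of inequalities already in place.
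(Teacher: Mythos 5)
Your proof is correct and follows essentially the same route the paper takes: it is the assembly of \eqref{bid11}, \eqref{bid13}, \eqref{bid15} and \eqref{bid17} into \eqref{bid5}, with $\lambda$ chosen large enough to kill the $f_{i+1}$ volume terms (which is exactly what the paper does in the paragraph preceding the lemma). You also correctly identify the two points that are easy to miss: that $\hat a_i=\min(a_1,\ldots,a_{i+1})$ is what makes \eqref{bid9m} available for both indices $i$ and $i+1$ on a common slab, and that the $\epsilon$-dependence is shunted entirely into $\lambda_0$ via the Cauchy–Schwarz weight $c_1(\epsilon),c_2(\epsilon)$ on the $\|f_{i+1}\|^2$ terms, leaving $C_4$ independent of $\epsilon$.
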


We need, next, to get rid of the $i$-dependent terms in the
integrals at the  right-hand-side of \eq{bid5a}, this can be done as
follows: Set
\bea
\hat C(a,b):=
  C_4\Big\{ \sup_{i\in\N} (E_{k,\lambda}[\overline {\varphi}_{i+1}]
+ \mcE_{k,\lambda}[\overline {\psi}_{i+1}]) + \int_0^a \int_0^b
e^{-\lambda (u+v)} M_k(u,v) du \, dv
 \Big\}
 \!\,;
  \phantom{xxxxx}
&& \eeal{hatCdef}
note that this depends only upon the initial data and the structure
of the equations. Suppose that
\bel{fiint}
 \int_0^{a} \int_0^{b} e^{-\lambda
 (u+v)}\|f_{i}(u,v)\|^2_{H^k(Y)} du \, dv \le 2\hat C(a_0+b_0)
 \,.
\ee
We then have, using \eq{bid5a}, \beaa \lefteqn{ \int_0^a \int_0^b
e^{-\lambda (u+v)}\|\varphi_{i+1}(u,v)\|^2_{H^k(Y)} du \, dv   =
\int_0^a E_{k,\lambda}[\varphi_{i+1},u]du }&&
\\&&\le  \int_0^a (\hat C +2 \epsilon C_4 \hat C(a_0+b_0) ) du \le(\hat C + 2 \epsilon C_4 \hat C(a_0+b_0)  )a_0 \le 2
 \hat C a_0
 \,,
\eeaa
if $\epsilon$ is chosen small enough. Similarly, \beaa \lefteqn{
\int_0^a \int_0^be^{-\lambda (u+v)}\|\psi_{i+1}(u,v)\|^2_{H^k(Y)} du
\, dv }&&
\\&&  = \int_0^b \mcE_{k,\lambda}[\psi_{i+1},v]dv \le (\hat C
 +2 \epsilon C_4 \hat C(a_0+b_0) )b_0 \le 2 \hat C b_0\,.\eeaa Adding one
obtains \eq{fiint} with $i$ replaced by $i+1$. Decreasing $\epsilon$
if necessary we obtain:
\begin{Lemma}
\label{Lnext} Let $\hat C$ be defined by \eq{hatCdef}. Under the
hypotheses of Lemma~\ref{Lenergy}, one can choose
$\epsilon_0(a_0,b_0,Y,k,C_0,\Cdiv)$ so that \eq{fiint} is preserved
under the iteration for all
 $0\le b\le
b_0$, provided that  $0\le a\leq \hat a_i\le a_0$, with the
right-hand-side of \eq{bid5a} being less than $2\hat C(a_0,b_0)$
 for all $\lambda\ge \lambda_0(k,C_0,\Cdiv,Y,\epsilon_0)$.
\myqed
\end{Lemma}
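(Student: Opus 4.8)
The goal is to show that \eq{fiint} is stable under one step of the iteration, i.e.\ that if $f_i$ satisfies \eq{fiint} on $[0,a]\times[0,b]$ then so does $f_{i+1}$; combined with a choice of the initial extension $f_0$ making \eq{fiint} hold for $i=0$ (not part of the present statement), this yields the bound for all $i$ by induction. So I would fix $i$, assume \eq{fiint} for $f_i$, and fix $\lambda\ge\lambda_0(k,C_0,\Cdiv,Y,\epsilon)$ so that Lemma~\ref{Lenergy} is available; its hypothesis \eq{bid9m} holds on $[0,a_i]\times[0,b_0]\supseteq[0,a]\times[0,b]$ by the definition of $a_i$, and $a\le\hat a_i$ by assumption.

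The first step is to feed \eq{fiint} into the right-hand side of \eq{bid5a}. The terms there not involving $f_i$ --- namely $C_4(E_{k,\lambda}[\overline\varphi_{i+1}]+\mcE_{k,\lambda}[\overline\psi_{i+1}])$ and $C_4\int_0^a\int_0^b e^{-\lambda(u+v)}M_k$ --- are, by the very definition \eq{hatCdef}, at most $\hat C(a,b)\le\hat C(a_0,b_0)$ (using that $M_k\ge0$ makes $\hat C$ monotone in its arguments, and that the supremum over the whole sequence is already built into \eq{hatCdef}). The one remaining term, $C_4\epsilon\int_0^a\int_0^b e^{-\lambda(u+v)}\|f_i\|^2_{H^k(Y)}$, is bounded by $2C_4\epsilon\,\hat C(a_0+b_0)$ by the inductive hypothesis \eq{fiint}. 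Thus, if $\epsilon\le\epsilon_0$ where $\epsilon_0$ is chosen so small that $2C_4\epsilon_0\,\hat C(a_0+b_0)\le\hat C(a_0,b_0)$ --- a condition not involving $\lambda$, since $C_4$ depends only on $a_0,b_0,Y,k,C_0,\Cdiv$ --- then the entire right-hand side of \eq{bid5a} is $\le 2\hat C(a_0,b_0)$. This is the second of the two assertions, and I stress that this bound is uniform in $\lambda$.

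The second step is to integrate the energy inequality once more. Write $\hat C:=\hat C(a_0,b_0)$. Applying \eq{bid5a} with $a$ replaced by an arbitrary $u\in[0,a]$ and dropping the nonnegative term $\mcE_{k,\lambda}[\psi_{i+1},b]$ on its left-hand side, the previous step gives $E_{k,\lambda}[\varphi_{i+1},u]\le 2\hat C$ for all such $u$. By the identity \eq{bid7}, however, $\int_0^a E_{k,\lambda}[\varphi_{i+1},u]\,du=\int_0^a\int_0^b e^{-\lambda(u+v)}\|\varphi_{i+1}(u,v)\|^2_{H^k(Y)}\,du\,dv$, so integration in $u$ yields $\int_0^a\int_0^b e^{-\lambda(u+v)}\|\varphi_{i+1}\|^2_{H^k(Y)}\le 2\hat C\,a_0$. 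The argument symmetric in $v$, using \eq{bid8} and discarding instead the flux $E_{k,\lambda}[\varphi_{i+1},a]$, gives $\int_0^a\int_0^b e^{-\lambda(u+v)}\|\psi_{i+1}\|^2_{H^k(Y)}\le 2\hat C\,b_0$. Since the connections $\nabla$ and $\znabla$ preserve the $(\varphi,\psi)$-splitting, $\|f\|^2_{H^k(Y)}=\|\varphi\|^2_{H^k(Y)}+\|\psi\|^2_{H^k(Y)}$, and adding the two bounds gives $\int_0^a\int_0^b e^{-\lambda(u+v)}\|f_{i+1}\|^2_{H^k(Y)}\le 2\hat C\,(a_0+b_0)$, which is \eq{fiint} for $i+1$.

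I expect the only real obstacle to be the bookkeeping of the quantifiers. Lemma~\ref{Lenergy} forces $\lambda\ge\lambda_0(\epsilon)$ once $\epsilon$ has been fixed, whereas the absorption above needs $\epsilon\le\epsilon_0$ with $\epsilon_0$ independent of $\lambda$; these are compatible precisely because enlarging $\lambda$ is free --- every $\lambda$-dependent quantity above only decreases --- and the decisive inequality $2C_4\epsilon_0\hat C(a_0+b_0)\le\hat C(a_0,b_0)$ contains no $\lambda$. Two structural points make the scheme close: \eq{bid5a} is valid for every slice $u\le a$ with the same constants, which is what licenses the final $du$-integration; and it is exactly this extra integration over the bounded rectangle --- producing the innocuous factors $a_0$ and $b_0$ --- that lets the smallness of $\epsilon$ close the bound, even though \eq{bid5a} only controls $f_{i+1}$ in terms of the previous iterate $f_i$ rather than in terms of itself.
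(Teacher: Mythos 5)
Your proof is correct and follows essentially the same route as the paper's: plug the inductive bound \eq{fiint} on $f_i$ into the right-hand side of \eq{bid5a} to bound $E_{k,\lambda}[\varphi_{i+1},u]$ uniformly in $u$ (and $\mcE_{k,\lambda}[\psi_{i+1},v]$ uniformly in $v$), integrate once more in the respective variable, add, and absorb the $\epsilon$-term by making $\epsilon_0$ small with a condition that is free of $\lambda$. You make the quantifier bookkeeping more explicit than the paper does (in particular the observation that $\epsilon_0$ is chosen first and $\lambda_0$ afterwards, so the two choices do not chase each other), but the argument is the same.
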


To continue, let us write $$ A^\mu =\left(%
\begin{array}{cc}
  A^\mu_{\varphi \varphi} & A^\mu_{\varphi \psi} \\
  A^\mu_{\psi \varphi} & A^\mu_{\psi \psi} \\
\end{array}%
\right)
\,. $$
Since, by hypothesis, the only non-vanishing component of $A^u$ is
$A^u_{\varphi\varphi}$, on any level set of $u$ the field
$\psi_{i+1}$ is a solution of the symmetric hyperbolic system
\bel{bid21} (A^\mu_{\psi\psi}\nabla_\mu)_i\psi_{i+1}\equiv
A^\mu_{\psi\psi}(f_{i},\cdot)\nabla_\mu(i)\psi_{i+1}= (\hat
\newg_\psi)_{i}\,,\ee
where $$(\hat \newg_\psi)_i \equiv
(\newg_\psi)_i-(A^\mu_{\psi\varphi}\nabla_\mu)_i\varphi_{i+1}:=
\newg_\psi(f_i,\cdot) -
A^\mu_{\psi\varphi}(f_{i},\cdot)\nabla_\mu(i)\varphi_{i+1}\,. $$

Set
 \bel{hCdiv}\hCdiv= \sup |\nabla _\mu A^\mu_{\psi\psi}|
 \le \Cdiv
 \,,
\ee
where the sup is taken as in \eq{Cdiv}. A calculation similar to the
one leading to the proof of Lemma~\ref{Lnext} shows that for any
$0<\delta\le 1$ there exists
$\lambda_1(k,C_0,\hCdiv,Y,\delta)<\infty$ such that for
$\lambda\ge\lambda_1$ we obtain  (recall that
$\psi_{i+1}(u,0)=\overline {\psi}_{i+1}(u)$)
%
\bean \lefteqn{ \|\psi_{i+1}(u,v)\|^2_{H^{k-1}(Y)}\le
C_5(Y,k,C_0,\hCdiv)e^{\lambda v}\Big\{ \ykmns{\overline
{\psi}_{i+1}(u)}+} &&\\ &&
 \delta\int_0^v e^{-\lambda s} \Big(\|f_{i}(u,s)\|^2_{H^{k-1}(Y)}+M_k(u,s)
+ \underbrace{\ykmns{(\hat
 \newg_\psi)_i(u,s)}}_{*}\Big)ds\Big\}
 \,.
 \phantom{xxxxx}
\eeal{bid23}
 The contribution of $*$ can be estimated as follows:
\beaa
\int_0^v * \,dv  & = & \int_0^v e^{-\lambda s}\ykmns{( (\newg_\psi)_i- (A^B_{\psi\varphi}\nabla_B)_i
\varphi_{i+1} )(u,s)}ds
\\
&\le & 2 \int_0^v e^{-\lambda s}\left(\ykmns{
(\newg_\psi)_i(u,s)}+\ykmns{(A^B_{\psi\varphi}\nabla_B)_i
\varphi_{i+1} )(u,s)}\right)ds \nonumber
 \\ &\le &
C_6(Y,k,C_0)\int_0^v e^{-\lambda s} \Big(\ykmns{\mathring
\newg_\psi(u,s)}+\|\zA\|^2_{H^{k-1}(Y)}+ \nonumber
\\
&& \phantom{xxxxx} + \|\zGamma\|^2_{H^{k-1}(Y)}
+\|f_{i}(u,s)\|^2_{H^{k-1}(Y)}+\ykns{\varphi_{i+1}(u,s)}\Big)ds
\nonumber
\\ &=& C_6(Y,k,C_0)\Big\{\int_0^v e^{-\lambda s}\Big(
\ykmns{\mathring \newg_\psi(u,s)}+\|\zA\|^2_{H^{k-1}(Y)}+ \nonumber
\\
&& \phantom{xxxxx} +
\|\zGamma\|^2_{H^{k-1}(Y)}+\|f_{i}(u,s)\|^2_{H^{k-1}(Y)}\Big)ds+
e^{\lambda u} E_{k,\lambda}[\varphi_{i+1},u]\Big\}\,,\eeaa 
with $\mathring \newg_\psi(\cdot)=\newg_\psi(f=0,\cdot)$.
 It follows that,
for $0\le u\le a\le \hat{a}_i\le a_0$, \bea \label{dzi0} \nonumber
\lefteqn{ e^{-\lambda v}\|\psi_{i+1}(u,v)\|^2_{H^{k-1}(Y)}\le
C_7(Y,k,C_0,\hCdiv)\Big\{ \ykmns{\overline {\psi}_{i+1}(u)}}&&\\
\nonumber
 && \phantom{xxxxxxxxx}+
\int_0^v e^{-\lambda s} \Big(M_k(u,s)
+\delta
\|f_{i}(u,s)\|^2_{H^{k-1}(Y)}\Big)ds + 
e^{\lambda u}E_{k,\lambda}[\varphi_{i+1},u]\Big\}\,.\qquad\qquad
\eea
 By
Lemma~\ref{Lnext} the $\varphi_i$ part of the $f_i$ contribution can
be estimated by $e^{\lambda u} E_{k-1,\lambda}[\varphi_i,u]\le
2e^{\lambda u}\hat C(u,v)\le 2e^{\lambda u}\hat C(u,b)$, so that
\bean\lefteqn{ e^{-\lambda v}\|\psi_{i+1}(u,v)\|^2_{H^{k-1}(Y)}\le
C_7(Y,k,C_0,\hCdiv)\Big\{ \ykmns{\overline {\psi}_{i+1}(u)}+2e^{\lambda u}\hat C(u,b)}&&\\
&& + \int_0^v e^{-\lambda s} \Big(M_k(u,s)
+\delta \|\psi_{i}(u,s)\|^2_{H^{k-1}(Y)}\Big)ds
+
e^{\lambda u}E_{k,\lambda}[\varphi_{i+1},u]\Big\}\,. \nonumber \\
&& \eeal{dzi1}
%
Integrating in $v$ one obtains, for $0\le b\le b_0$,
\bean\lefteqn{ \int_0^b e^{-\lambda
v}\|\psi_{i+1}(u,v)\|^2_{H^{k-1}(Y)}dv \le \tilde C_\psi(u,b) +  }
 &&
\\
 && + C_7(Y,k,C_0,\hCdiv)\delta\int_0^b\int_0^v
  e^{-\lambda s} \|\psi_{i}(u,s)\|^2_{H^{k-1}(Y)}ds\,dv\,,
 \phantom{xxx}
\eeal{dzi1.5}
where
\bean
\tCpsi (a,b) &=& C_7(Y,k,C_0,\hCdiv)\sup_{i\in\N, u\in [0,a]}\Big\{
b\ykmns{\overline {\psi}_{i+1}(u)}
  \\\nonumber
  && + \int_0^b\int_0^v e^{-\lambda s}
    M_k(u,s)\, ds \, dv
+2be^{\lambda u}\hat C(u,b)+ b 
e^{\lambda u} E_{k,\lambda}[\varphi_{i+1},u]\Big\}\,.
\eeal{Ctildedef}

Suppose that there exists a  constant $\hCdiv$  such that
\bea
 \label{intpsic1}
  &\sup_i|(\nabla _\mu A^\mu_{\psi\psi})_i|\le \hCdiv\,,
   &
\eea
(note that we necessarily have $\hCdiv\le \Cdiv$), and that \bea
 \displaystyle & \forall \ 0\le v\le b\quad \int_0^v e^{-\lambda s}
\|\psi_{i}(u,s)\|^2_{H^{k-1}(Y)}ds\le 2 \tCpsi (u,b)\,.&
 \label{intpsic}
\eea
\Eq{dzi1.5} shows that
\bean
 \lefteqn{
 \int_0^b e^{-\lambda v}\|\psi_{i+1}(u,v)\|^2_{H^{k-1}(Y)}dv
 }
 &&
\\
 \nonumber
  &\le&
    \tCpsi (u,b)
    + C_7\delta \int_0^b\int_0^v e^{-\lambda s}
    \|\psi_{i}(u,s)\|^2_{H^{k-1}(Y)}ds\,dv
\\
 &\le&
    \tCpsi (u,b) + 2\delta b_0 C_7\tCpsi (u,b) \le 2 \tCpsi (u,b)
 \,,
\eeal{dzi1.5a}
if $\delta=\delta (b_0, \tCpsi(a_0,b_0) ,C_7)$ is chosen small
enough. It follows that \eq{intpsic} is preserved under the
iteration scheme if \eq{bid9m} and \eq{intpsic1} hold. With this
choice of $\delta$, \eq{dzi1} gives now
\bean\lefteqn{ e^{-\lambda v}\|\psi_{i+1}(u,v)\|^2_{H^{k-1}(Y)}\le
C_7(Y,k,C_0,\Cdiv)\Big\{ \ykmns{\overline {\psi}_{i+1}(u)}}&&\\
&& +2e^{\lambda u}\hat C(u,b)+ \int_0^v e^{-\lambda s} M_k(u,s)
ds
+2\delta \tCpsi (u,b)
+
e^{\lambda
u}E_{k,\lambda}[\varphi_{i+1},u]\Big\}\,.\phantom{xxxxxxx}
  \eeal{dzi11.6}
By an essentially identical argument using the symmetry of the
equations under the interchange of $u$ and $v$, but still working
with $0\le u \le \hat a_i$, $0\le v\le b_0$, if we let $\hhCdiv$ be
a constant such that
\bea
 \label{intphic1}
 &
  \sup_i|(\nabla _\mu A^\mu_{\varphi\varphi})_i|\le
  \hhCdiv
  &
\eea
(note that $\hhCdiv\le \Cdiv$), then the condition
\bea
 \displaystyle & \forall \ 0\le u\le
 \hat a_i \quad \int_0^u e^{-\lambda s}
 \|\varphi_{i}(s,v)\|^2_{H^{k-1}(Y)}ds\le 2 \tCvarphi
 (a,v)\,,&
 \label{phipresint}
\eea
where
\bean
\tCvarphi (a,b) &=& C_7(Y,k,C_0,\hhCdiv)\sup_{i\in\N, v\in
[0,b]}\Big\{ a\ykmns{\overline {\varphi}_{i+1}(v)}
  \\\nonumber && + \int_0^a\int_0^u e^{-\lambda s}
M_k(s,v)ds\,du
+2ae^{\lambda v}\hat C(a,v)+ a 
e^{\lambda v} \mcE_{k,\lambda}[\psi_{i+1},v]\Big\}\,,
\eeal{Ctildedef2}
 is preserved under iteration, and we are led to:

\begin{Lemma}
\label{Lenergy2}  Under the hypotheses of Lemma~\ref{Lenergy},
the inequalities \eq{intpsic} and \eq{phipresint} are preserved
under iteration, and there exist constants
$$
 \lambda_2=\lambda_2(k,C_0,\Cdiv,Y,\tCpsi(a_0,b_0),\tCvarphi(a_0,b_0))
 \,,
$$
$C_7=C_7(Y,k,C_0,\Cdiv)$ and
$$
 C_8=C_{8}(Y,k,C_0,\Cdiv,a_0,b_0,\tCpsi(a_0,b_0),\tCvarphi(a_0,b_0))
$$
such that for all $\lambda \ge \lambda_2$ we have, for $(u,v)\in
[0,\hat a_i]\times [0,b_0]$,
\bean
 \lefteqn{ \|f_{i+1}(u,v)\|^2_{H^{k-1}(Y)}\le  C_{7}e^{\lambda
 (a_0+b_0)}\Big\{
 \ykmns{\overline {\varphi}_{i+1}(v)}}\\
 &&+\ykmns{\overline {\psi}_{i+1}(u)}
 +\int_0^{a_0} e^{-\lambda s}
 M_k(s,v)
ds
+
 \int_0^{b_0} e^{-\lambda s}
 M_k(u,s)
ds \nonumber
\\
 &&
+C_8\Big(\hat C(a_0,b_0)+ \tCpsi(a_0,b_0)+
\tCvarphi(a_0,b_0)\Big)\Big\}\,.
\eeal{bid31}
\myqed\end{Lemma}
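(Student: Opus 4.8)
The plan is to obtain Lemma~\ref{Lenergy2} as a consolidation of estimates already in hand: the one-sided transport bound \eq{dzi11.6} for $\psi_{i+1}$, its mirror image under the $u\leftrightarrow v$ symmetry of the hypotheses for $\varphi_{i+1}$, and the energy control of Lemmas~\ref{Lenergy} and~\ref{Lnext}. No new analytic input is needed; the work lies in choosing the parameters uniformly in the iteration index and in keeping track of the exponential weights.

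First I would fix $\delta$. Preservation of \eq{intpsic} along the iteration is precisely what the chain \eq{dzi1.5}--\eq{dzi1.5a} gives, once $\delta=\delta(b_0,\tCpsi(a_0,b_0),C_7)$ is small enough that $2\delta b_0 C_7\le 1$, under the standing conditions \eq{bid9m} and \eq{intpsic1}; running the same computation with $u,v$ and $\varphi,\psi$ interchanged, and shrinking $\delta$ once more if necessary, yields preservation of \eq{phipresint} under \eq{bid9m} and \eq{intphic1}. With $\delta$ thus fixed I would let $\lambda_2$ be the largest of $\lambda_0$ from Lemma~\ref{Lenergy}, the threshold $\lambda_1$ entering \eq{dzi11.6}, and its $\varphi$-counterpart: each depends only on $k$, $C_0$, $\Cdiv$, $Y$ and, through $\delta$, on $\tCpsi(a_0,b_0)$ and $\tCvarphi(a_0,b_0)$, and in particular not on $i$. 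Finiteness (hence $i$-independence) of the auxiliary constants $C_0$, $\hat C$, $\tCpsi$, $\tCvarphi$ is guaranteed by the standing assumption that the relevant suprema over $i$ converge, which follows from the convergence of the approximating initial data $\overline f_i$.

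Next I would assemble the pointwise estimate. For $\lambda\ge\lambda_2$ and $(u,v)\in[0,\hat a_i]\times[0,b_0]$, multiply \eq{dzi11.6} by $e^{\lambda v}$ and the corresponding $\varphi$-inequality by $e^{\lambda u}$, then bound crudely: $e^{\lambda v}\le e^{\lambda b_0}$, $e^{\lambda u}\le e^{\lambda a_0}$, $e^{\lambda(u+v)}\le e^{\lambda(a_0+b_0)}$; by monotonicity $\hat C(u,b)\le\hat C(a_0,b_0)$, $\tCpsi(u,b)\le\tCpsi(a_0,b_0)$, $\tCvarphi(a,v)\le\tCvarphi(a_0,b_0)$; the partial $M_k$-integrals over $[0,v]$, $[0,u]$ are dominated by the full ones over $[0,b_0]$, $[0,a_0]$; and, by Lemma~\ref{Lnext} (whose hypothesis \eq{fiint} is preserved along the iteration), $E_{k,\lambda}[\varphi_{i+1},u]\le 2\hat C(a_0,b_0)$ and $\mcE_{k,\lambda}[\psi_{i+1},v]\le 2\hat C(a_0,b_0)$. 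After these substitutions each inequality reads $C_7\, e^{\lambda(a_0+b_0)}$ times a bracket consisting of the relevant boundary norm ($\ykmns{\overline {\psi}_{i+1}(u)}$, resp. $\ykmns{\overline {\varphi}_{i+1}(v)}$), the relevant $M_k$-integral, and a fixed multiple of $\hat C(a_0,b_0)+\tCpsi(a_0,b_0)+\tCvarphi(a_0,b_0)$. Adding the two, and using $\|f_{i+1}\|^2_{H^{k-1}(Y)}=\|\varphi_{i+1}\|^2_{H^{k-1}(Y)}+\|\psi_{i+1}\|^2_{H^{k-1}(Y)}$ since $\znabla$ preserves the $(\varphi,\psi)$-splitting, produces \eq{bid31}, with $C_8$ absorbing the numerical factors in front of $\hat C$, $\tCpsi$, $\tCvarphi$.

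I expect the only real obstacle to be organizational rather than analytic: making sure that $\delta$, $\lambda_2$, $C_7$, $C_8$ are chosen once and for all, independently of $i$ and of $(u,v)$, and untangling the apparent circularity whereby \eq{dzi11.6} presupposes \eq{intpsic} at step $i$ while the lemma also asserts that \eq{intpsic} propagates. This is handled by induction on $i$: assuming \eq{intpsic} and \eq{phipresint} at step $i$ (the base case $i=0$ arranged by the choice of the smooth extension $f_0$, helped by the factor $2$ of slack built into those inequalities), one obtains \eq{dzi11.6} and its $\varphi$-counterpart, hence simultaneously the propagated inequalities at step $i+1$ and the pointwise bound \eq{bid31} at step $i$. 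Everything else --- the monotonicity of the weights and of $\hat C$, $\tCpsi$, $\tCvarphi$ in their arguments --- is already in place in the preceding pages.
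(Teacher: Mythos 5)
Your proposal is correct and matches the paper's own (largely implicit) argument: the lemma is, in the paper, the terminus of the chain \eqref{dzi1.5}--\eqref{dzi1.5a}--\eqref{dzi11.6} together with the $u\leftrightarrow v$ mirror, and your write-up simply makes explicit the parameter choices ($\delta$, then $\lambda_2$), the appeal to Lemma~\ref{Lnext} for $E_{k,\lambda}[\varphi_{i+1},u]\le 2\hat C(a_0,b_0)$, the monotonicity of $\hat C$, $\tCpsi$, $\tCvarphi$, the crude exponential bounds, and the addition of the two one-sided transport inequalities. The induction-on-$i$ remark correctly untangles the apparent circularity, so no gap.
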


From now on, we will use the inequalities \eq{bid5a},
\eq{bid23} and \eq{bid31}
$$
 \mbox{with  $\lambda$ chosen to be the largest of $\lambda_0, \;\lambda_1$ and
 $\lambda_2$}
$$
regardless of the value of the parameter $\lambda$ that might occur in the equation in which one of these inequalities is being used.

In what follows the letter $C$ will denote a constant which depends
perhaps upon $C_0$, $\hat C$, $\tCpsi$, $\tCvarphi$, $Y$, $a_0$,
$b_0$ and $k$, and which may vary from line to line; similarly the
numbered constants $C_{{n}}$ that follow may depend upon all those
quantities, but not on $i$. We wish to show that we can choose
$0<a_* \le a_0$ small enough so that $ \hat a_i\ge a_*$, hence  for
$0\le u \le a_*$ the inequalities \eq{bid9m}, \eq{intpsic1} and
\eq{intphic1} hold. Let
\bel{10X13.1}
 \mbox{$k_1 $ be the smallest integer such that $k_1>(n-1)/2+3$}
  \,.
\ee
For $k\ge k_1$, from \eq{bid31} with $k=k_1$ we obtain, by Sobolev's
embedding, for $0\le u\le \hat a_i$, $0\le v\le b_0$,
\bel{st1}
  \| f_{i+1}(u,v)\|_{C^{2}(Y)}\le C
 \,.
\ee
It follows from the equations satisfied by $f$ that
\begin{eqnarray}
 &
  \| \partial_u
    \varphi_{i+1}(u,v)\|_{C^{1}(Y)}\le C
    \,,
 &
 \label{st2}
\\
 &
 \|\partial_v
    \psi_{i+1}(u,v)\|_{C^{1}(Y)}\le C
    \,.
    &
     \label{st3}
\end{eqnarray}
Integrating \eq{st2} in $u$ from $(0,v)$ to $(u,v)$ we find
\bel{st4}
 \| \varphi_{i+1}(u,v)\|_{C^{ 1}(Y)}\le C_0+Cu\le
 C_0+Ca \le 2C_0
\ee
for $a$ small enough, namely
\bel{14IX13.1}
 0\le a \le \min(
 \hat a_i,C_0 C^{-1})
 \,.
\ee
(Note that the bound is independent of $k$.) Further,
\bel{14IX13.2}
 \mbox{$\| \varphi_{i+1}(u,v)\|_{C^{ 1}(Y)}$ cannot exceed $2C_0$ for $0\le u \le C_0 C^{-1}$.}
\ee

 Next, we $u$-differentiate the
equation satisfied by $\psi_{i+1}$, \bean
  (A^\mu_{\psi\psi}\nabla_\mu)_i {\partial \psi_{i+1} \over\partial
  u} &=& -\partial_u\Big((A^\mu_{\psi\psi}\nabla_\mu)_i \Big){ \psi_{i+1}}
  - \partial_u\Big((A^\mu_{\psi\varphi}\nabla_\mu)_i \varphi_{i+1}
  -(\newg_\psi)_i\Big)
  \\ &=:& (B_\psi)_i {\partial \psi_{i} \over\partial
  u} +(b_\psi)_i\,, \label{st5}
\eea
where, symbolically,
$$
 (B_\psi)_i 
 := -\partial_\psi
 \Big((A^\mu_{\psi\psi}\nabla_\mu)_i \Big)
 { \psi_{i+1}}
  - \partial_\psi\Big((A^\mu_{\psi\varphi}\nabla_\mu)_i \varphi_{i+1}
  -(\newg_\psi)_i\Big)
  \,.
$$
The system \eq{st5} is again a symmetric hyperbolic system of first
order, linear in $\partial_u \psi_{i}$ and $\partial_u \psi_{i+1}$,
to which we apply \eq{volest2} with $\mcU=\{u\}\times[0,v]$, with
$0\le u \le \hat a_i$. Note that, from the definition of
 $\hat a_i$ (see \eq{bid9m}) the relevant constant $\hat C_1$ there will be bounded from above by a finite constant, say, $\check C_1\ge 1$ which is $i$, $\lambda$, and $f_i$--independent.
 Thus, by \eq{volest2} with $k$ there replaced by $m$,
 \bea
 \lefteqn{
e^{-\lambda(u+v)}\|\partial_u\psi_{i+1}(u,v)\|^2_{H^{m}(Y)}\leq \check{C}_1\Big\{  e^{-\lambda u}\|\partial_u\psi_{i+1}(u,0)\|^2_{H^{m}(Y)}}\nonumber\\
&&+ \int_0^v e^{-\lambda(u+s)}\Big(\|(\nabla_\mu
A^\mu_{\psi\psi})_i\|_{L^{\infty}(Y)}-c\lambda\Big)\|\partial_u\psi_{i+1}\|^2_{H^m(Y)}+
 \|\partial_u\psi_{i+1}\|_{H^m(Y)} \times\nonumber
\\
 &&\Big( \|\partial_u\psi_{i+1}\|_{H^m(Y)}
  + \|(A)_i\|_{H^{m}(Y)}+
    \|(\tilde{A})_i\|_{H^{m-1}(Y)}
    +
    \|(\Gamma)_i\|_{H^{m}(Y)}+\|(\gamma)_i\|_{H^{m}(Y)}\nonumber
\\
  &&+ \|(B_\psi)_i {\partial \psi_{i} \over\partial
  u} +(b_\psi)_i\|_{H^{m}(Y)}
    + \|(\tilde{B}_\psi)_i {\partial \psi_{i} \over\partial
  u} +(\tilde{b}_\psi)_i\|_{H^{m-1}(Y)}\Big)ds\Big\}
  \,,
\eeal{t1}
where $
(\tilde{B}_\psi)_i=(A^v_{\psi\psi})^{-1}_i(\tilde{B}_\psi)_i\,, \;
(\tilde{b}_\psi)_i=(A^v_{\psi\psi})^{-1}_i(\tilde{b}_\psi)_i $ and
$(A)_i$,  the value of the matrix $A$ as determined by $f_i$. Again
from \eq{Moser3} we have
\bean
 \lefteqn{
 \|(A)_i\|_{H^{m}(Y)}+\|(\tilde{A})_i\|_{H^{m-1}(Y)}+ \|(\Gamma)_i\|_{H^{m}(Y)}+\|(\gamma)_i\|_{H^{m}(Y)}\leq }
\nonumber
\\
&&
 C(Y,m,\|f_i\|_{L^{\infty}})\Big(\|f_i\|_{H^{m}}+\|\zA\|_{H^{m}(Y)}+\|\mathring{\tilde{A}}\|_{H^{m-1}(Y)}+
 \|\zGamma\|_{H^{m}(Y)}+\|\mathring{\gamma}\|_{H^{m}(Y)}\Big)
 \nonumber
 \\
&&
 \leq
 C(Y,k,C_0)\Big(\|f_i\|_{H^{m}(Y)}+\sqrt{M_k(u,s)}\Big)  \qquad \text{for}\;\; m\leq k\,.
\eeal{8X13.2}
Now, after eliminating $\partial_v \psi_{i+1}$ and
$\nabla_B\partial_u\varphi_{i+1}$ using the equations, we have
$(B_{\psi})_i=B_{\psi}\left(f_i, f_{i+1}, \znabla_Bf_{i+1}\right) $
and
$$
(b_\psi)_i=b_{\psi}\left(f_i, f_{i+1}, \znabla_Bf_{i+1},
\znabla_B\znabla_Cf_{i+1}, \partial_u\varphi_i \right) \,,
$$
which are
affine functions of $ \znabla_Bf_{i+1}$ and
$\znabla_B\znabla_Cf_{i+1}$. Using again \eq{Moser3} we have:
\bean \lefteqn{ \|(B_\psi)_i {\partial \psi_{i} \over\partial
  u} +(b_\psi)_i\|_{H^{m}(Y)}
    + \|(\tilde{B}_\psi)_i {\partial \psi_{i} \over\partial
  u} +(\tilde{b}_\psi)_i\|_{H^{m-1}(Y)}\leq}\\
  &&
  C\Big(Y,m,\|f_i\|_{L^{\infty}},\|f_{i+1}\|_{W^{2,\infty}}, \|\partial_u\varphi_i\|_{L^{\infty}}
  \Big)\times
 \nonumber
\\
  &&
  \Big( \|\partial_u\psi_i\|_{H^{m}(Y)}+\|f_i\|_{H^{m}(Y)}+\|f_{i+1}\|_{H^{m+2}(Y)} + \|\partial_u\varphi_i\|_{H^{m}(Y)}
 \nonumber
\\
  &&
  \phantom{}+
  \|\mathring{B}_{\psi}\|_{H^{m}(Y)}+\|\mathring{b}_{\psi}\|_{H^{m}(Y)}+
  \|\mathring{\tilde{B}}_{\psi}\|_{H^{m}(Y)}+\|\mathring{\tilde{b}}_{\psi}\|_{H^{m}(Y)}\Big)
  \,.
   \label{8X13.2-a}
\eea

For $k\ge 3$ let $\hat M_k(u,v)$ be any function such that
\bean
 \hat M_k(u,v)&\ge &  M_k(u,v) +
  \|\mathring{B}_{\psi}(u,v)\|_{H^{k-3}(Y)}^2+\|\mathring{b}_{\psi}(u,v)\|_{H^{k-3}(Y)}^2
\\
 &&  +
  \|\mathring{\tilde{B}}_{\psi}(u,v)\|_{H^{k-3}(Y)}^2+\|\mathring{\tilde{b}}_{\psi}(u,v)\|_{H^{k-3}(Y)}^2
  \,.
\eeal{8X13.1}
Then, by using simultaneously inequalities \eq{bid9m}, and
\eq{bid31}-\eq{st2} with $i+1$ there replaced by $i$ (note that
$\hat a_i$ is decreasing by definition),   one obtains for $m+2=k-1$
\bean
 \lefteqn{
\|(B_\psi)_i {\partial \psi_{i} \over\partial
  u} +(b_\psi)_i\|_{H^{k-3}(Y)}  + \|(\tilde{B}_\psi)_i {\partial \psi_{i} \over\partial
  u} +(\tilde{b}_\psi)_i\|_{H^{k-3}(Y)}
  }
  &&
\\
 &&
  \leq  C
  \Big( \|\partial_u\psi_i\|_{H^{k-3}(Y)}
   +
  \sqrt{\hat M_k(u,s)} +C\Big)
   \,.
  \phantom{xxxxxxxxxxxx}
\eeal{8X13.3}
Adding \eq{8X13.2} and \eq{8X13.3} we obtain
\bea
 \lefteqn{
 e^{-\lambda(u+v)}\|\partial_u\psi_{i+1}(u,v)\|^2_{H^{k-3}(Y)}\leq C_9\Big\{  e^{-\lambda u}\|\partial_u\overline {\psi}_{i+1}(u)\|^2_{H^{k-3}(Y)}
}
 \nonumber
\\
 &&+
 \int_0^v e^{-\lambda(u+s)}\Big(\|(\nabla_\mu A^\mu_{\psi\psi})_i\|_{L^{\infty}(Y)}-c\lambda\Big)\|\partial_u\psi_{i+1}\|^2_{H^{k-3}(Y)}
 \nonumber
 \\
&&+
 \|\partial_u\psi_{i+1}\|_{H^{k-3}(Y)}\Big(\|\partial_u\psi_{i+1}\|_{H^{k-3}(Y)} + \|\partial_u\psi_{i}\|_{H^{k-3}(Y)}
 \nonumber
 \\
&& \phantom{xxxxxxxxxxxxxxxxxxxxxxxxxxxxxxx}  +\sqrt{\hat M_k(u,s)}+
C_{10}  \Big)ds\Big\}\,. \nonumber
\\
 &&
 \leq C_9\Big\{  e^{-\lambda u}\|\partial_u\overline {\psi}_{i+1}(u)\|^2_{H^{k-3}(Y)} +
 \int_0^v e^{-\lambda(u+s)}\Big(\epsilon \|\partial_u\psi_{i}\|_{H^{k-3}(Y)}^2
  \nonumber
\\
 &&+ \hat M_k(u,s)+ C^2_{10}   +
 \big(\|(\nabla_\mu A^\mu_{\psi\psi})_i\|_{L^{\infty}(Y)}-c\lambda+C(\epsilon)\big)\|\partial_u\psi_{i+1}\|^2_{H^{k-3}(Y)}
  \Big)
    ds\Big\}\,,
 \nonumber
 \\
\eeal{t2}
where in the last step we have used Cauchy-Schwarz with $\epsilon$.
It then follows from \eq{intpsic1} that there exists a constant
 $\lambda_3=\lambda_3(Y,k,C_0, \Cdiv)$ such that for all $\lambda\geq \lambda_3$
 \bea
 \lefteqn{
e^{-\lambda v}\|\partial_u\psi_{i+1}(u,v)\|^2_{H^{k-3}(Y)} \leq C_9\Big\{ \|\partial_u\overline {\psi}_{i+1}(u)\|^2_{H^{k-3}(Y)}}\nonumber\\
&& \phantom{xxxx}+ \int_0^v e^{-\lambda s}\Big(  \hat M_k(u,s)+
C^2_{10} +
 \epsilon  \|\partial_u\psi_{i}\|_{H^{k-3}(Y)}^2
   \Big)ds\Big\}\,.
\eeal{t3}
Set
$$
\check{C}_{\psi}=C_9\Big\{
\sup_{i\in\N}\sup_{u\in[0,a_0]}\|\partial_u\overline
{\psi}_{i}(u)\|^2_{H^{k-3}(Y)} +\int_0^{b_0} e^{-\lambda s}\Big(
\hat M_k(u,s)+ C^2_{10}
   \Big)ds\Big\}
   +1\,.
$$
By an argument which should be standard by now,   one can
choose $\epsilon$ small enough such that the inequality
\bel{st3} \|\partial_u\psi_{i+1}(u,v)\|_{H^{k-3}(Y)}^2\leq 2
e^{\lambda v}\check C_\psi \ee
is \emph{preserved by iteration} on  $[0,\hat
a_{i}]\times[0,b_0]\times Y$.

(This is not good enough yet for our purposes  when $A^u$ depends
upon $\psi$, as we will then need \eq{st3} with $2 C_0$ at the
right-hand side to be able to make sure that the contribution from
$(\nabla_\mu A^\mu)_i$ can be estimated by $\Cdiv$, so that some
more work will have to be done in the general case).

In any case, let
\bel{21X13.1} \mbox{$k_2$ be the smallest integer larger than or
equal to $ \frac{n+7}{2}$.} \ee
For $k\ge k_2$  we can use \eq{st3} with $k$ replaced by
 $k_2$ there and the Sobolev embedding to obtain
\bel{st8}
  \forall (u,v)\in[0,\hat{a}_i]\times[0,b_0], \quad  \left\|{
\partial_u\psi_{i+1} (u,v)}\right\|_{C^{1}(Y)}\le C
 \,.
\ee
By integration in $u$ we therefore find
\bel{st8n}
 \left\|\psi_{i+1} (u,v)\right\|_{C^{1}(Y)}\le C_0+Cu
  \le 2C_0
 \,,
\ee
again in the range~\eq{14IX13.1} (but note that the constant $C$
there might have to be taken larger now, remaining independent of
$i$ and $k$).

Keeping in mind \eq{14IX13.2}, we conclude that the condition
\bel{14IX13.3}
 \mbox{$\| f_{i+1}(u,v)\|_{C^{ 1}(Y)}$ cannot exceed $4C_0$ for $0\le u \le C_0 C^{-1}$}
\ee
is stable under iteration.

Moreover, replacing the bound $C_0C^{-1}$ by a  smaller
$i$-independent number, say $a_*$, if necessary, integration in $u$
shows that
\bel{14IX13.3}
 \mbox{$f_{i+1}(u,v)$ cannot leave the neighborhood $\mcK$ for $0\le u \le a_*$.}
\ee
with $\mcK$   as  in \eq{6IX13.1}.

If $A^v$ does not depend upon $\varphi$,  and $A^u$ does not depend
upon $\psi$, the conditions on   $\partial_u \psi$ and $\partial_v
\varphi$ in \eq{6IX13.1} are irrelevant for all the estimates so
far, and so
\bel{14IX13.4}
 \mbox{the bounds \eq{bid9f-1}
cannot be violated for $0\le u \le a_*$. } \ee
Hence, using the definition of $\Cdiv$,
\bel{14IX13.4}
 \mbox{the inequalities \eq{bid9m}
cannot be violated for $0\le u \le a_*$. } \ee
Recall that   $\hat a_i$ was defined as either $a_0$  or the first
number at which the inequalities \eq{bid9m} fail for $f_i$ or
$f_{i+1}$. So, if we assume that the inequalities   \eq{bid9m} hold
at the induction step $i$ with $a_i\ge a_*$, we conclude that
$a_{i+1}\ge a_*$ as well. Hence
$$
 \hat a_i \ge a_*
 \,.
$$

The above implies that
 \eq{intpsic1}
and \eq{intphic1} hold  for $0\le u \le a_*$.  We have therefore
obtained:

\begin{Proposition}
\label{Piter} Let $k> (n+7)/2$,  assume that   $A^v$ does not depend
upon $\varphi$, and that $A^u$ does not depend upon $\psi$. Suppose
that there exists a constant $\mcC$ such that
\begin{equation}\label{Itcond}
   \sup_{\mcN^-\cup\mcN^+}\Big\{ \left|\overline{\partial_v
    {f}}_i\right| + \left|\overline{\partial_u   f}_i\right| +
   \ykn{ \overline {f}_i(u,v)}+ M_k(u,v)
\Big\}\le \mcC\,.
\end{equation}
There exists a constant $0<a_*=a_*(a_0,b_0,\mcC,Y)\le a_0$ such that
all the fields $f_i$ satisfy the hypotheses of
Lemmata~\ref{Lenergy}--\ref{Lenergy2} on $[0,a_*]\times[0,b_0]\times
Y$, as well as their conclusions with $\hat a_i$ replaced by $a_*$.
\end{Proposition}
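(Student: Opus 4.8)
The Proposition is essentially a summary of the estimates assembled above, so the plan is to organise them into one induction on $i$ coupled with a continuity (bootstrap) argument in the variable $u$ which produces a uniform lower bound $a_*$ for the ``opening times'' $\hat a_i$. First I would freeze $\lambda$ to be the largest of the thresholds $\lambda_0,\lambda_1,\lambda_2$ of Lemmata~\ref{Lenergy}--\ref{Lenergy2} and the threshold $\lambda_3$ of \eq{t3}; with this choice every constant $C_4,\dots,C_{10}$, $\hat C$, $\tCpsi$, $\tCvarphi$, $\check C_\psi$ appearing in those statements is fixed, and by the hypothesis \eq{Itcond} each is bounded by a quantity depending only on $a_0,b_0,Y,k$ and $\mcC$, in particular not on $i$. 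The case $i=0$ is the base of the induction: one chooses the smooth extension $f_0$ of $\overline f_0$ so that \eq{bid9m} and the integral bounds \eq{fiint}, \eq{intpsic}, \eq{phipresint}, \eq{st3} hold with room to spare near $u=0$, which is possible since $\overline f_0$ obeys \eq{Itcond}.

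For the inductive step, assume $a_1,\dots,a_i\ge a_*$, so that $\hat a_{i-1}\ge a_*$ and $f_i$ satisfies \eq{bid9m}, \eq{intpsic1}, \eq{intphic1} together with the preserved integral bounds on $[0,a_*]\times[0,b_0]\times Y$; we must show $a_{i+1}\ge a_*$. Since $f_{i+1}$ solves the linear problem \eq{itpro} it is smooth, and at $u=0$ the inequalities \eq{bid9m} hold strictly (the data satisfy $\|\overline f_{i+1}\|_{W^{1,\infty}(Y)}\le C_0<4C_0$, and $\Cdiv,\hCdiv,\hhCdiv$ were defined with a built-in margin), so $a_{i+1}>0$; suppose for contradiction $a_{i+1}<a_*$. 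On $[0,a_{i+1}]\times[0,b_0]$ both $f_i$ and $f_{i+1}$ obey \eq{bid9m}, hence Lemmata~\ref{Lenergy}--\ref{Lenergy2} apply; taking $k=k_1$ in \eq{bid31} and using $H^{k_1-1}(Y)\hookrightarrow C^2(Y)$ gives $\|f_{i+1}(u,v)\|_{C^2(Y)}\le C$, and then the equations \eq{itpro} give $\|\partial_u\varphi_{i+1}\|_{C^1(Y)}\le C$ and $\|\partial_v\psi_{i+1}\|_{C^1(Y)}\le C$. Integrating the first in $u$ from $\mcN^-$ yields $\|\varphi_{i+1}(u,v)\|_{C^1(Y)}\le C_0+Cu<2C_0$ provided $a_*\le C_0C^{-1}$.

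The one point needing more than the bare energy inequality is the $C^1$-bound on $\psi_{i+1}$ — this is why $k>(n+7)/2$ is imposed: one $u$-differentiates the symmetric hyperbolic system satisfied by $\psi_{i+1}$ to get the linear system \eq{st5} for $\partial_u\psi_{i+1}$, applies the volume estimate \eq{volest2} on $\mcU=\{u\}\times[0,v]$, bounds the source terms with the Moser inequality \eq{Moser3} and the $C^1$-controls already obtained, and uses Cauchy--Schwarz with a small $\epsilon$ to absorb the $\partial_u\psi_{i+1}$ contributions, arriving at \eq{t3}; a standard continuation argument then shows the bound $\|\partial_u\psi_{i+1}(u,v)\|^2_{H^{k-3}(Y)}\le 2e^{\lambda v}\check C_\psi$ of \eq{st3} is preserved under iteration once $\epsilon$ is small enough. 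With $k\ge k_2$, Sobolev gives $\|\partial_u\psi_{i+1}\|_{C^1(Y)}\le C$, and integration in $u$ gives $\|\psi_{i+1}(u,v)\|_{C^1(Y)}\le C_0+Cu<2C_0$ in the same range, so $\|f_{i+1}(u,v)\|_{W^{1,\infty}(Y)}<4C_0$ there. To control $\nabla_\mu A^\mu$ on $f_{i+1}$ note that by \eq{9X13.1} it is built from $f_{i+1}$, $\znabla_B f_{i+1}$ and the derivatives $\partial_\psi A^\mu\partial_\mu\psi+\partial_\varphi A^\mu\partial_\mu\varphi$; since $A^v$ is $\varphi$-independent and $A^u$ is $\psi$-independent, Remark~\ref{R9X13.1} shows the $\partial_v\varphi$ and $\partial_u\psi$ entries drop out, leaving only the quantities just bounded. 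Shrinking $a_*$ once more so that $f_{i+1}$ cannot leave the compact set $\mcK$ of \eq{6IX13.1} on $[0,a_*]\times[0,b_0]$, the definition of $\Cdiv$ forces $\|(\nabla_\mu A^\mu)_{i+1}\|_{L^\infty}<\Cdiv$ there; hence neither \eq{bid9f} nor \eq{bid9f-1} is saturated for $0\le u\le a_*$, contradicting $a_{i+1}<a_*$. Thus $a_{i+1}\ge a_*$, so $\hat a_i\ge a_*$, and by the sup-definitions \eq{intpsic1} and \eq{intphic1} hold on $[0,a_*]\times[0,b_0]$, completing the induction; $a_*=a_*(a_0,b_0,\mcC,Y)$ is the smallest of the various thresholds encountered along the way.

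The main obstacle is not any individual estimate but the bookkeeping of quantifiers: one must fix first $k$, then $\lambda$ (large, so that the $-c\lambda$ terms dominate in \eq{volest1}--\eq{volest2} and in \eq{t2}), then the smallness parameters $\epsilon$ and $\delta$, and only then $a_*$, in such a way that each of the self-improving inequalities \eq{fiint}, \eq{intpsic}, \eq{phipresint} and \eq{st3} actually closes; and one must keep the structural constants (such as $\check C_1$ and its analogues) in \eq{volest1}--\eq{volest2} independent of the very $\lambda$ that they later multiply inside $\lambda$-weighted integrals.
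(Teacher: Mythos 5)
Your proposal is correct and follows essentially the same route as the paper: freeze $\lambda$ above all thresholds, use the energy estimate \eq{bid31} with $k=k_1$ and Sobolev to get a $C^2(Y)$ bound on $f_{i+1}$, integrate the $u$-transport equation for $\varphi_{i+1}$ to keep it in $2C_0$, then $u$-differentiate the $\psi$-equation to get the preserved bound \eq{st3} and hence (for $k\ge k_2$) a $C^1$ bound on $\psi_{i+1}$ by integration, invoke Remark~\ref{R9X13.1} to drop the $\partial_u\psi$ and $\partial_v\varphi$ entries in the definition of $\Cdiv$, and close with the continuity/bootstrap in $u$ to get $\hat a_i\ge a_*$. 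Your phrasing as a proof by contradiction is a cosmetic variant of the paper's ``first failure time'' formulation, and your final remark about the order of quantifiers ($k$, then $\lambda$, then $\epsilon,\delta$, then $a_*$) is exactly the bookkeeping the paper is performing.
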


It remains to obtain the pointwise bounds \eq{bid9f}, \eq{intpsic1}
and \eq{intphic1} in the general case; this will follow from
pointwise estimates on $\partial _v \varphi$, and on improved
estimates on $\partial_u \psi$.

We start by showing that the inequality
$$
\sup_i\sup_{(u,v)\in[0,a_*]\times[0,b_0]} | \partial_u\psi_i (u,v)
|\leq 2 \sup_i\sup_{ v \in[0,b_0] } | \partial_u\psi_i (0,v)  | +1
$$
is preserved by iteration, after reducing $a_*$ if necessary.

We consider the restriction of the $u$-differentiated equation satisfied by
$\psi_{i+1}$ on $\mcN^-$, that is for $u=0$, which we write as
\bel{st5-09X13}
  \overline{(A^\mu_{\psi\psi}\nabla_\mu)_{i}} \overline{{\partial \psi_{i+1} \over\partial
  u}} = \overline{(B_\psi)_{i}} \overline{{\partial \psi_{i} \over\partial
  u}} +\overline{(b_\psi)_{i}}
   \,.
\ee
Setting $\Psi_i={\partial \psi_{i} \over\partial
  u}-\overline{{\partial \psi_{i} \over\partial u}
   }$ and subtracting \eq{st5-09X13} from \eq{st5}  gives an equation of the form
\bel{9X13-a} (A^\mu_{\psi\psi}\nabla_\mu)_i \Psi_{i+1}
=(B_\psi)_i\Psi_{i} +\mathcal{E}_i
 \,,
\ee
where
\bel{12X13.2} \mathcal{E}_i=\underbrace{
-\left((A^\mu_{\psi\psi}\nabla_\mu)_i
-\overline{(A^\mu_{\psi\psi}\nabla_\mu)_i} \right)
\overline{{\partial  \psi_{i+1} \over\partial
  u}}}_{\Delta_1}+\underbrace{\left((B_\psi)_i-\overline{(B_\psi)_i} \right)\overline{{\partial \psi_{i} \over\partial
  u}}}_{\Delta_2} + \underbrace{(b_\psi)_i-\overline{(b_\psi)_i}}_{\Delta_3}\,.
\ee
It is easy to see that both $(B_{\psi})_i$ and $(b_{\psi})_i$ are
affine in $\znabla_B f_{i+1}$ and $\znabla_B \znabla_C f_{i+1}$ with
coefficients depending upon $f_{i}$ and $f_{i+1}$, thus if $k-1>
(n-1)/2+3$ by \eq{bid31}, \eq{st1}, \eq{Moser1} and \eq{Moser3}, we
have
\bel{12X13.01}
\|(B_{\psi})_i\|_{H^{k-2}(Y)}+\|(b_{\psi})_i\|_{H^{k-3}(Y)}+\|(B_{\psi})_i\|_{W^{1,\infty}(Y)}+\|(b_{\psi})_i\|_{W^{2,\infty}(Y)}<C
 \,.
\ee
Further,
\beaa
 (A^\mu_{\psi\psi}\nabla_\mu)_i \overline{{\partial\psi_{i+1} \over\partial
  u}}
&=&  %
 (A^v_{\psi\psi})_i\overline{\partial_u \partial \psi_{i+1} \over\partial v}
 +
\Big((A^v_{\psi\psi}\gamma_{\psi\psi,v})_i
 +(A^B_{\psi\psi})_i\znabla_B +(A^B_{\psi\psi}\Gamma_{\psi\psi,B})_i\Big) \overline{{\partial \psi_{i+1} \over\partial
  u}}
  \,.
\eeaa
Hence we have the following contribution to the first term in
\eq{12X13.2}:
\beaa \lefteqn{
  \left((A^v_{\psi\psi})_i-\overline{(A^v_{\psi\psi})_i}\right)\overline{\partial_u\partial_v\psi_{i+1}}
  }
\\
&=& \bigg\{u\int_0^1\frac{\partial (A^v_{\psi\psi_i})}{\partial
u}\Big(tu,v, tf_i(u,v)+(1-t)f_i(0,v)\Big)dt + (f_i(u,v)-f_i(0,v))
 \times
\\
&&
\int_0^1\frac{\partial (A^v_{\psi\psi_i})}{\partial f}\Big(tu,v,
tf_i(u,v)+(1-t)f_i(0,v)\Big)dt \bigg\}\overline{\partial_u
\partial_v\psi_{i+1}}
\eeaa Now recall that by hypothesis the quantity $
\sup_{v\in[0,b_0]}\sup_i\|\overline{\partial_v \partial_u
\psi_i}\|_{H^k(Y)} $ is bounded and that the derivative ${\partial
\psi_{i+1} \over\partial v}$ is an affine function of $f_{i+1}$ and
$\znabla f_{i+1}$ with coefficients depending upon $f_i$. Thus there
exists a constant
\bel{25X13.p1}
 C=C(C_0,
 \sup_{v\in[0,b_0]}\sup_i\|\overline{\partial_v \partial_u \psi_i}\|_{H^k(Y)} )
 >0
  \,,
\ee
which is $i-$independent, such that $ \forall (u,v)\in[0, \hat
a_i]\times[0,b_0],$
$$
 \|\left((A^\mu_{\psi\psi}\nabla_\mu)_i -\overline{(A^\mu_{\psi\psi}\nabla_\mu)_i} \right) \overline{{\partial_u  \psi_{i+1} }}\|_{L^2(Y)}\leq C\left(u +\|f_i(u,v)-f_i(0,v)\|_{L^2(Y)} \right)
 \,.
$$
The $L^2-$norm of remaining terms in the first term $\Delta_1$ of
\eq{12X13.2} are estimated in the same way with $
(A^\mu_{\psi\psi})_i $
 replaced successively by
 $
(A^v_{\psi\psi}\gamma_{\psi\psi,v})_i ,\; $ $
(A^B_{\psi\psi})_i\znabla_B,\; $ $
(A^B_{\psi\psi}\Gamma_{\psi\psi,B})_i $ and $ \overline{{\partial_u
\partial_v\psi_{i+1} }} $ replaced by $ \overline{{\partial_u
\psi_{i+1}}} $ leading to the following estimate for $\Delta_1$:
\bel{12X12.3}
 \forall (u,v)\in[0, \hat a_i]\times[0,b_0], \;
 \|\Delta_1(u,v)\|_{L^2(Y)}\leq C\left(u +\|f_i(u,v)-f_i(0,v)\|_{L^2(Y)} \right)
 \,.
\ee

We continue with the analysis of the second term $\Delta_2$ of
\eq{12X13.2}. The explicit expression of $(B_{\psi})_i$ shows that
$(B_{\psi})_i$ is a collection of terms of the form  $\Gamma_i P_r
f_{i+1}, \; 0\leq r\leq  1 $ where the $\Gamma_i$'s are smooth
function depending upon the fields $f_i$. We order these terms in an
arbitrary way and write
$$
 (B_{\psi})_i=\sum_{m=1}^p \Gamma_{i,m} P_{r_m} f_{i+1}
$$
where  $ P_{r_m}$ is either the identity or $\znabla_B$. We have
\beaa
 \Delta_2
  &=&
   \left((B_\psi)_i-\overline{(B_\psi)_i} \right)\overline{ \partial_u \psi_{i} }
\\
 &=&
 \sum_{m=1}^p \left(\Gamma_{i,m} P_{r_m} f_{i+1} -\overline{\Gamma_{i,m} P_{r_m} f_{i+1}}\right)\overline{ \partial_u \psi_{i}}
  \,.
\eeaa
Thus,
\beaa
 \lefteqn{\left(\Gamma_{i} P_{r} f_{i+1} -\overline{\Gamma_{i} P_{r} f_{i+1}}\right)\overline{ \partial_u \psi_{i} }
 }
\\
 &&
  =
 \Gamma_{i}\left( P_{r} f_{i+1}-\overline{ P_{r} f_{i+1}}\right)\overline{ \partial_u \psi_{i} }
 +
 \left(\Gamma_{i} -\overline{\Gamma_{i}} \right)\overline{P_{r} f_{i+1}} \overline{ \partial_u \psi_{i} }
\\
 &=&
 \Gamma_{i}\left( P_{r} f_{i+1}-\overline{ P_{r} f_{i+1}}\right)\overline{ \partial_u \psi_{i} }
\\
 &&
 +
 u\left[\int_0^1\frac{\partial\Gamma_{i}}{\partial u}(tu, tf_i(u,v)+(1-t)f_i(0,v))\, dt\right]\overline{P_{r} f_{i+1}} \overline{ \partial_u \psi_{i} }
 \\
 &&
+(f_i(u,v)-f(0,v))\left[\int_0^1\frac{\partial\Gamma_{i}}{\partial f}(tu,
tf_i(u,v)+(1-t)f_i(0,v))\, dt
 \right]
  \overline{P_{r} f_{i+1}} \overline{
\partial_u \psi_{i} } \,. \eeaa
We then see that \beaa \lefteqn{ \|\left(\Gamma_{i} P_{r} f_{i+1}
-\overline{\Gamma_{i} P_{r} f_{i+1}}\right)\overline{ \partial_u
\psi_{i} }\|_{L^2(Y)} \leq}
\\
 &&
 C(C_0)\left(u +\|f_i(u,v)-f_i(0,v)\|_{L^2(Y)}+\|f_{i+1}(u,v)-f_{i+1}(0,v)\|_{H^1(Y)} \right)
 \,,
\eeaa
which gives
\bel{12X13.4}
 \|\Delta_2\|_{L^2(Y)}
 \leq
 C(C_0)\left(u +\|f_i(u,v)-f_i(0,v)\|_{L^2(Y)}+\|f_{i+1}(u,v)-f_{i+1}(0,v)\|_{H^1(Y)} \right)
 \,.
\ee
As far as the last term $\Delta_3$ of \eq{12X13.2} is concerned, we
note that $(b_{\psi})_i$ is a sum of terms of the form
$$
\tilde{\Gamma}_i\znabla_{r_{1}}\ldots \znabla_{r_{j}}f_{i+1} \,,
$$
with $0\leq j\leq 2$ and $\tilde{\Gamma}_i$ depending upon $f_i$ and
$\partial_u\varphi_i$. Thus as in the previous case, we see that
\bea
 \label{12X13.5}
 \|\Delta_3\|_{L^2(Y)}
&\leq&
 C(C_0)\Big(u +\|f_i(u,v)-f_i(0,v)\|_{L^2(Y)}+\|\partial_u \varphi_i(u,v)-\partial_u \varphi_i(0,v)\|_{L^2(Y)}
 \nonumber
\\
 &&\qquad\qquad\qquad+\|f_{i+1}(u,v)-f_{i+1}(0,v)\|_{H^2(Y)} \Big)
 \,.
\eea
Now from \eq{st2} and \eq{st8} we find that (note that from these
inequalities and the equation satisfied by $\varphi_{i+1}$, the
$L^\infty$ norm of $\partial^2_u\varphi_{i}$ is uniformly bounded):
\bel{12X13.6} \forall (u,v)\in[0, \hat a_i]\times[0,b_0]\times Y,
\;\;\|\mathcal{E}_i(u,v)\|_{L^2(Y)}\leq Cu \,. \ee
By \eq{12X13.01}, the $L^2-$norm of the right-hand-side of
\eq{9X13-a} is estimated as:
\bea \label{12X13.7} \|(B_\psi)_i\Psi_{i}
+\mathcal{E}_i\|_{L^2(Y)}&\le& \|(B_\psi)_i\Psi_{i}\|_{L^2(Y)}
+\|\mathcal{E}_i\|_{L^2(Y)} \nonumber
\\
&\leq&
 Cu+\|(B_\psi)_i\|_{L^{\infty}(Y)}\|\Psi_{i}\|_{L^2(Y)}
\nonumber
\\
 &\leq&
 C\left( u+\|\Psi_{i}\|_{L^2(Y)}\right)
 \,.
\eea
Next, we write the energy estimate for the  system \eq{9X13-a}.
Consider the vector field (recall $w_r=e^{-\lambda(u+v)}$)
\bel{9X13-c}
 Z{^\mu}:= w_r \langle\Psi_{i+1},(A^\mu_{\psi\psi})_i
 \Psi_{i+1}\rangle
 \,,
\ee
 so that
 \beaa \nabla_\mu (Z{^\mu})
  & = &
  \Big\{-2\lambda  \langle \Psi_{i+1},(A^v_{\psi\psi})_i \Psi_{i+1}\rangle
 \\
 &&
 +  \langle  \Psi_{i+1},(\nabla_\mu A^\mu_{\psi\psi})_i \Psi_{i+1}\rangle + 2\langle  \Psi_{i+1},(A^\mu_{\psi\psi}\nabla_\mu)_i\Psi_{i+1}\rangle  \Big\} w_r
 \,.
\eeaa
We apply Stokes' theorem on the set $\{u\}\times[0,v]\times Y$ and
obtain
\beaa e^{-\lambda v}\|\Psi_{i+1}(u,v)\|^2_{L^{2}(Y)} &\leq&
C\Big\{ \|\Psi_{i+1}(u,0)\|^2_{L^2(Y)}  +e^{\lambda
u}\int_0^v\nabla_\mu (Z{^\mu})(s,v)dsd\mu_Y \Big\}
\\
&\leq&
C\Big\{\|\Psi_{i+1}(u,0)\|^2_{L^2(Y)} -2\lambda\int_0^ve^{-\lambda
s}  \Big( \langle  \Psi_{i+1},(A^v_{\psi\psi})_i
 \Psi_{i+1}\rangle
\\
&& + \langle  \Psi_{i+1},(\nabla_\mu A^\mu_{\psi\psi})_i
\Psi_{i+1}\rangle + 2\langle
\Psi_{i+1},(A^\mu_{\psi\psi}\nabla_\mu)_i
 \Psi_{i+1}\rangle\Big)dsd\mu_Y
 \Big\}
\\
&\leq&
 C\Big\{\|\Psi_{i+1}(u,0)\|^2_{L^2(Y)}\\
&& + \int_0^ve^{-\lambda s}\Big\{\left(\|(\nabla_\mu
A^\mu_{\psi\psi})_i\|_{L^{\infty}(Y)}-2c\lambda\right)\|\Psi_{i+1}(u,s)\|^2_{L^2(Y)}
\\
&& +2\|\Psi_{i+1}(u,s)\|_{L^2(Y)} \|((B_\psi)_i\Psi_{i}
+\mathcal{E}_i)(u,s)\|_{L^2(Y)}\Big\}ds \Big\} \,.
 \eeaa
Now from \eq{12X13.7}, we have:
\beaa
 e^{-\lambda v}\|\Psi_{i+1}(u,v)\|^2_{L^{2}(Y)}
 &\leq&
 C\Big\{\|\Psi_{i+1}(u,0)\|^2_{L^2(Y)}
 \\
&&
 +\int_0^ve^{-\lambda s}\Big\{\left(\|
  (\nabla_\mu A^\mu_{\psi\psi})_i\|_{L^{\infty}(Y)}-2c\lambda\right)
   \|\Psi_{i+1}(u,s)\|^2_{L^2(Y)}
\\
&&
 + \|\Psi_{i+1}(u,s)\|_{L^2(Y)} \left(u+\|\Psi_i(u,s))\|_{L^2(Y)}\right)\Big\}ds\Big\}
\\
&\leq&
 C\Big\{\|\Psi_{i+1}(u,0)\|^2_{L^2(Y)}
\\
&&
 +\int_0^ve^{-\lambda s}\Big\{\left(\|(\nabla_\mu A^\mu_{\psi\psi})_i\|_{L^{\infty}(Y)}-2c\lambda+C(\epsilon)\right)\|\Psi_{i+1}(u,s)\|^2_{L^2(Y)}
\\
&& + \left( u^2+\epsilon\|\Psi_i(u,s)\|^2_{L^2(Y)}\right)\Big\}ds\Big\}
\\
&\leq& C\Big\{\|\Psi_{i+1}(u,0)\|^2_{L^2(Y)}+\int_0^ve^{-\lambda s}
 \left(
    u^2+\epsilon\|\Psi_i(u,s)\|^2_{L^2(Y)}
     \right)
     \,ds\Big\} \,, \eeaa
for $\lambda$ large enough. Thus there exists $\lambda_\epsilon>0$
such that for all $(u,v)\in \Omega_{\hat a_i}$,
$$
e^{-\lambda_\epsilon v}\|\Psi_{i+1}(u,v)\|^2_{L^{2}(Y)}
 \leq
C\Big\{\|\Psi_{i+1}(u,0)\|^2_{L^2(Y)}+ \int_0^ve^{-\lambda_\epsilon
s}( u^2+\epsilon\|\Psi_i(u,s)\|^2_{L^2(Y)})ds\Big\}
\,.
$$
Recall $ \Psi_{i+1}(u,0)=\partial_u\psi(u,0)-\partial_u\psi(0,0) $,
thus,
$$
 |\Psi_{i+1}(u,0)|\leq u\cdot \underbrace{ \sup_{i\in\N}\sup_{u\in[0,a_0]} \|\partial^2_u\psi_i(u,0)\|_{L^{\infty}(Y)}}_{=:\hat c}
 \,,
$$
leading to
\beaa
 \lefteqn{
 e^{-\lambda_\epsilon v}\|\Psi_{i+1}(u,v)\|^2_{L^{2}(Y)}
 }
 &&
\\
 &&
 \leq
 u^2\left(\hat c^2\mu_Y(Y)+ C\int_0^ve^{-\lambda_\epsilon s}ds\right) +\epsilon C\int_0^ve^{-\lambda_\epsilon s}\|\Psi_i(u,s)\|^2_{L^2(Y)}ds
    \,.
\eeaa
Suppose now for the purpose of induction that (the constant
$\overline  C_0$ will be chosen shortly, see \eq{21X13.2})
\bel{12X13.8}
 (u,v)\in [0, \hat a_i]\times[0,b_0], \;\|\Psi_{i}(u,v)\|^2_{L^{2}(Y)}\leq \overline  C_0e^{\lambda_\epsilon(v-b_0)}
 \,,
\ee
then by the previous inequality,
$$
 e^{-\lambda_\epsilon (v-b_0)}\|\Psi_{i+1}(u,v)\|^2_{L^{2}(Y)} \leq
 u^2 \underbrace{e^{\lambda_\epsilon b_0}\big(\hat c^2\mu_Y(Y)+  \frac{C }{\lambda_\epsilon}\big)}_{=:C(\lambda_\epsilon)} +\epsilon C \overline  C_0b_0
 \,.
$$
We choose $\epsilon$ small enough such that $\epsilon Cb_0 \leq
1/2$. Once this choice of $\epsilon$ is made (then
$C(\lambda_\epsilon)$ is fixed)
 we see that
 $
 u^2C(\lambda_\epsilon)\leq \overline  C_0/2
 $
 provided that
 \bel{13X13-1}
 0< u\leq  \overline  C_0\left(\sqrt{2C(\lambda_\epsilon)}\right)^{-1}
 \,.
 \ee
We have thus proved that in the range of the $u-$variable given by
\eq{13X13-1} it holds that
$$
\|\Psi_{i+1}(u,v)\|^2_{L^{2}(Y)}
 \leq
\overline  C_0 e^{ \lambda_\epsilon (v-b_0)} \,.
$$
This proves that \eq{12X13.8} is preserved by iteration.

Now, recall that from \eq{st3},
$$
 e^{-\lambda v}\|\partial_u\psi_{i+1}(u,v)\|_{H^{k-3}(Y)}^2\leq 2 \check C_\psi
\,.
$$
Note that the constant $\check C_\psi$ in \eq{12X13.8} can be chosen
independently of $\lambda$, and that the $\lambda$ here is
independent from $\lambda_\epsilon$ in the previous inequalities,
but it is convenient to chose them to be equal, and we shall do so.
Thus we can write
$$
 e^{-\lambda v/2}\|\Psi_{i+1}(u,v)\|_{H^{k-3}(Y)}
 \leq C_0+(2 \check C_\psi)^{1/2} =: C
\,.
$$
By interpolation, there exists a  constant $c_m>0$ such that for all
$m\in(0,k-3)$,
$$
 \|\Psi_{i+1}(u,v)\|_{H^{m}(Y)} \leq c_m\|\Psi_{i+1}(u,v)\|^{\theta}_{H^{k-3}(Y)}\|\Psi_{i+1}(u,v)\|^{1-\theta}_{L^{2}(Y)}
 \,,
$$
with a certain constant $\theta\in(0,1)$. Then, multiplying by $
e^{-\lambda v/2}$ we obtain,
\beaa
 e^{-\lambda v/2} \|\Psi_{i+1}(u,v)\|_{H^{m}(Y)}
 &\leq&
 c_m\|e^{-\lambda v/2}\Psi_{i+1}(u,v)\|^{\theta}_{H^{k-3}(Y)}\|e^{-\lambda v/2}\Psi_{i+1}(u,v)\|^{1-\theta}_{L^{2}(Y)}
 \\
 &\leq &
  c_m C^{\theta}(\overline  C_0 e^{-\lambda b_0})^{1/2(1-\theta)}
 \,,
\eeaa
which can be rewritten as
$$
  e^{-\lambda (v-b_0)/2} \|\Psi_{i+1}(u,v)\|_{H^{m}(Y)}
  \leq
  c_m C^{\theta}(\overline  C_0 )^{1/2(1-\theta)}(e^{\lambda b_0})^{\theta/2}
 \,.
$$
For $m=k-4>\frac{n-1}{2}$ (which is possible if $k >\frac{n+7}{2}$),
from the Sobolev's embedding theorem there exists a constant $C_S>0$
such that
$$
e^{-\lambda (v-b_0)/2} \|\Psi_{i+1}(u,v)\|_{L^{\infty}(Y)}
  \leq
  C_S C^{\theta}(\overline  C_0 )^{1/2(1-\theta)}(e^{\lambda b_0})^{\theta/2}
 \,.
$$
Finally, we choose $\overline  C_0$ small enough so that
\bel{21X13.2}
  C_S C^{\theta}(\overline  C_0 )^{1/2(1-\theta)}(e^{\lambda b_0})^{\theta/2}<\sup_i\sup_{ v \in[0,b_0] } | \partial_u\psi_i (0,v)  |+1
  \,,
\ee
and obtain that
$$
\|\Psi_{i+1}(u,v)\|_{L^{\infty}(Y)}\leq e^{ \lambda
(v-b_0)/2}(\sup_i\sup_{ v \in[0,b_0] } | \partial_u\psi_i (0,v)
|+1) \,,
$$
which leads to
\bel{21X13.3} \|\partial_u\psi_{i+1}(u,v)\|_{L^{\infty}(Y)}\leq
2\sup_i\sup_{ v \in[0,b_0] } | \partial_u\psi_i (0,v)  |+1 \,, \ee
for all $v\in[0,b_0]$ and all $u$ in the range of \eq{13X13-1}, with
$\overline  C_0$ defined in \eq{21X13.2}. Thus we conclude, as after
\eq{14IX13.4}, that after reducing $a_*$ if necessary,
$$
\sup_i\sup_{(u,v)\in[0,a_*]\times[0,b_0]} | \partial_u\psi_i (u,v)
|\leq 2 \sup_i\sup_{ v \in[0,b_0] } | \partial_u\psi_i (0,v)  | +1
 \,.
$$
The estimate $ (|\nabla_\mu A^\mu|)_i\le \Cdiv$ for all $i$ follows
when $A^v$ does not depend upon $\varphi$.

When $A^v$ depends upon $\varphi$ it remains to obtain pointwise
estimate on $\partial_v\varphi$, we start by $v$-differentiating the
equation satisfied by $\varphi$:
\bean
  (A^\mu_{\varphi\varphi}\nabla_\mu)_i {\partial \varphi_{i+1} \over\partial
  v} &=& -\partial_v\Big((A^\mu_{\varphi\varphi}\nabla_\mu)_i \Big){ \varphi_{i+1}}
  - \partial_v\Big((A^\mu_{\varphi\psi}\nabla_\mu)_i \psi_{i+1}
  -(\newg_\varphi)_i\Big)
  \\ &=:& (B_\varphi)_i {\partial \varphi_{i} \over\partial
  v} +(b_\varphi)_i\,, \label{st10}
\eea
where
$$
(B_\varphi)_i 
:= -\partial_\varphi
 \Big((A^\mu_{\varphi\varphi}\nabla_\mu)_i \Big)
 {
 \varphi_{i+1}}
  - \partial_\varphi\Big((A^\mu_{\varphi\psi}\nabla_\mu)_i \psi_{i+1}
  -(\newg_\varphi)_i\Big)
  \,,
$$
  and with $(b_\varphi)_i$ containing all the remaining terms.
  After replacing  $v$-derivatives of $\psi_{i+1}$  using the field equations,
   $(B_\varphi)_i $ and $(b_\varphi)_i$ become affine
   in $\znabla_B
  f_{i+1}$ and $\znabla_B\znabla_C
  f_{i+1}$, with coefficients depending upon $f_i$.

Recall that $k_2$ has been defined   in \eq{21X13.1}; for $k\ge k_2$
by
  \eq{Moser1} and \eq{Moser3} we have the estimate
\bel{st11}
 \ykmtn{(B_\varphi)_i}+\ykmthn{(b_\varphi)_i}
 +\ywti{(B_\varphi)_i}+\ywi{(b_\varphi)_i} \le C_9\;
\ee
Applying \eq{volest2} with $k$ replaced by $k-3$,
 with $\mcU=[0,u]\times
\{v\}$, $f={\partial \varphi_{i+1} \over\partial
  v}$,  etc., to   \eq{st10},   we obtain
\bean
\lefteqn{
 e^{-\lambda(u+v)}\|\frac{\partial\varphi_{i+1}}{\partial v}(u,v)\|^2_{H^{k-3}(Y)}\leq C_{10}
  \Big\{
 e^{-\lambda v}\underbrace{\|\frac{\partial\varphi_{i+1}}{\partial v}(0,v)\|^2_{H^{k-3}(Y)}}_{\|\frac{\partial\overline {\varphi}_{i+1}}{\partial v}(v)\|^2_{H^{k-3}(Y)}}  +
    }
 &&
 \\
 \nonumber
 &&
 \int_0^u e^{-\lambda(s+v)}\Big[\left(\|(\nabla_\mu
 A^\mu_{\varphi\varphi})_i(s,v)\|_{L^\infty(Y)}
 -c\lambda\right)\|{\partial \varphi_{i+1} \over\partial v}(s,v)\|^2_{H^{k-3}(Y)} +
 \nonumber
 \\
 &&
 C_{11}\| {\partial \varphi_{i+1} \over\partial v}(s,v)\|_{H^{k-3}(Y)}\Big( \|{\partial \varphi_{i} \over\partial v}(s,v) \|_{H^{k-3}(Y)}+\|{\partial \varphi_{i+1} \over\partial
  v}(s,v) \|_{H^{k-3}(Y)}+ C_{12}\Big)\Big]ds
   \Big\}
  \,.
 \nonumber
 \\
 &&
 \eeal{alp1estn}
As before, using the inequality $ab\le  a^2/(4\epsilon)+\epsilon
b^2$, one is led to
\bean \lefteqn{
 e^{ -\lambda u}\|\frac{\partial\varphi_{i+1}}{\partial v}(u,v)\|^2_{H^{k-3}(Y)}\leq C_{10}\Big\{
 \|\frac{\partial\overline {\varphi}_{i+1}}{\partial v}(v)\|^2_{H^{k-3}(Y)}  +
 }
 &&
 \\
 \nonumber
 &&
 \int_0^u e^{-\lambda s}\Big\{\left(\|(\nabla_\mu A^\mu_{\varphi\varphi})_i(s,v)\|_{L^\infty(Y)}
 +2C_{11}+\frac{C_{11}}{4\epsilon}-c\lambda\right)\|{\partial \varphi_{i+1} \over\partial v}(s,v)\|^2_{H^{k-3}(Y)}
 \nonumber
 \\
 &&
 \phantom{xxxxxxxxxxxxxxxxxxxxx}
 +\epsilon C_{11}\| {\partial \varphi_{i} \over\partial v}(s,v)\|^2_{H^{k-3}(Y)}+C_{11} C^2_{12}\Big\}
 \,.
 \qquad
 \eeal{alp1estn-a}
Since (see \eq{bid9f})  $$ |(\nabla_\mu
A^\mu_{\varphi\varphi})_i|\leq |(\nabla_\mu A^\mu)_i|\leq \Cdiv,
\quad \forall (u,v)\in [0, a_i]\times[0,b_0]\,,
$$
there exists a constant $\lambda_3=\lambda_3(C_{10},\Cdiv, C_0,k)$
which does not depend on $i$ such that, for all $\lambda\geq
\lambda_3$, the previous inequality implies
\bean
 \lefteqn{ e^{ -\lambda u}\|\frac{\partial\varphi_{i+1}}{\partial v}(u,v)\|^2_{H^{k-3}(Y)}\leq C_{10}
\Bigg\{\|\frac{\partial\overline {\varphi}_{i+1}}{\partial
v}\|^2_{H^{k-3}(Y)}
 \phantom{xxxxx}
 }
 &&
\\
 &&
 \phantom{xxxxx}
 +  C_{11}\int_0^u e^{-\lambda s}\Big\{
 \epsilon \| {\partial \varphi_{i} \over\partial
  v}(s,v) \|^2_{H^{k-3}(Y)}+ C^2_{12}\Big\}ds\Bigg\}\,.
    \label{alp1estn-b}
 \eea
 Integrating in $u$, for $0\leq u\leq \hat{a}_i\leq a_0,$ one obtains
\bea
 \label{alp1estn-c}
 \lefteqn{
  \int_0^ue^{ -\lambda t}\|\frac{\partial\varphi_{i+1}}{\partial v}(t,v)\|^2_{H^{k-3}(Y)}dt\leq C_{10}
  \Bigg\{a_0\|\frac{\partial\overline {\varphi}_{i+1}}{\partial v}(v)\|^2_{H^{k-3}(Y)}
   }
 &&
 \nonumber
 \\
 &&
 +  C_{11}\int_0^u\int_0^t e^{-\lambda s}\Big\{\epsilon \| {\partial \varphi_{i} \over\partial v}\|^2_{H^{k-3}(Y)}+ C^2_{12}\Big\}dsdt\Bigg\}
 \,.
\eea
 Let
\bel{newconst}
 \zC_{\varphi}(u):=C_{10}\Big\{\sup_{i\in\N}\sup_{v\in[0,b_0]}a_0\|\frac{\partial\overline {\varphi}_{i+1}}{\partial v}(v)\|^2_{H^{k-3}(Y)} + \int_0^u\int_0^t C_{11 }C^2_{12}dsdt\Big)\Big\}
 \,.
\ee
Proceeding as before one gets rid of the ${\partial \varphi_{i}
\over\partial v}$ terms in the integral appearing in
\eq{alp1estn-b}, for all $0\le u \le  \hat{a}_i$ as follows: suppose
that
\bel{eeqq1}
 \forall\; 0\leq t\leq u\leq \hat{a}_i\leq a_0, \;\int_0^t e^{-\lambda s}\| {\partial \varphi_{i} \over\partial v}(s,v)\|^2_{H^{k-3}(Y)}ds\leq 2\zC_{\varphi}(t)
 \,,
\ee
then, equation \eq{alp1estn-c} gives
$$
 \int_0^ue^{ -\lambda t}\|\frac{\partial\varphi_{i+1}}{\partial v}(t,v)\|^2_{H^{k-3}(Y)}dt\leq \zC_{\varphi}(u)+  2a_0\epsilon C_{10}C_{11}\zC_{\varphi}(u)
 \,.
$$
Thus, one can choose  $\epsilon=
\epsilon(C_{10},C_{11},C_{12},\Cdiv, C_0,k, \lambda_3)$ small enough
so that
$$
\int_0^ue^{ -\lambda t}\|\frac{\partial\varphi_{i+1}}{\partial
v}(t,v)\|^2_{H^{k-3}(Y)}dt\leq 2\zC_{\varphi}(u) \,,
$$
which shows that \eq{eeqq1} is preserved under iteration.

For any $\lambda \ge \lambda_3|_{k=k_2}$ we obtain from
\eq{alp1estn-b} that:
$$
 \|\frac{\partial\varphi_{i+1}}{\partial v}(u,v)\|^2_{H^{k_2-3}(Y)}\leq C
 \,.
$$
Now, Sobolev's embedding implies
$$
 \ywi{{\partial \varphi_{i+1} \over\partial v}(u,v)}\le C
 \,.
$$
As this holds for all $i$, \eq{st10} proves that
\bel{goody}
 \yli{{\partial^2 \varphi_{i+1} \over\partial u \partial v}(u,v)}\le C
 \,.
\ee
By integration
$$
 \left|{\partial \varphi_{i+1} \over\partial v}(u,v)\right|\leq\left|{\partial \varphi_{i+1} \over\partial v}(0,v)\right|+Cu  \le 2\sup_{i\in\N}\ylin{{\partial \overline {\varphi}_i \over\partial v}}
 \,,
$$
provided that
\bel{third-condition}
 0\leq u \leq C^{-1}
  \big(\sup_{i\in\N}\ylin{{\partial \overline {\varphi}_i \over\partial v}}
   \big)
    \,.
\ee

Now, we choose $a_*$ to be the smallest of $a_0$ and of the four
constants appearing in the right-hand-side of inequalities
\eq{14IX13.2}, \eq{14IX13.3}, \eq{13X13-1} and \eq{third-condition}.
Recall that $ a_i$ was defined as either $a_0$  or the first number
at which the inequalities \eq{bid9m} fail   for $f_i$ or $f_{i+1}$.
So, if we assume that the inequalities   \eq{bid9m} hold at the
induction step $i$ with $a_i\ge a_*$, we conclude that $a_{i+1}\ge
a_*$ as well. Hence $\hat a_i \ge a_*\,, \forall i\in\N$. The above
implies that \eq{intpsic1} and \eq{intphic1} hold for $0\le u \le
a_*$. Since $a_*$ is independent of $k$, we have obtained:
\begin{Proposition}
\label{Piter2}
 Let $\N\ni k> (n+7)/2$, and suppose that there exists
a constant $\mcC$ such that for $(u,v)\in [0,a_0]\times [0,b_0]$ we
have
\begin{align}
 \label{Itcond23X13}
  \sup_{\mcN^-\cup\mcN^+}\Big\{ \left|\overline {\partial_vf}_i\right|& + \left|\overline {\partial_u f}_i\right| +\ykn{\overline {f}_i(u,v)}+ M_k(u,v)
 \Big\}
 \le
  \mcC
 \,.
\end{align}
There exists a constant $0<a_*=a_*(a_0,b_0,\mcC,Y)\le a_0$ such that
the fields $f_i$ satisfy the hypotheses of Lemma~\ref{Lenergy} on
$[0,a_*]\times[0,b_0]\times Y$. As a consequence, there exists a
constant $C=C(a_0,b_0,\mcC,Y,k)$
 such that for $(u,v)\in  [0,a_*]\times[0,b_0]$ we have
\bean
 \lefteqn{
 \int_0^{a_*}  \|\psi_i(s,v)\|_{H^{k }(Y)}^2 ds +
  \int_0^{b_0}  \|\varphi_i(u,s)\|_{H^{k }(Y)}^2 ds +  \|f_i(u,v)\|_{H^{k-1}(Y)}
    }
    &&
\\
&& \phantom{xxxx xxxx}
   + \|\partial_v \psi_i(u,v)\|_{H^{k-2}(Y)}  +
  \|\partial_u \varphi_i(u,v)\|_{H^{k-2}(Y)}
   \nonumber
\\
&& \phantom{xxxx xxxx}
   + \|\partial_u \psi_i(u,v)\|_{H^{k-3}(Y)}  +
  \|\partial_v \varphi_i(u,v)\|_{H^{k-3}(Y)}  \le C
 \,.
\phantom{xx xxx} \eeal{23X13.p5}
\end{Proposition}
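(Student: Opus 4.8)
The plan is to fix $a_*$ using only the low-order ($k=k_1$ and $k=k_2$) estimates, then to prove by induction on $i$ that $\hat a_i\ge a_*$ for all $i$, and finally to read off \eq{23X13.p5} by re-running, at the top Sobolev order $k$, estimates that have already been established. Concretely, $a_*$ is taken to be the smallest of $a_0$ and of the four $i$-~and $k$-independent radii appearing on the right-hand sides of \eq{14IX13.1}, \eq{13X13-1} and \eq{third-condition} (together with the windows \eq{14IX13.2} and \eq{14IX13.3} they generate); by construction $a_*$ depends only on $a_0$, $b_0$, $Y$ and on the constants $C_0$, $\Cdiv$, $\hCdiv$, $\hhCdiv$, $\tCpsi(a_0,b_0)$, $\tCvarphi(a_0,b_0)$, $\check C_\psi$ and $\overline C_0$, each of which is controlled by $\mcC$ through \eq{Itcond23X13}. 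I would fix $\lambda$ to be the largest of $\lambda_0,\dots,\lambda_3$ and $\lambda_\epsilon$ evaluated at $k=k_2$, and $\epsilon$ to be the smallest of all the $\epsilon$'s occurring in the preservation arguments of the preceding sections. Choosing $f_0$ (and shrinking $a_0$ if necessary) so that \eq{bid9m} holds on all of $\Omega_{a_0,b_0}$, and enlarging $\lambda$ if needed, the base case $i=0$ is immediate: $a_0\ge a_*$, and the auxiliary inequalities \eq{fiint}, \eq{intpsic}, \eq{phipresint}, \eq{st3}, \eq{12X13.8} and \eq{eeqq1} all hold at $i=0$ because $f_0$ is smooth.

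For the inductive step I would assume that \eq{bid9m} holds for $f_i$ (and for all earlier indices) with $a_i\ge a_*$, and that the auxiliary inequalities hold for $f_i$ on $\Omega_{\hat a_i,b_0}\supseteq\Omega_{a_*,b_0}$. Then Lemma~\ref{Lenergy2} applied with $k=k_1$, followed by Sobolev's embedding, gives \eq{st1}, whence \eq{st2}--\eq{st4} yield $\|\varphi_{i+1}(u,v)\|_{C^1(Y)}\le 2C_0$ for $u\le a_*$; the $\partial_u\psi$--chain \eq{st5}, \eq{st3}, \eq{st8}, \eq{st8n} (at level $k=k_2$) gives $\|\psi_{i+1}(u,v)\|_{C^1(Y)}\le 2C_0$ there; the $\Psi_i$--argument culminating in \eq{21X13.3} gives the pointwise bound on $\partial_u\psi_{i+1}$; and the $v$--differentiated chain \eq{st10}, \eq{alp1estn-b}, \eq{goody}, \eq{third-condition} gives the pointwise bound on $\partial_v\varphi_{i+1}$. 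Combining these with \eq{9X13.1} and the definition \eq{Cdiv} of $\Cdiv$, one obtains $\|f_{i+1}(u,v)\|_{W^{1,\infty}(Y)}\le 4C_0$ and $|(\nabla_\mu A^\mu)_{i+1}|\le\Cdiv$ on $\Omega_{a_*,b_0}$; since these bounds are strictly inside the thresholds of \eq{bid9m} (for instance $2C_0<4C_0$, and $\Cdiv$ carries a built-in unit of slack), a continuity argument forces $a_{i+1}\ge a_*$, closing the induction. Hence $\hat a_i\ge a_*$ for every $i$, which is precisely the hypothesis \eq{bid9m} of Lemma~\ref{Lenergy} on $\Omega_{a_*,b_0}$, and it also gives \eq{intpsic1} and \eq{intphic1} for $0\le u\le a_*$.

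It then remains to assemble \eq{23X13.p5}. Since \eq{bid9m} now holds on the fixed domain $\Omega_{a_*,b_0}$ for all $i$, I would re-run the preservation arguments of Lemmata~\ref{Lenergy}--\ref{Lenergy2}, of \eq{st3} and of \eq{eeqq1} at the top order $k$, now with a $k$-dependent $\lambda$ (the domain being already frozen): Lemma~\ref{Lenergy} together with the preservation of \eq{fiint} (Lemma~\ref{Lnext}) bounds $E_{k,\lambda}[\varphi_i,u]$ and $\mcE_{k,\lambda}[\psi_i,v]$ by an $i$-independent constant, which through \eq{bid7}--\eq{bid8} controls $\int_0^{b_0}\|\varphi_i(u,s)\|^2_{H^k(Y)}\,ds$ and $\int_0^{a_*}\|\psi_i(s,v)\|^2_{H^k(Y)}\,ds$; \eq{bid31} controls $\|f_i(u,v)\|_{H^{k-1}(Y)}$; solving \eq{fos} for $\partial_v\psi$ and for $\partial_u\varphi$ algebraically in terms of $f$ and its $Y$-tangential derivatives, and combining with this $H^{k-1}$ bound and the Moser inequalities \eq{Moser1}, \eq{Moser3}, controls $\|\partial_v\psi_i(u,v)\|_{H^{k-2}(Y)}$ and $\|\partial_u\varphi_i(u,v)\|_{H^{k-2}(Y)}$; \eq{st3} controls $\|\partial_u\psi_i(u,v)\|_{H^{k-3}(Y)}$; and \eq{alp1estn-b} together with the preservation of \eq{eeqq1} controls $\|\partial_v\varphi_i(u,v)\|_{H^{k-3}(Y)}$. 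Adding these estimates yields \eq{23X13.p5}.

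The step I expect to be the main obstacle is ensuring that $a_*$ is genuinely $k$-independent: the ``radius'' estimates feeding its definition must be carried out at the fixed Sobolev orders $k_1$ and $k_2$, where the constants $C$ — although they may have to be enlarged when passing from $\varphi_{i+1}$ to $\psi_{i+1}$, and from the energy estimates to the pointwise ones — stay independent of $i$ and of the final index $k$; the level-$k$ estimates must be invoked only afterwards, on the already-fixed domain $\Omega_{a_*,b_0}$. A secondary bookkeeping point is that the induction has to carry simultaneously \emph{all} the inequalities shown above to be ``preserved by iteration'', which is what forces the single choice of $\lambda$ as the largest, and of $\epsilon$ as the smallest, of the finitely many constants appearing in those arguments.
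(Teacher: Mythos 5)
Your proposal reproduces the paper's argument: $a_*$ is picked as the smallest of $a_0$ and the four $i$- and $k$-independent radii from \eq{14IX13.2}, \eq{14IX13.3}, \eq{13X13-1}, \eq{third-condition}, all obtained at the fixed low orders $k_1,k_2$; the induction $a_i\ge a_*\Rightarrow a_{i+1}\ge a_*$ (closed by the slack in $2C_0<4C_0$ and in the ``$+1$'' built into $\Cdiv$) then forces $\hat a_i\ge a_*$, after which \eq{23X13.p5} is read off, at the now-frozen domain, from Lemmata~\ref{Lenergy}--\ref{Lenergy2}, \eq{bid31}, \eq{st3}, \eq{alp1estn-b}, and the field equation used to replace the transversal derivatives $\partial_u\varphi$, $\partial_v\psi$ by tangential ones. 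This is essentially the same route the paper takes, with the induction and the $k$-independence of $a_*$ spelled out a little more formally than in the text.
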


\begin{Remark}{\rm
  \label{R1III14.1}
  The result remains true for $k\in \R$; this can be established by commuting the equation with an appropriate
  pseudo-differential operator in the $Y$-variables. However, this will be of no concern to us here.
}\qed\end{Remark}

\section{Convergence of the iterative sequence}\label{Convergence
iterative sequence}

To prove convergence of the sequence, we set $$\delta f_{i+1}:=
f_{i+1}-f_i\,. $$  We have the equation \bel{cis1} (A^\mu
\nabla_\mu)_i\delta f_{i+1} = \delta \newg_{i}\,, \ee with
$$\delta \newg_{i}:= \newg_{i} -\newg_{i-1}-\left((A^\mu \nabla_\mu)_{i}-(A^\mu
\nabla_\mu)_{i-1}\right)f_i\,. $$
The standard identity
$$
 h(x)-h(y)=
 (x-y)\int_0^1h'(tx+(1-t)y)dt
 \,,
$$
applied both to $\newg_{i}-\newg_{i-1}$, and   $(A^\mu
\nabla_\mu)_{i}-(A^\mu \nabla_\mu)_{i-1}$, leads to the
straightforward estimate, for all $\lambda$ and $0\leq a\leq a_*$,
$$
    \|e^{-\lambda(u+v)}\delta \newg_{i}\|_{L^2([0,a]\times
    [0,b_0]\times Y)}\le C_1 \|e^{-\lambda(u+v)}\delta
    f_{i}\|_{L^2([0,a]\times [0,b_0]\times Y)}
    \,,
$$
with a constant $C_1$ which depends upon
$\sup_i\|f_i\|_{W^{1,\infty}}$, and which is independent of
$\lambda$ and of $i$. Here we reset the numbering of the constants,
so that the constant $C_1$ of this section has nothing to do with
the constant $C_1$ of the previous section, etc.

We apply the energy inequality \eq{bid5} with $k=0$; there are  then
no commutator terms  in \eq{volest1}, leading to
\bean \lefteqn{\|e^{-\lambda (u+v)}\delta
\varphi_{i+1}(u)\|_{L^2([0,b_0]\times Y)} + \|e^{-\lambda
(u+v)}\delta \psi_{i+1}(v)\|_{L^2([0,a_*]\times Y)}} && \\
 && \le C_2\bigg\{\|e^{-\lambda v}\delta
\overline {\varphi}_{i+1}\|_{L^2([0,b_0]\times Y)} + \|e^{-\lambda
u}\delta\overline {\psi}_{i+1}\|_{L^2([0,a_*]\times Y)}\nonumber\\
&&+
\left(\|(A^\mu\nabla_\mu)_i\|_{L^\infty}
-c\lambda\right)\|e^{-\lambda(u+v)}\delta
f_{i+1}\|^2_{L^2([0,a_*]\times [0,b_0]\times Y)} \nonumber
\\
&& 
\phantom{xxx} +2\|e^{-\lambda(u+v)}\delta
f_{i+1}\|_{L^2([0,a_*]\times [0,b_0]\times Y)}
\|e^{-\lambda(u+v)}\delta \newg_{i}\|_{L^2([0,a_*]\times
[0,b_0]\times Y)} \bigg\}
    \nonumber
\\
 && \le C_2\bigg\{\|e^{-\lambda v}\delta
\overline {\varphi}_{i+1}\|_{L^2([0,b_0]\times Y)} + \|e^{-\lambda
u}\delta\overline {\psi}_{i+1}\|_{L^2([0,a_*]\times Y)} \nonumber\\
&&+
 \left(\|(A^\mu\nabla_\mu)_i\|_{L^\infty}
+C_1 -c\lambda\right)\|e^{-\lambda(u+v)}\delta
f_{i+1}\|^2_{L^2([0,a_*]\times [0,b_0]\times Y)}\bigg\}
\nonumber \\
&& 
\phantom{xxx} +C_1C_2 \|e^{-\lambda(u+v)}\delta
f_{i}\|^2_{L^2([0,a_*]\times [0,b_0]\times Y)} \,.
\eeal{cis2}
%

Now, for the purpose of the proof of Theorem~\ref{T23X13.2}, the
sequences   $(\overline {\varphi}_{i})_{i\in\N}$ and  $(\overline
{\psi}_{i})_{i\in\N}$ are Cauchy sequences in the spaces
$H^k([0,b_0]\times Y)$ and $H^k([0,a_0]\times Y)$ respectively and
thus in $L^2([0,b_0]\times Y)$ and $L^2([0,a_0]\times Y)$.
  Therefore, without loss of generality they can be replaced by subsequences,
  still denoted as  $(\overline {\varphi}_{i})_{i\in\N}$ and  $(\overline {\psi}_{i})_{i\in\N}$,
such that
\bel{24X13.p5}
 C_2 \|\delta \overline {\varphi}_{i }\|_{L^2([0,b_0]\times Y)}\leq \frac{1}{2^{i+1}}  \quad \text{and} \quad C_2\|\delta \overline {\psi}_{i }\|_{L^2([0,a_0]\times Y)}\leq \frac{1}{2^{i+1}} \,.
\ee
Assuming that \eq{24X13.p5} holds, we have:
\bean \lefteqn{\|e^{-\lambda (u+v)}\delta
\varphi_{i+1}(u)\|_{L^2([0,b_0]\times Y)} + \|e^{-\lambda
(u+v)}\delta \psi_{i+1}(v)\|_{L^2([0,a_*]\times Y)}} && \\
 && \le   \frac{1}{2^{i}}+C_2
 \left(\|(A^\mu\nabla_\mu)_i\|_{L^\infty}
+C_1 -c\lambda\right)\|e^{-\lambda(u+v)}\delta
f_{i+1}\|^2_{L^2([0,a_*]\times [0,b_0]\times Y)}
\nonumber \\
&& 
\phantom{xxx} +C_1C_2 \|e^{-\lambda(u+v)}\delta
f_{i}\|^2_{L^2([0,a_*]\times [0,b_0]\times
Y)} \,.
\eeal{cis2??}
In particular, given any $0<\alpha<1/2$, for all $\lambda $
sufficiently large  and for all $(u,v)\in [0,a_*]\times[0,b_0]$, we
find
\beqar
 \|e^{-\lambda (u+v)}\delta \varphi_{i+1}(u)\|^2_{L^2([0,b_0]\times Y)}
 & \le&
 \frac{1}{2^{i}}+ C_1C_2 \|e^{-\lambda(u+v)}\delta f_{i}\|^2_{L^2([0,a_*]\times[0,b_0]\times Y)}
\,,
 \nonumber
 \\ \label{cis3}
\\
 \|e^{-\lambda (u+v)}\delta \psi_{i+1}(v)\|^2_{L^2([0,a_*]\times Y)} &\le&\frac{1}{2^{i}}+ C_1 C_2\|e^{-\lambda(u+v)}\delta
 f_{i}\|^2_{L^2([0,a_*]\times [0,b_0]\times Y)}
\,,
 \nonumber
 \\
 \label{cis3.0}
\\
 \|e^{-\lambda(u+v)}\delta f_{i+1}\|^2_{L^2([0,a_*]\times
 [0,b_0]\times Y)} &\le& \frac{1}{C_2\cdot2^{i}}+\alpha \|e^{-\lambda(u+v)}\delta
 f_{i}\|^2_{L^2([0,a_*]\times [0,b_0]\times Y)}
  \,;
 \nonumber
 \\
\label{cis4} \eeqar
Here $\lambda $ has to be chosen so that
\bel{cis6}
 0<\frac {C_1}{c\lambda-\|(A^\mu\nabla_\mu)_i\|_{L^\infty} -C_1}< \alpha <\frac 12 \,.
\ee
We can now make use of the elementary fact: If $(U_n)_{n\in \N}$ is
a sequence of positive real numbers satisfying $
 U_{n+1} \leq \alpha U_n+ \frac{\beta}{2^n}
 \,,
$
then
\bel{24X13.p6}
 U_n \leq \al^n U_0 +
 2\beta\left(\frac{(1/2)^n-(\alpha )^n}{1-2\alpha }\right)
 \,.
\ee
\Eqs{cis4}{24X13.p6} show that
$$
 \mbox{$\sum e^{-\lambda(u+v)}\delta f_i$ converges in
    $L^2([0,a_*]\times [0,b_0]\times Y)$.}
$$
This implies that $f_i$ converges in the same space to some function
$f$.  It further follows from \eq{cis3} that for all $0\le u\le a_*$
the sum $\sum_i e^{-\lambda (u+v)}\delta \varphi_{i}(u)$ converges
in ${L^2([0,b_0]\times Y)}$, uniformly in $u$; this implies  uniform
convergence of $\varphi_i(u)$ to some function $\varphi(u)$ in that
topology. Similarly for all $v\in [0,b_0]$ the sequence $\psi_i(v)$
converges, uniformly in $v$, to some function $\psi(v)$ in
 ${L^2([0,a_*]\times Y)}$.

For $k>(n+7)/2$ the estimates of the previous section apply and show
that the sequence of derivatives $\nabla f_i$
is uniformly bounded so that, by Arzela-Ascoli, a subsequence
$f_{i_j}$
can be chosen which converges uniformly to some function which is
Lipschitz continuous in all variables
on $[0,a_*]\times[0,b_0]\times Y$. It follows that $f$ has a
Lipschitz continuous representative, this representative will be
chosen from now on. Similarly, $f_{i_j+1}$ has a subsequence, still
denoted by the same symbol, uniformly converging to some Lipschitz
continuous function $f'$. Since $f_{i_j+1}$ converges to $f$ in
$L^2$ we must have $f' =f$, thus $f_{i_j+1}$ converges uniformly to
$f$.

Now, by Proposition~\ref{Piter2} the sequence $\fij(u,v)$ is
bounded in $H^{k-1}(Y)$, and converges uniformly to the continuous
function $f(u,v)$. By weak compactness
$$
 f(u,v) \equiv \big(\varphi(u,v ),\psi(u,v )\big)\equiv \big(\varphi(u,v,\cdot),\psi(u,v,\cdot)\big) \in H^{k-1}( Y)
 \,.
$$
By interpolation, for every $s<k-1$ we have
\bel{cis8}
 f_{i_j}(u,v)
 \,,
 \
  f_{i_j+1}(u,v)\to f (u,v)\ \mbox{ in } \ H^s( Y)
 \,,
\ee
uniformly in $u$ and $v$.
In particular
\bel{cis8.2}
 f_{i_j}(u,v)\,,
 \
  f_{i_j+1}(u,v)\to f (u,v)\ \mbox{ in } \ C^1( Y)
 \,,
\ee
uniformly in $u$ and $v$. Thus both $\varphi$ and $\psi$ are
differentiable with respect to the $x^A$'s.

Using the notation of Section~\ref{Sec:enid}, \eq{itpro} now shows
that the sequence $\partial_u \varphi_{i_j+1}(u,v)$ converges
uniformly to the Lipschitz-continuous function
$$
 (*):=
 (A^u_{\varphi\varphi})^{-1}\Big[-A^B_{\varphi\varphi}\nabla_B\varphi
 -A^B_{\varphi\psi}\nabla_B\psi +\newg_\varphi \Big] -
 \gamma_{\varphi\varphi,u}\varphi
 \,.
$$
Similarly $\partial_v \psi_{i_j+1}(u,v)$ converges uniformly to a
Lipschitz-continuous function, as determined by the right-hand-side
of the equation involving $\partial_v \psi$. From
\bel{1III14.1}
 \underbrace{\varphi_{i_j+1}(u_2,\cdot)}_{\to
\varphi(u_2,\cdot)}-\underbrace{\varphi_{i_j+1}(u_1,\cdot)}_{\to
\varphi(u_1,\cdot)}=\int_{u_1}^{u_2}\underbrace{\partial_u
\varphi_{i_j+1}(s,\cdot)}_{\to (*)} ds \ee
one finds that $\varphi$ is differentiable in $u$. Similarly $\psi$
is differentiable in $v$, and \eq{fos} holds.

From what has been said we have
\beal{24X13.p10}
 &
f\in L^\infty\Big([0,a_*]\times [0,b_0]; H^{k-1}(Y)\Big) \,,
 &
\\
 &
    \partial_A f\,, \ \partial_u\varphi\,, \ \partial_v \psi \ \in
    L^\infty\Big([0,a_*]\times [0,b_0]; H^{k-2}(Y)\Big)
    \,,
\\
 &
     \partial_v\varphi\,, \ \partial_u \psi \ \in
    L^\infty\Big([0,a_*]\times [0,b_0]; H^{k-3}(Y)\Big)
    \,.
 &
\eeal{24X13.p10x}
Thus
\bel{24X13.p12}
   f\in\cap _{0\le i \le 1}
    W^{i,\infty}\Big([0,a_*]\times [0,b_0]; H^{k-2-i}(Y)\Big)\subset
    C^{0,1}([0,a_*]\times [0,b_0]\times Y)
    \,.
\ee

We note that the new field
\bel{2III14.1}
 f'= \left(
             \begin{array}{c}
               \varphi' \\
               \psi'
             \end{array}
           \right)
           \,,
           \
           \mbox{where}
           \
 \varphi'= \left(
             \begin{array}{c}
               \varphi \\
              \partial_v \varphi \\
              \partial_u  \varphi \\
               \partial_A\varphi
             \end{array}
           \right)
           \
           \mbox{and}
           \
 \psi'= \left(
             \begin{array}{c}
               \psi \\
               \partial_v \psi \\
               \partial_u \psi \\
               \partial_A \psi
             \end{array}
           \right)
\ee
is defined on $[0,a_*]\times [0,b_0]\times Y$ and solves a system
of equations satisfying our structure conditions. By what has been
said the initial data are of $H^{k-3}$ differentiability class. So
if $k-3>(n+7)/2$, the argument leading to \eq{24X13.p12} applies to
$f'$ and gives
\bean
   f&\in&
L^\infty\Big([0,a_*]\times [0,b_0]; H^{k-1}(Y)\Big) \cap _{0< i \le
2}
    W^{i,\infty}\Big([0,a_*]\times [0,b_0]; H^{k-3i}(Y)\Big)
\\
&&
     \subset
     C^{1,1}([0,a_*]\times [0,b_0]\times Y)
    \,.
\eeal{24X13.p12x}
This argument can be applied $k_1$ times, where
\bel{2III14.2}
 \mbox{$k_1$ is the largest number such that $k- 3 k_1>(n+7)/2$.}
\ee
It ensues that
\bean
   f&\in&
L^\infty\Big([0,a_*]\times [0,b_0]; H^{k-1}(Y)\Big) \cap _{0< 3i \le
k - \frac{n+7}2}
    W^{i,\infty}\Big([0,a_*]\times [0,b_0]; H^{k-3i}(Y)\Big)
\\
&&
     \subset
     C^{k_1-1,1}([0,a_*]\times [0,b_0]\times Y)
    \,,
\eeal{24X13.p13}
where the last inclusion holds provided that $k_1\ge 1$.

\begin{Remark}{\rm
 \label{R8V14.1}
For $k>6+(n+7)/2$ the first line of \eq{24X13.p13} can be partly
improved to
\bea
   f&\in &
C\Big([0,a_*]\times [0,b_0]; H^{k-1}(Y)\Big) \cap _{0< 3i \le k -
\frac{n+7}2 -6}
    C^{i }\Big([0,a_*]\times [0,b_0]; H^{k-3i}(Y)\Big)%
     \,.
\nonumber
\\
\eeal{24X13.p13a}
To see this, note first that the map
\bel{1III14.2}
 (u,v)\mapsto \partial_u^i \partial_v^jf(u,v,\cdot) \in H^{k-3(i+j)}(Y)
\ee
is weakly continuous, being the limit of a bounded sequence of
continuous maps. Using the equation satisfied by $f$ and the trivial
identities
\beaa &
 \partial_u^i \partial_v^j\varphi(u,v)=\partial_u^i \partial_v^j\varphi(0,v)+\int_0^u \big( \partial_u \partial_u^i \partial_v^j\varphi(s,0) + \int_0^v \partial_u \partial_v\partial_u^i \partial_v^j\varphi(s,t)dt\big)ds
 \,,
 &
\\
 &
 \partial_u^i \partial_v^j\psi(u,v) = \partial_u^i \partial_v^j\psi(u,0) + \int_0^v \big(\partial_v \partial_u^i \partial_v^j\psi (0,t) + \int_0^u \partial_u \partial_v \partial_u^i \partial_v^j\psi(s,t) ds\big) dt
 \,,
 &
\eeaa
one sees  that the function
$$
 (u,v)\mapsto \|\partial_u^i \partial_v^jf(u,v,\cdot)\|_{  H^{k-3(i+j)}(Y)}
$$
is continuous. This, together with standard arguments, implies that
the map \eq{1III14.2} is continuous, and \eq{24X13.p13a} easily
follows.
}\qed\end{Remark}

\section{Existence and uniqueness}
 \label{sec:initial-data}
 \label{sub:Initial-data}

In order to complete the proof of existence of a solution for the
system (\ref{fos}), we need to initialize the iteration  and make
sure that condition \eq{Itcond23X13} is fulfilled. Recall that in
the current setting
$$
 \mcN^-= \{0\}\times[0,b_0]\times Y\,,\quad
 \mcN^+= [0,a_0]\times\{0\}\times Y
 \,.
$$
We have the following:

\begin{Theorem}
 \label{T23X13.2}
 Let $Y$ be a $(n-1)-$dimensional compact manifold without boundary, let $a_0$ and $b_0$ two positive real numbers and set
$$
 \Omega_0=[0,a_0]\times[0,b_0]\times Y
$$
Consider the symmetric hyperbolic system \eq{fos} on $\Omega_0$ with
the splitting \eq{bid2.5} and assume that \eq{bid3} holds. Let
$\overline  \varphi$ and $\overline  \psi$ be  defined respectively
on $\mcN^-$ and $\mcN^+$, providing Cauchy data for \eq{fos}:
\bel{ID}
 \left\{
  \begin{array}{l} \varphi= \overline \varphi \quad \text{on} \quad \mcN^-
  \\
 \psi= \overline \psi \quad \text{on} \quad \mcN^+
 \end{array}
 \right.
\,. \ee
Let $\ell\in\N$, $\ell> \frac{n+9}2$ and suppose that
\bel{23X13.1}
 \overline \varphi\in \cap_{0\leq j\leq \ell}C^j([0,b_0];H^{\ell-j}(Y))
 \quad
 \mbox{and}
 \quad
 \overline \psi\in \cap_{0\leq j\leq \ell}C^j([0,a_0];H^{\ell-j}(Y))
 \,.
\ee
Assume that the  \emph{transport equations}
\begin{eqnarray}
   &
   A^\mu_{\varphi\varphi}|_{v=0}\partial_\mu \varphi|_{v=0}=  \big(-A^{\mu}_{\varphi \psi}\partial_\mu \psi + G_\varphi\big)\big|_{v=0}
   \,,
   &
 \label{23X13.p1}
\\
&
  A^\mu_{\psi \psi}|_{u=0}\partial_\mu \psi|_{u=0}=  \big(-A^{\mu}_{ \psi\varphi}\partial_\mu \varphi + G_\psi\big)\big|_{u=0}
 \,,
 &
 \label{23X13.p2}
\end{eqnarray}
with initial data
$$
 \mbox{$\varphi|_{u=v=0} = \overline \varphi|_{v=0}$ and $\psi|_{u=v=0} = \overline \psi|_{u=0}$,}
$$
have a global solution on $([0,a_0]\times Y)\cup ([0,b_0]\times Y)$.
Then there exists an $\ell$-independent constant $
  a_* \in(0,a_0]
$ such that the Cauchy problem \eq{fos}, \eq{ID} has a
 solution $f$  defined on
$
 [0,a_*]\times[0,b_0]\times Y
$
satisfying \eq{24X13.p10}-\eq{24X13.p10x} with $k=\ell-1$. If $\ell>\frac{n+12}2$ we further have
\bean
   f&\in&
L^\infty\Big([0,a_*]\times [0,b_0]; H^{\ell-2}(Y)\Big) \cap_{0< 3i
\le \ell - \frac{n+9}2}
    W^{i,\infty}\Big([0,a_*]\times [0,b_0]; H^{\ell-1-3i}(Y)\Big)
\\
&&
     \subset
     C^{\ell_1-1,1}([0,a_*]\times [0,b_0]\times Y)
    \,,
\eeal{4III14.t1}
where $\ell_1$ is the largest number such that
$\ell-3\ell_1>\frac{n+9}2$.
The solution $f$ is unique within the class of $C^1$ solutions, and
is smooth if  $\overline \varphi$ and $\overline \psi$ are.
\end{Theorem}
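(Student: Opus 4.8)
The plan is to derive Theorem~\ref{T23X13.2} by assembling the a~priori bounds of Propositions~\ref{Piter}--\ref{Piter2} with the convergence analysis of Section~\ref{Convergence iterative sequence}, after setting up a suitable approximating sequence. First I would choose smooth data $\overline\varphi_i\to\overline\varphi$ and $\overline\psi_i\to\overline\psi$ converging in the topologies of \eqref{23X13.1} (possible by density of smooth functions), and then solve the transport equations \eqref{23X13.p1}--\eqref{23X13.p2} for the data components not prescribed on $\mcN^+$, respectively $\mcN^-$. Since \eqref{23X13.p1}--\eqref{23X13.p2} are first-order symmetric hyperbolic systems in a single variable ($u$ along $\mcN^+$, $v$ along $\mcN^-$) and by hypothesis admit a global solution for $(\overline\varphi,\overline\psi)$ on the compact sets $\mcN^\pm$, continuous dependence on the data guarantees that for $i$ large the corresponding solutions also exist globally and converge, together with their $(u,v)$- and $Y$-derivatives, to those of $(\overline\varphi,\overline\psi)$. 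This yields full characteristic data $\overline f_i\to\overline f$ with $C_0$ of \eqref{bid9def} finite and, writing $k:=\ell-1$, a uniform bound $\mcC$ on the quantities appearing in \eqref{Itcond23X13}. I would then initialise the iteration by a smooth extension $f_0$ of $\overline f_0$ to $\Omega_0$ and define $f_{i+1}$ as the global smooth solution of the linear problem \eqref{itpro}, which exists by~\cite{RendallCIVP}.

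Next I would invoke Proposition~\ref{Piter2}: since $\ell>\frac{n+9}2$ we have $k=\ell-1>\frac{n+7}2$, so there is a constant $a_*=a_*(a_0,b_0,\mcC,Y)\le a_0$ which, as the construction of $a_*$ in the paragraphs preceding Proposition~\ref{Piter2} shows, depends only on the $C^1$-size of the data (through $\mcK$ and $C_0$) and is therefore independent of $\ell$, and such that all $f_i$ satisfy the hypotheses and conclusions of Lemmata~\ref{Lenergy}--\ref{Lenergy2} and the bound \eqref{23X13.p5} on $\Omega_*:=[0,a_*]\times[0,b_0]\times Y$. With these uniform bounds in hand I would run the convergence argument verbatim: after passing to a subsequence so that \eqref{24X13.p5} holds, the $k=0$ energy inequality \eqref{cis2} and the elementary recursion \eqref{24X13.p6} give that $\sum_i e^{-\lambda(u+v)}\delta f_i$ converges in $L^2(\Omega_*)$, with $\varphi_i(u,\cdot)$ and $\psi_i(\cdot,v)$ converging uniformly in $u$, respectively $v$; Arzel\`a--Ascoli, weak compactness and interpolation against the $H^{k-1}(Y)$-bounds of \eqref{23X13.p5} then show that the limit $f$ is Lipschitz, lies in $H^{k-1}(Y)$ fibrewise, and is the $C^1(Y)$-limit of $f_{i_j}$ and of $f_{i_j+1}$ uniformly in $(u,v)$. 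Passing to the limit in \eqref{itpro} and using the Newton--Leibniz identities \eqref{1III14.1}, $f$ solves \eqref{fos} with data \eqref{ID} and satisfies \eqref{24X13.p10}--\eqref{24X13.p10x} with $k=\ell-1$.

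For the improved regularity when $\ell>\frac{n+12}2$ I would apply the bootstrap behind \eqref{24X13.p13}: the enlarged field $f'$ of \eqref{2III14.1} solves a symmetric hyperbolic system of the same structural type \eqref{bid2.5}--\eqref{bid3} with data one differentiability level lower, and iterating the construction above on $f'$ the maximal admissible number of times gives \eqref{4III14.t1}, with $\ell_1$ the largest integer such that $\ell-3\ell_1>\frac{n+9}2$. For uniqueness, given two $C^1$ solutions I would subtract the equations, use $h(x)-h(y)=(x-y)\int_0^1 h'(tx+(1-t)y)\,dt$ for the $f$-dependence of the coefficients $A^\mu$, of the connection, and of $\newg$, to write the difference $\delta f$ as a solution of a \emph{linear} symmetric hyperbolic system with bounded coefficients and vanishing characteristic data on $\mcN$, and then apply the basic energy identity of Chapter~\ref{Sec:enid} with $k=0$ on $\Omega_{a,b_0}$ with $\lambda$ large (exactly as in \eqref{cis2}) to conclude $\int_{\Omega_{a,b_0}}e^{-\lambda(u+v)}|\delta f|^2\,d\mu=0$ for all $0\le a\le a_*$, hence $\delta f\equiv0$. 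Smoothness of $f$ when $\overline\varphi,\overline\psi$ are smooth then follows because $a_*$ is $\ell$-independent: the construction produces on the \emph{same} $\Omega_*$ a solution in $H^{\ell-1}(Y)$ fibrewise and $C^{\ell_1-1}$ in $(u,v)$ for every $\ell$, and uniqueness forces these to coincide, so the common solution is $C^\infty$ on $\Omega_*$.

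The main obstacle, and essentially the only genuinely new ingredient beyond the machinery already established, is the first step: producing the mollified characteristic data with the \emph{uniform} control demanded by \eqref{Itcond23X13}. This rests on the stability under perturbation of the global solvability of the transport equations \eqref{23X13.p1}--\eqref{23X13.p2} — the transport solutions may blow up in finite ``time'', so one must verify that a sufficiently accurate approximation of the data keeps them inside their interval of existence on all of $\mcN^\pm$ — together with the bookkeeping needed to see that the norms entering \eqref{Itcond23X13}, in particular $M_k$ and the mixed $u$- and $v$-derivatives of $\overline f_i$, are controlled by \eqref{23X13.1}; everything downstream is an assembly of estimates already proved in the preceding sections.
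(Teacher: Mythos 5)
Your plan matches the paper's proof essentially step for step: mollify the characteristic data, solve the transport equations for the complementary components (using continuous dependence to preserve global solvability on $\mcN^\pm$ and uniform $C^1$ bounds), verify finiteness of $\mcC$ in \eqref{Itcond23X13}, feed the iterates into Proposition~\ref{Piter2} and the convergence argument of Section~\ref{Convergence iterative sequence}, run the bootstrap behind \eqref{24X13.p13} for the improved regularity, deduce uniqueness from a $k=0$ energy estimate, and obtain smoothness from the $\ell$-independence of $a_*$. The ``main obstacle'' you flag but do not carry out is exactly what the paper fills in to close the $\mcC$ bound: it restricts the energy inequality \eqref{dzi11.6} and its $u$-differentiated analogue to $u=0$ (and the symmetric statements on $\mcN^+$) to control $\|\psi_i(0,v)\|_{H^{\ell-1}(Y)}$ and $\|\partial_u\psi_i(0,v)\|_{H^{\ell-2}(Y)}$ by the approximating data, which is how the mixed derivatives and $H^k(Y)$ norms entering \eqref{Itcond23X13} are uniformly bounded.
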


\begin{Remark}{\rm
 \label{R10V2014.1}{\rm
Some remarks about the hypothesis that $Y$ is \emph{compact without
boundary} are in order. First, our analysis applies to compact
manifolds with boundary without further due when suitable boundary
conditions are imposed on the boundary.
For instance, in case of systems obtained by rewriting the wave
equation as in Section~\ref{s14XI13.1}, Dirichlet, Neumann or
maximally dissipative boundary conditions  at $\partial Y$ are
suitable. Next, again for systems of wave equations, the case of
non-compact $Y$'s can be reduced to the compact one as follows: let
$p\in Y$, we replace $Y$ by a small conditionally compact
neighborhood of $p$ with smooth boundary. We solve the equation on
the new $Y$ imposing e.g. Dirichlet conditions on
$[0,a_0]\times[0,b_0]\times\partial Y$. Arguments based on
uniqueness in domains of dependence show that there is a one-sided
space-time neighborhood of the generators of $\mcN_\pm$ through $p$
on which the solution is independent of the boundary conditions
imposed. This provides the desired solution on the neighborhood.
Returning to the original $Y$, the union of such neighborhoods with
the corresponding solutions yields the desired solution.
}
}\qed\end{Remark}

\begin{proof}
Let $(\overline \varphi_i)_{i\in\N}$ and $(\overline
\psi_i)_{i\in\N}$ be any two sequences of smooth initial data which
converge towards $\overline \varphi$ and $\overline \psi$
respectively in the spaces
$$
\cap_{0\leq j\leq
\ell}C^j([0,b_0];H^{\ell-j}(Y))\quad \mbox{ and}\quad
 \cap_{0\leq j\leq \ell}C^j([0,a_0];H^{\ell-j}(Y))
 \,.
 $$
Set $f_{-1} \equiv 0$, and for  $i\in\N$ define
$
 \overline{f}_i=  (
                    \overline \varphi_i
                    ,
                    \overline \psi_i)
.$
Given $f_i$, we let $f_{i+1}$ to be the solution of the linear
system \eq{itpro} with Cauchy data
$$
 \left\{
  \begin{array}{l} \varphi_{i+1}= \overline \varphi_{i+1} \quad \text{on} \quad \mcN^-
  \\
 \psi_{i+1}= \overline \psi_{i+1} \quad \text{on} \quad \mcN^+
 \end{array}
 \right.
\,.
$$
We wish to apply    Proposition \ref{Piter2} with $k= \ell-1$, for
this we need to show that the constant $\mcC$ of \eq{Itcond23X13} is
finite.  We start by noting that the sequence
$
 (\overline \psi_i)_{i\in\N} 
$ has been chosen to converge in the space $
 \cap_{0\leq j\leq 2}C^j([0,a_0];H^{\ell-j}(Y))
$, and since $ \ell>\frac{n+7}{2} $ the continuous embedding
$$
 \cap_{0\leq j\leq 2}C^j([0,a_0];H^{\ell-j}(Y)) \hookrightarrow \cap_{0\leq j\leq 2}C^j([0,a_0];W^{1,\infty}(Y))
$$
ensures that this convergence also holds in $ \cap_{0\leq j\leq
2}C^j([0,a_0];W^{1,\infty}(Y)) $. Since convergent sequences are
bounded, we obtain that
\bel{24X13.p1}
 \sup_{i\in \N,u\in [0,a_0]}
 \Big(
 \|\overline {\psi}_i(u)\|_{W^{1,\infty}(Y)}+\| \partial_u\overline \psi_i(u)\|_{W^{1,\infty}(Y)}+\|\partial_u^2\overline \psi_i(u)\|_{W^{1,\infty}(Y)}
 \Big)
 < \infty
\,. \ee
Similarly,
\bel{24X13.p2}
 \sup_{i\in \N,v\in [0,b_0]}
 \Big(
 \|\overline {\varphi}_i(v)\|_{W^{1,\infty}(Y)}+\| \partial_v\overline \varphi_i(v)\|_{W^{1,\infty}(Y)}+\|\partial_v^2\overline \varphi_i(v)\|_{W^{1,\infty}(Y)}
 \Big)
 < \infty
\,. \ee

By hypothesis, the transport equations with the initial data
$(\overline  \varphi, \overline  \psi)$ have global solutions on
$\mcN^\pm$. Continuous dependence of solutions of symmetric
hyperbolic systems upon data implies that the transport equations
with $(\overline  \varphi_i, \overline  \psi_i)$ will also have
global solutions on $\mcN^\pm$ for all $i$ large enough, bounded in
$C^1(\mcN)$ uniformly in $i$. We can thus use \eq{dzi11.6} at  $u=0$
to obtain, for all $i\in\N$ and all $\lambda$ sufficiently large,
\beaa
 \lefteqn{
    e^{-\lambda v}\|\psi_{i}(0,v)\|^2_{H^{\ell-1}(Y)}\le
    C_7(Y,\ell,C_0,\Cdiv)\Big\{\| {\psi}_{i}(0,0)\|^2_{H^{\ell-1}(Y)}}
    &&
\\
    && +2\hat C(0,b)+ \int_0^v e^{-\lambda s} M_\ell(0,s)ds
        +2\delta \tCpsi (0,b_0)+
        E_{\ell,\lambda}[\overline \varphi_{i},0]\Big\}
        \,.
        \phantom{xxxxxxx}
\eeaa
The right-hand-side is bounded uniformly in $i$ and $v\in  [0,b_0]$.
Thus there exists a constant, which we denote again by $C$, such
that
$$
\forall i\in\N,\; \forall v\in  [0,b_0],\quad \left\| \psi_{i} (0,v)
\right\|^2_{H^{\ell-1}(Y)}\leq C \,.
$$

We can repeat this process using the transport equation satisfied by
$\partial_u\psi_{i+1}(0,v)$, which is obtained by
$u-$differentiating the equation satisfied by $\psi_{i+1}$. This
leads to the inequality \eq{st3} at $u=0$ for every $i\in \N$ with
$k-3$ there replaced by $\ell-2$; the gain of one derivative here,
as compared to \eq{st3}, is due to the fact that $\varphi|_{u=0}$ is
directly given in terms of initial data, and hence is controlled in
$H^ \ell(Y)$, while in \eq{st3} we only had uniform control in
$H^{k-1}(Y)$. That is, for all $i\in\N$,
\beaa
 \left\|\partial_u\psi_{i} (0,v) \right\|^2_{H^{\ell-2}(Y)}
 \leq
   2e^{\lambda
 v}
  \Big[C_9\Big\{ \sup_{i\in\N}\sup_{u\in[0,a_0]}\|\partial_u\overline {\psi}_{i}(0)\|^2_{H^{\ell-2}(Y)}
 \\
 +\int_0^{b_0} e^{-\lambda s}\Big(  \hat M_{\ell}(0,s)+ C^2_{10}\Big)ds\Big\}
   +1
   \Big]
 \,.
\eeaa

An identical argument using \eq{alp1estn-b} gives the desired
control of $\varphi(u,0)$ and $\partial_v \varphi(u,0)$. This proves
that the left-hand side of \eq{Itcond23X13} is finite.

We can now appeal to  Section \ref{Convergence iterative sequence}
to conclude that the sequence $(f_i)_{i\in\N}$ converges towards a
solution $ f$ of the Cauchy problem  \eq{fos}, \eq{ID} in a space as
stated in the theorem. This requires choosing more carefully the
sequences $(\varphi_i)_{i\in\N}$, $(\psi_i)_{i\in\N}$, see
\eq{24X13.p5}, which is possible because the constant $C_2$
appearing in \eq{24X13.p5} is the same for all suitably bounded
sequences, possibly after taking $i\ge i_0$ for some $i_0$ large enough.

Note that the neighborhood of $\mcN^-$ on which the solution has
been constructed is independent of the Sobolev differentiability
class of the data. This implies that smooth initial data lead to
smooth solutions.

We continue with uniqueness of solutions. Let $f_\ell$, $ \ell=
1,2$, be two solutions of (\ref{fos}) with identical initial data
(\ref{ID}). Setting $\delta f= f_1-f_2$ leads to the  equation
\bel{eqdelta}
 (A^\mu\nabla_\mu)_1 \delta f= (\newg)_1-(\newg)_2-
 \Big((A^\mu\nabla_\mu)_1-(A^\mu \nabla_\mu)_2 \Big)f_2
  \,,
\ee
with  $\delta f$ vanishing on $\mcN$. The calculation now is similar
to that of Section \ref{Convergence iterative sequence}. Equation
(\ref{eqdelta}) can be rewritten as (\ref{cis1}) with $\delta
f_{i+1}$ and $\delta f_{i}$ there replaced by $\delta f \,, \;
 A^\mu(f_i)\nabla_\mu(f_i)$
  replaced by $A^\mu(f_1)\nabla_\mu(f_1)$
and $\delta \newg_{i}$ there replaced here by
$$
 \delta \newg:= (\newg)_1-(\newg)_2-\Big((A^\mu\nabla_\mu)_1-(A^\mu\nabla_\mu)_2\Big)f_2
 \,.
$$
The current equivalent of (\ref{cis2})  with $\overline{\delta
\varphi} = \overline{\delta \psi}\equiv 0$ there    reads
\bean
 \lefteqn{\|e^{-\lambda (u+v)}\delta
 \varphi(u)\|_{L^2([0,b_0]\times Y)} + \|e^{-\lambda (u+b)}\delta
 \psi(b)\|_{L^2([0,a_*]\times Y)}}
\\
 && \le
 \left(\|(A^\mu(f_1)\nabla_\mu(f_1))\|_{L^\infty}
 -c\lambda\right)\|e^{-\lambda(u+v)}\delta f\|^2_{L^2([0,a_*]\times
 [0,b_0]\times Y)}
\nonumber
\\
 \phantom{xxx} &&+C_1 \|e^{-\lambda(u+v)}\delta f\|^2_{L^2([0,a_*]\times[0,b_0]\times Y)}
\,.
 \eeal{cis22}
It then follows (compare with \eqref{cis4}) that there exists a real
number $\al \in (0,1)$ such that
$$
\|e^{-\lambda(u+v)}\delta f\|^2_{L^2([0,a]\times [0,b_0]\times Y)}
\le  \alpha \|e^{-\lambda(u+v)}\delta f\|^2_{L^2([0,a]\times
[0,b_0]\times Y)} \,.
$$
This means that $f_1= f_2$ almost everywhere on
$[0,a]\times[0,b_0]\times Y$, and since $f_1$ and $f_2$ are
continuous,   equality holds everywhere.
\end{proof}

\bigskip

The symmetry of the problem under the interchange of $u$ and $v$
shows that our construction also provides a solution in a
neighborhood of $\mcN^+$:

\begin{Corollary}\label{23X13.4}
  Under the hypotheses of Theorem \ref{T23X13.2}, there exists two constants $0< a_*\leq a_0$
  and $0< b_*\leq b_0$ and a unique solution $f$ of the Cauchy problem \eq{fos}, \eq{ID} defined on the neighborhood
  $$
  \Big([0,a_*]\times[0,b_0]\times Y\Big)\cup\Big([0,a_0]\times[0,b_*]\times Y\Big)
  $$
 of $\mcN = \mcN^+\cup\mcN^-$ such that
$$
 f\in L^\infty\Big([0,a_*]\times [0,b_0]; H^{\ell-2}(Y)\Big) \cap _{0< 3i \le \ell - \frac{n+9}2}
  W^{i,\infty}\Big([0,a_*]\times [0,b_0]; H^{\ell-1-3i}(Y)\Big)
$$
similarly on $[0,a_0]\times [0,b_*]$. 
%
\end{Corollary}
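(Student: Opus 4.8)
The plan is to assemble the required neighborhood from two one-sided pieces and then glue them by uniqueness. First, Theorem~\ref{T23X13.2} applied directly to the data $(\overline\varphi,\overline\psi)$ produces the constant $a_*\in(0,a_0]$ together with a solution $f$ of \eq{fos}, \eq{ID} on $[0,a_*]\times[0,b_0]\times Y$, with the regularity recorded in \eq{4III14.t1} (taking $k=\ell-1$); this set is a one-sided neighborhood of $\mcN^-$. Next I would exploit the symmetry of the whole set-up under the interchange $u\leftrightarrow v$: the splitting \eq{bid2.5}, the block structure \eq{bid3} (with $A^v_{\psi\psi}>0$ and $A^u_{\varphi\varphi}>0$), and the hypotheses \eq{23X13.1} together with the transport equations \eq{23X13.p1}--\eq{23X13.p2} are all invariant, after relabelling $\varphi\leftrightarrow\psi$, $A^u\leftrightarrow A^v$, $a_0\leftrightarrow b_0$, under this exchange. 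Hence Theorem~\ref{T23X13.2} applied to the exchanged system delivers a constant $b_*\in(0,b_0]$ and a solution $\tilde f$ on $[0,a_0]\times[0,b_*]\times Y$, again of class \eq{4III14.t1}, which is a one-sided neighborhood of $\mcN^+$.

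It then remains to check that $f$ and $\tilde f$ coincide on the overlap, which (since $a_*\le a_0$ and $b_*\le b_0$) is exactly $[0,a_*]\times[0,b_*]\times Y$. Both restrictions are $C^1$ solutions of \eq{fos} that assume the data \eq{ID} on $\mcN\cap([0,a_*]\times[0,b_*]\times Y)$, so the uniqueness argument from the proof of Theorem~\ref{T23X13.2} applies: running the energy estimate of the type \eq{cis22} on the rectangles $[0,a]\times[0,b_*]\times Y$ for $a\le a_*$ (the value of the second endpoint plays no role in that contraction argument) forces $f=\tilde f$ on the overlap. Consequently $f$ and $\tilde f$ patch together to a well-defined solution on $\big([0,a_*]\times[0,b_0]\times Y\big)\cup\big([0,a_0]\times[0,b_*]\times Y\big)$, and its regularity on each of the two pieces is inherited directly from \eq{4III14.t1}. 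Uniqueness of this glued solution within the $C^1$ class follows from the uniqueness clause of Theorem~\ref{T23X13.2} applied on each piece separately.

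There is no serious obstacle here; the corollary is essentially a bookkeeping consequence of Theorem~\ref{T23X13.2} and the manifest $u\leftrightarrow v$ symmetry of the hypotheses. The only point deserving an explicit sentence is that the symmetry genuinely preserves \emph{all} hypotheses, in particular the assumption that the transport equations have global solutions on $\mcN^\pm$ — but this is immediate from the symmetric way \eq{23X13.p1}--\eq{23X13.p2} are written, with \eq{23X13.p1} becoming the hypothesis on $\mcN^+$ and \eq{23X13.p2} the hypothesis on $\mcN^-$ after the relabelling.
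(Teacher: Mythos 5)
Your proof is correct and takes essentially the same route as the paper: the paper dispatches Corollary~\ref{23X13.4} in a single sentence by invoking the $u\leftrightarrow v$ symmetry of the hypotheses and conclusions of Theorem~\ref{T23X13.2}, which is precisely what you do, and the gluing step you spell out (uniqueness on the overlap $[0,a_*]\times[0,b_*]\times Y$ via the contraction estimate~\eq{cis22}) is the implicit content of the ``unique solution'' clause. Your observation that the transport equations~\eq{23X13.p1}--\eq{23X13.p2} exchange roles under the relabelling is the right thing to single out as the one hypothesis worth checking.
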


\begin{Remark}{\rm
 \label{R29X13.11}
 Theorem \ref{T23X13.2} can be used to  obtain a solution of \eq{fos}, \eq{ID} when the transport equations can be solved globally on the hypersurfaces
$\wmcN^-=\{0\}\times [0, \infty) \times Y$ and $\wmcN^+=[0,
\infty)\times \{0\}\times Y$ as follows: Let $a_0$ and $b_0$ be two
arbitrary positive real numbers. Corollary \ref{23X13.4} shows that
there exist  two constants $0< a_*\leq a_0$ and $0< b_*\leq b_0$ and
a unique continuous solution $f$ of the Cauchy problem \eq{fos},
\eq{ID} defined on
$$
 \mcU_{a_0,b_0}:= \Big([0,a_*]\times[0,b_0]\times Y\Big)\cup\Big([0,a_0]\times[0,b_*]\times Y\Big)
 \,.
$$
Here $a_*$ and $b_*$ might depend upon $a_0$ and $b_0$. Uniqueness
of solutions on each $\mcU_{a_0,b_0} $ shows that solutions defined
on two such overlapping regions coincide on the overlap. This allows
one to define a solution on
$$
 \mcU= \underset{a_0, b_0\in\R_+}{\cup}\mcU_{a_0,b_0}
$$
in an obvious way. We thus obtain a neighborhood of the entire
initial data hypersurface $\wmcN=\wmcN^-\cup\wmcN^+$. Note that the
thickness of the neighborhood might shrink to zero when receding to
infinity along  $\wmcN$ .
}\qed\end{Remark}

\section{Continuous dependence upon data}
 \label{ss26X13.1}

The aim of this section is to prove that the solutions obtained in
Theorem \ref{T23X13.2} are stable under small perturbations of the
Cauchy data. More precisely:

\begin{Theorem}
 \label{T28X13.2}
  Let $f$ be a solution of \eq{fos}  on $[0,a_0]\times[0,b_0]\times Y$, and
let $(f_i)_{i\in\N}$ be a sequence of solutions on
$[0,a_0]\times[0,b_0]\times Y$ such that the sequence of the
associated initial data $(\overline f_i)_{i\in\N}$   converges to
$\overline f$ in the topology determined by \eq{23X13.1} with
$\ell\ge \frac{n+15}2$.
 Then

\begin{enumerate}
 \item There exists $0<a_* \le a_0$ so that
 $$
  \mbox{the sequence $f _i$  is bounded in $C^{1,1}([0,a_*]\times [0,b_0]\times Y)$.
  }
 $$
 \item Suppose that $0<a \le a_0$ is such that the sequence $(f_i)_{i\in\N}$  is bounded in $C^{1,1}([0,a ]\times [0,b]\times Y)$, then  for any $0<s<\ell-(n+9)/2$ the sequence $(f_i)_{i\in\N}$ converges to $f$ in the topology of
\bel{28X13.21}
 L^\infty\Big([0,a_*]\times [0,b_0]; H^{\ell-2}(Y)\Big) \cap _{0< 3i \le s}  W^{i,\infty }\Big([0,a_*]\times [0,b_0]; H^{\ell-1-3i}(Y)\Big)
 \,.
\ee
%
\end{enumerate}
\end{Theorem}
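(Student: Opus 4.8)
The plan is to deduce both assertions from estimates already established, made uniform in~$i$.

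For the first assertion, the key observation is that the a priori bounds of Proposition~\ref{Piter2}, and the higher-regularity bootstrap of Section~\ref{Convergence iterative sequence} culminating in \eq{24X13.p13}--\eq{4III14.t1}, depend on the Cauchy data only through the constant $\mcC$ of \eq{Itcond23X13}. Since $\overline f_i$ converges in the topology of \eq{23X13.1} it is bounded there, so $\mcC$ may be chosen independent of $i$; hence the existence radius $a_*$ and all the derived bounds are $i$-independent. Concretely, each solution $f_i$ is the limit of its own iterative sequence, whose members are bounded on $[0,a_*]\times[0,b_0]\times Y$ by a constant depending only on $\mcC$, $Y$, $a_0$, $b_0$ and $k=\ell-1$; passing to the limit bounds $f_i$ by that constant, and since $\ell\ge(n+15)/2$ the bootstrap of Section~\ref{Convergence iterative sequence} upgrades this to a bound in $C^{1,1}([0,a_*]\times[0,b_0]\times Y)$ that is uniform in~$i$.

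For the second assertion, assume a uniform $C^{1,1}$ bound on $[0,a]\times[0,b]\times Y$ (by the first assertion one may take $a=a_*$, $b=b_0$; re-running the energy estimates of Section~\ref{Sec:enid} at the top Sobolev level permitted by \eq{23X13.1} with this bound also gives a uniform bound for $f_i$ in $L^\infty$ with values in $H^{\ell-1}(Y)$, and likewise for $f$). Put $\delta f_i:=f_i-f$; as in \eq{eqdelta} it solves a linear symmetric hyperbolic system $(A^\mu\nabla_\mu)_{f_i}\,\delta f_i=\delta\newg_i$ with $\delta f_i|_{\mcN}=\overline f_i-\overline f$, and the mean-value identity bounds $\|\delta\newg_i\|_{L^2(Y)}$ by $C\,\|\delta f_i\|_{L^2(Y)}$, with $C$ controlled by the uniform $W^{1,\infty}$ bounds. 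Applying the $L^2$ energy inequality of Section~\ref{Sec:enid} as in \eq{cis2}--\eq{cis22}, with weight $e^{-\lambda(u+v)}$ and $\lambda$ large enough that the bulk term is negative, and using that the equation is genuinely linear in $\delta f_i$, one gets
$$\sup_{(u,v)}\Big(\|\delta\varphi_i(u)\|^2_{L^2([0,b]\times Y)}+\|\delta\psi_i(v)\|^2_{L^2([0,a]\times Y)}\Big)\le C\,\|\overline f_i-\overline f\|^2_{L^2(\mcN)}\longrightarrow 0\,,$$
so $\delta f_i\to 0$ in $L^\infty\big([0,a]\times[0,b];L^2(Y)\big)$. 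Interpolating against the uniform $H^{\ell-1}(Y)$ bound then gives $\delta f_i\to 0$ in $L^\infty\big([0,a]\times[0,b];H^\sigma(Y)\big)$ for every $\sigma<\ell-1$, which covers the first factor in \eq{28X13.21}.

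For the remaining, derivative, norms in \eq{28X13.21} I would prolong the system as in \eq{2III14.1}: the enlarged field $f'_i=(f_i,\partial f_i)$ solves a symmetric hyperbolic system of the same structural type, with Cauchy data on $\mcN$ which again converge --- the tangential components directly from \eq{23X13.1}, the transverse ones because they are reconstructed from the transport equations \eq{23X13.p1}--\eq{23X13.p2}, whose solutions depend continuously on $(\overline\varphi_i,\overline\psi_i)$. Repeating the $L^2$-energy-plus-interpolation argument for $\delta f'_i$ yields convergence of $\partial f_i$ to $\partial f$ in $L^\infty$ with values in Sobolev spaces on $Y$ just below the level reached in \eq{24X13.p13}, and iterating the prolongation reproduces the whole family of norms in \eq{28X13.21}. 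The hard part is precisely this bookkeeping: one must check that each prolonged system still obeys the structure conditions \eq{bid2.5}--\eq{bid3} so that the energy identity of Section~\ref{Sec:enid} applies, that the transverse data on $\mcN$ converge in the required Sobolev norms, and that the loss of three derivatives per prolongation keeps every intermediate field above the Sobolev embedding threshold and still delivers convergence at the borderline regularities occurring in \eq{28X13.21} --- which is what forces the restriction $0<s<\ell-(n+9)/2$ and the standing hypothesis $\ell\ge(n+15)/2$. Everything else repeats, uniformly in $i$, computations already carried out in the proof of Theorem~\ref{T23X13.2}.
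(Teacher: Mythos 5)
Your argument is correct in outline but follows a noticeably more direct route than the paper. The paper proves both parts by first reducing to \emph{smooth} solutions: for each $i$ it approximates $\overline f_i$ by smooth data $\overline f_{i,j}$, uses the iteration bounds of Propositions~\ref{Piter}--\ref{Piter2} (which depend on the data only through the constant $\mcC$, uniform in $i,j$) to get a common existence interval and a common $C^{1,1}$ bound, extracts by Arzel\`a--Ascoli a limit $g_i=\lim_j f_{i,j}$, and identifies $g_i=f_i$ by uniqueness --- this is point~1; then a diagonal subsequence $f_{i,j(i)}\to f$ and the triangle inequality $|||f_i-f|||_s\le|||f_i-f_{i,j}|||_s+|||f_{i,j}-f|||_s$ yield point~2. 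The point of this scaffolding is that every energy identity is applied only to smooth solutions, where the manipulations of Sections~\ref{Sec:enid}--\ref{Convergence iterative sequence} are manifestly justified. You instead estimate $\delta f_i=f_i-f$ directly via the $L^2$-contraction inequality underlying the uniqueness argument (\eq{eqdelta}--\eq{cis22}), interpolate against the uniform high-order bounds of Proposition~\ref{Piter2}, and reach the derivative norms in~\eq{28X13.21} by prolonging the system as in~\eq{2III14.1}. This is essentially what the paper's proof tacitly invokes when it asserts the $|||\cdot|||_s$ inequalities, so your route is if anything more transparent; but two points deserve care. First, you apply the Stokes/energy identity to $\delta f_i$ when both $f_i$ and $f$ are only $C^{1,1}$, not smooth --- that is exactly the step the paper's smooth-approximation detour is designed to make trivial, and you should either cite a density argument or note that $C^{1,1}$ regularity suffices for the integration by parts. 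Second, the uniform pointwise bound supplied by Proposition~\ref{Piter2} with $k=\ell-1$ is $\|f_i(u,v)\|_{H^{\ell-2}(Y)}\le C$, not the $L^\infty H^{\ell-1}$ bound you assert; interpolating the $L^\infty L^2$ convergence against $H^{\ell-2}$ gives $L^\infty H^\sigma$ convergence only for $\sigma<\ell-2$, which falls just short of the first factor in~\eq{28X13.21}. This borderline issue is also present, implicitly, in the paper's own argument, but you should not claim a bound you do not have.
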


\begin{Remark}{\rm
\label{R28X13.2} The  sequence $(f_i)_{i\in\N}$ of point 2.\
converges also in $C^1( [0,a ]\times[0,b_0]\times Y)$.
}\qed\end{Remark}

\begin{proof}
Let us denote by $\|\overline f \|_\ell$ the norm associated to
\eq{23X13.1}, and by $|||  f |||_s$ the norm in the space
\eq{28X13.21}.

Let $(\overline f_{i,j})_{j\in \N}$ be a sequence of smooth initial
data such that
$$  \|\overline f_{i,j}-\overline f_i\|_\ell \le \frac 1 {2^j}
 \,.
$$

Let $f_{i,j}$ be the (smooth) solution of  \eq{fos} with initial
data $\overline f_{i,j}$. By the estimates of
Section~\ref{sub:Bounds-iterati-scheme} for all $i,j$ large enough
we can find $0\le a_{*}\le a_0$ so that all the  $f_{i,j}$'s are
defined on a common set $[0,a_*]\times [0,b_0]\times Y$, with a
common bound in $C^{1,1}([0,a_*]\times [0,b_0]\times Y)$.

By Arzela-Ascoli, when $j$ tends to infinity the $f_{i,j}$'s
converge to a solution of \eq{fos}, say $g_i$, with initial data
$\overline f_i$. By uniqueness $g_i=f_i$. This proves point 1.
%
%
%
%

Since $\overline f_{i,j}$ converges to $\overline f_i$ and
$\overline f_i$ converges to $\overline f$, there exists a sequence
$\overline f_{i,j(i)}$ which converges to $\overline f$ as $i$ tends
to infinity. By the argument just given, the associated solutions of
\eq{fos} $f_{i,j(i)}$ converge, as $i$ tends to infinity, to a
solution $g$ of \eq{fos}. By uniqueness, $g=f$. Hence the
$f_{i,j(i)}$'s converge to $f$.

Thus, for every $\epsilon>0$ there exists $i_\epsilon$ so that for
$i\ge i_\epsilon$ and $j\ge j_0(i)$ we have
 $$
  ||| f_{i,j} - f |||_s\le \frac 12 \epsilon
   \,.
 $$
 But for $j$ large enough it holds that
 $$
   |||    f_{i,j}-  f_i |||_s  \le \frac 12 \epsilon
  \,,
 $$
 which implies the claim.
 \end{proof}

\section{A continuation criterion}
 \label{ss26X13.-1}

What has been said so far easily leads to the following
\emph{continuation criterion} for solutions with \emph{smooth}
initial data:

\begin{Theorem}
  \label{T5VI14.1}
 Suppose that $(
\varphi,\psi)$ is a $C^1$
 solution on $[0,a)\times [0,b_0]\times Y$ of the equations considered so far, for some $a< a_0$, with smooth initial data on $\mcN$.
  If  $(
\varphi,\psi)$ is bounded in $C^1$ norm  on $[0,a]\times
[0,b_0]\times Y$, then there exists $\epsilon>0$  such that the
solution can be extended to a smooth solution defined on
$[0,a+\epsilon]\times [0,b_0]\times Y$.
\end{Theorem}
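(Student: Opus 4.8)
The plan is to reduce the continuation problem to a fresh application of Theorem~\ref{T23X13.2}, using the $C^1$ bound at $u=a$ to supply initial data on a slightly-past hypersurface and then patching via uniqueness. First I would observe that, under the hypothesis that $(\varphi,\psi)$ is bounded in $C^1$ on the closed region $[0,a]\times[0,b_0]\times Y$, the equations \eq{fos} themselves (solved algebraically for $\partial_u\varphi$ and $\partial_v\psi$, exactly as in \eq{alp0est1} and in the definition of $\Cdiv$) upgrade this to control of the transverse derivatives, so that the full set of first derivatives of $f$ is bounded on $[0,a]\times[0,b_0]\times Y$. In particular the restriction $f|_{u=a}$, which we may regard as Cauchy data on the characteristic surface $\{u=a\}\times[0,b_0]\times Y$ together with $f|_{v=0}$ on $[a,a_0)\times\{0\}\times Y$, extends continuously up to $u=a$. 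Since the data are smooth on $\mcN$, elliptic-type regularity along the generators (the transport equations are ODEs in $u$ resp.\ $v$ with smooth coefficients and smooth source) shows that $f|_{u=a}$ and $f|_{v=0}$ are in fact smooth, and all the consistency (transport) constraints hold there by construction — they are inherited from the fact that $(\varphi,\psi)$ already solves \eq{fos} on $[0,a)$.

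Next I would set up a shifted problem. Introduce the new coordinate $u'=u-a$, so that $\mcN^-_{\mathrm{new}}=\{u'=0\}\times[0,b_0]\times Y$ carries the data $\varphi|_{u=a}$ (which is determined by integrating the transport/field equations out to $u=a$, hence smooth and globally defined on $[0,b_0]$ by the $C^1$ bound), and $\mcN^+_{\mathrm{new}}=[0,a_0-a)\times\{0\}\times Y$ carries the original $\overline\psi$ shifted. The transport equations for the shifted problem have a global solution on $\mcN^\pm_{\mathrm{new}}$: along $\{v=0\}$ this is just the original $\overline\psi$ and the original transport solution for $\varphi(u,0)$, restricted to $u\ge a$ and still defined because the hypothesis gives a $C^1$ (hence non-blowup) bound up to $u=a$; along $\{u'=0\}$ it is the trace of the existing solution, which solves the relevant transport equation there. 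Now apply Theorem~\ref{T23X13.2} to this shifted system with $b_0$ unchanged: it produces a constant $a_*'>0$, depending only on $a_0-a$, $b_0$, $Y$, $\ell$, and the $C^1\!$-controlled data norms, and a smooth solution $\tilde f$ on $[a, a+a_*']\times[0,b_0]\times Y$ with $\tilde f|_{u=a}=f|_{u=a}$.

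Finally I would glue. On the common hypersurface $\{u=a\}\times[0,b_0]\times Y$ the two solutions $f$ (extended by continuity) and $\tilde f$ agree, and they both solve \eq{fos}; on the sliver $[a,a+\epsilon]\times\{0\}\times Y$ they agree because both equal the original $\overline\psi$. By the uniqueness statement of Theorem~\ref{T23X13.2} (or rather its internal argument — the $\delta f$ estimate \eq{cis22}–\eq{cis4} applied on a thin slab $[a,a+\epsilon]$ where $c\lambda$ beats the coefficient bounds for small $\epsilon$) the two solutions coincide on the overlap, so the concatenation
\[
  \hat f := \begin{cases} f & \text{on } [0,a)\times[0,b_0]\times Y,\\ \tilde f & \text{on } [a,a+\epsilon]\times[0,b_0]\times Y,\end{cases}
\]
with $\epsilon:=a_*'$ (shrunk so that $a+\epsilon\le a_0$ if one wants to stay inside the original slab, though this is not strictly needed), is a well-defined $C^1$, hence by the bootstrap of Remark~\ref{R8V14.1} smooth, solution on $[0,a+\epsilon]\times[0,b_0]\times Y$, proving the theorem.

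\medskip

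The main obstacle I expect is not in the abstract machinery but in verifying that the trace $f|_{u=a}$ is genuinely admissible data of the required smoothness and that the shifted transport equations remain globally solvable: one must check that the $C^1$ bound on $[0,a]$ really prevents the transport solutions from blowing up before $u=a$, and that the corner compatibility at $(u,v)=(a,0)$ between $f|_{u=a}$ and $\overline\psi$ is automatic — both follow from the fact that $(\varphi,\psi)$ is already a solution on $[0,a)$ and extends continuously, but this should be spelled out carefully since the whole point of the theorem is that the data side is under control.
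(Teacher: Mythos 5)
Your outer strategy — take the trace at $u=a$, apply the existence theorem with shifted data, glue by uniqueness — is a standard way to finish once one knows the trace is smooth, but it defers the entire content of the theorem to the step where you assert that $f|_{u=a}$ is smooth, and there you have a genuine gap. You justify this by ``elliptic-type regularity along the generators'' and by saying the transport equations have ``smooth coefficients and smooth source,'' but the coefficients and source depend on the solution $f$ itself, which by hypothesis is only $C^1$, not smooth, on the closed slab; transport equations are not elliptic and confer no regularity gain; and the claim that smoothness on $\mcN$ is ``inherited'' at $u=a$ is precisely the propagation-of-regularity statement one must prove, not invoke. Nothing in your argument rules out the $H^k$ norms of $f(u,\cdot)$ blowing up as $u\to a^-$ even though the $C^1$ norm stays bounded — that is exactly the scenario the theorem is designed to exclude.

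The paper's own proof closes this gap with the energy method: because $\hat C_1$ in \eq{5X13.11} and the constants in the Moser estimates depend only on $W^{1,\infty}$ norms of $f$ and of the data-level quantities, the a~priori $C^1$ bound controls all the constants appearing in \eq{bid5} and \eq{volest2} uniformly in $k$; Gronwall's inequality then yields bounds on the $k$th-order energies on $[0,a)\times[0,b_0]\times Y$ for every $k$, directly and without the iterative scheme of Section~\ref{sec:iterati-scheme}. Only then does the solution extend smoothly to the closed slab, and your gluing step (or a direct re-application of local existence) can be invoked. In short, your reduction is sound modulo the missing a~priori estimate, but that estimate \emph{is} the theorem, and it comes from the tame structure of the commutator bounds rather than from any regularity of transport ODEs.
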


Indeed, for smooth data, if an a priori control of the $C^1$-norm of
the fields is known,  for any $k$ one obtains the estimate for the
$k $'th order energy directly from \eq{bid5}, \eq{volest2} and
Gronwall's inequality, with no need to introduce the iterative
scheme of Section \ref{sec:iterati-scheme}. We emphasise that in the
current case the constant $\hat{C}_1$ of equation \eq{5X13.11} is
controlled directly.

One would like to have a similar continuation criterion for
solutions of finite differentiability class. However, due to the
losses of differentiability occurring in our argument it is not
clear whether such a result can be established. We have not
attempted to investigate this issue any further.

\chapter{Application to semi-linear wave equations}
 \label{s14XI13.1}

\section{Double-null coordinate systems}
 \label{s25X13.1}

Let $(\mcM,g) $ be  a smooth $(n+1)$-dimensional space-time, and let
$\wmcN^\pm$ be two null hypersurfaces in $\mcM$ emanating from a
spacelike manifold $Y$ of codimension two. We will denote by
$\mcN^\pm$ the intersection of $\wmcN^\pm$ with the causal future of
$Y$.

In order to apply our results above to semi-linear wave equations
with initial data on $\mcN^\pm$ we need to construct local
coordinate systems $(u,v,x^A)$, where the $x^A$'s are local
coordinates on $Y$, near
$$ \mcN:= \mcN^+\cup \mcN^-
$$
so that
\bel{22X13.p1}
 \mcN^-:=\{u=0\}
  \,,
 \quad
 \mcN^+:=\{ v=0\}
 \,.
\ee
 We will further need
\bel{22X13.p2}
 g(\nabla u, \nabla u) = 0 = g(\nabla v, \nabla v)
 \,,
\ee
wherever defined. Such coordinates can be constructed in a standard
way, but we give the details as specific parametrisations will be
needed in the problem at hand.

Let $\ell _Y$ and $\omega _Y$ be any smooth null future pointing
vector fields defined along $Y$ and normal to $Y$ such that $\ell
_Y$ is tangent to $\mcN^+$ and $\omega _Y$ is tangent to $\mcN^-$.
Then both $\wmcN^+$ and $\mcN^+ $ are threaded by the null geodesics
issued from $Y$ with initial tangent $\ell _Y$ at $Y$. These
geodesics will be referred to as the \emph{generators} of $\wmcN^+$,
respectively of $\mcN^+$. The  associated field of tangents,
normalised in any convenient way, will be denoted by $\ell^+$. Let
$r_+$  denote the corresponding  parameter along the integral curves
of $\ell^+$,  with $r_+=0$ at $Y$. We emphasise that the
normalisation of $\ell^+$ is arbitrary at this stage, so that $r^+$
could e.g.\ be required to be affine, but we do \emph{not} impose
this condition. Similarly $\wmcN^- $ and $\mcN^- $ are threaded by
their null geodesic generators issued from $Y$, tangent to $\omega
_Y$ at $Y$, with field of tangents $\omega^-$ and  parameter $r_-$.

Let $x^A _Y$ be any local coordinates on an open subset $\mcO$ of
$Y$, they can be propagated to functions $x^A_\pm $ on $\mcN^\pm$ by
requiring the $x^A_\pm$'s to be equal to $x^A_Y$ along the
corresponding null geodesic generators of $\mcN^\pm$. Then
$(r_\pm,x^A_\pm)$ define local coordinates on $\mcN^\pm$ near each
of the relevant generators.

On $\wmcN^+$ we let $\omega^+$ be any smooth field of null vectors
transverse to $\wmcN^+$ and normal to the level-sets of $r_+$ such
that $\omega^+|_Y=\omega_Y $.  The function $u$ is defined by the
requirement that $u$ is constant along   the null geodesics issued
from $\wmcN^+$ with initial tangent $\omega^+$, equal to $r_+$ at
$\wmcN^+$. We denote by $\omega$ the field of tangents to those
geodesics, normalised in any suitable way. Thus
\bel{22X13.p5}
 \omega(u)=0\,,
 \quad
 u|_{\mcN^-}= 0
 \,.
\ee

We claim that the level sets of $u$, say $\mcN_u^-$, are null
hypersurfaces. To see this, consider a  one-parameter family
$\lambda\mapsto x(\lambda,s)$ of generators within $\mcN^-_u$. Then
$X:=\partial_\lambda x$ is tangent to $\mcN^-_u$ and solves  the
Jacobi equation along each of the generators $s\mapsto
x(\lambda,s)$. Further, every vector  tangent to $\mcN^-_u$ belongs
to such a family of vectors. We have
\bean
 \frac{d  (g(X,\omega))}{ds }  &= & g\big(\frac{DX}{ds},\omega\big)
 =g\big(\frac{D  }{\partial s  }\frac{  \partial  x}{  \partial \lambda},\frac{\partial x}{\partial s}\big)
 =g\big(\frac{D  }{\partial \lambda  }\frac{  \partial  x}{  \partial s},\frac{\partial x}{\partial s}\big)
\\
 &= &
  \frac 12 \partial_\lambda \big(g(\omega,\omega)\big)=0
  \,.
\eeal{29X13.1}
Now, on $\mcN^-_u\cap \mcN^+$ the vector $X$  can be decomposed as
$X=X^\parallel + \alpha \omega$, where $X^\parallel$ is tangent to
$\mcN^-_u\cap \mcN^+$ and $\alpha\in \R$. Both $X^\parallel $ and
$\omega$ are orthogonal to $\omega$, hence $g(X,\omega)=0$  at the
intersection. \Eq{29X13.1} gives $g(X,\omega)\equiv 0$.   This shows
that all vectors tangent to $ \mcN_u^-$ are orthogonal to $\omega$,
and since $\omega$ is also tangent to $ \mcN_u^-$ we conclude that
$T\mcN_u^-$ is null. Consequently $\nabla u$ is proportional to the
null vector $\omega$, and  thus
$$
 g(\nabla u, \nabla u) =0
 \,.
$$

Similarly, on $\mcN^-$ we let $\ell^-$ be any smooth field of null
vectors transverse to $\mcN^-$ and normal to the level-sets of $r_-$
such that $\ell^-|_Y=\ell_Y $.  The function $v$ is defined by the
requirement that $v$ is constant along   the null geodesics issued
from $\mcN^-$ with initial tangent $\ell^-$, and with initial value
$r_-$ at $\mcN^-$. We denote by $\ell$ the field of tangents to
those geodesics, normalised in any convenient way. It holds that
\bel{22X13.p6}
 \ell(v)=0
 \,,
 \quad
 v|_{\mcN^+}= 0
 \,,
 \quad
 g(\nabla v,\nabla v)=0
 \,.
\ee

By construction we have
\bel{22X13.p3}
 \ell|_{\mcN^\pm}=\ell^\pm
 \,,
  \qquad
 \omega|_{\mcN^\pm}=\omega^\pm
 \,.
\ee

So far the construction was completely symmetric; this symmetry will
be broken now by defining the functions $x^A$ through the
requirement that the $x^A$'s be constant along the null geodesics
starting from $\mcN^-$ with initial tangent $\ell^-$, and taking the
values $x^A_-$ at the intersection point.

The construction just given breaks down when the geodesics start
intersecting. However, it always provides the desired coordinates in
a neighborhood of $\mcN$. In particular, given two generators of
$\mcN^\pm$ emanating from the same point on $Y$, there exists a
neighborhood of those generators on which $(u,v,x^A)$ form a
coordinate system. We emphasise that
\bel{22X13.p4}
 g(\omega, \omega)= g(\ell, \ell)= 0
 \,,
\ee
and that we also have
\bel{22X13.p5}
  \ell^v=0=\ell^A
  \
  \Longleftrightarrow
  \
    \ell = \ell^u \partial_u
 \,,
 \quad
 \omega^u=0
  \
  \Longleftrightarrow
  \
  \omega= \omega^v \partial_v + \omega^A \partial_A
 \,.
\ee
The first group of equations \eq{22X13.p5} follows from the fact
that both $x^A$ and $v$ are constant along the integral curves of
$\ell$, while the second is a consequence of the fact that $u$ is
constant along the integral curves of $\omega$.

Finally, once the coordinates $u$ and $v$ have been constructed, for
some purposes it might be convenient to rescale $\ell$, or $\omega$,
or both, so that
\bel{22X13.p5b}
 g(\omega, \ell)= -{1\over 2}
 \,.
\ee
Such rescalings do not affect \eq{22X13.p4}-\eq{22X13.p5}, which are
the key properties of $\ell$ and $\omega$ for us. \Eq{22X13.p5b}
determines $\ell$ and $\omega$ up to one multiplicative strictly
positive factor, $\ell\mapsto \alpha \ell$, $\omega\mapsto
\alpha^{-1} \omega$.

\subsection{$\R$-parametrisations}
 \label{ss29X13.1}

Let us finish this section by providing a construction in which the
functions $u$ and $v$ run from zero to infinity on all generators of
$\mcN^+$ and $\mcN^-$.

Let $\mcU^+\subset\wmcN^+\times \R$ be the maximal domain of
definition of the map, which we denote by
$$\Psi^+(p,s):
 \mcU^+
 \to
 \mcM
\,,
 \quad
  p
  \in \wmcN^+\,, \ s\in \R
  \,,
$$
defined by following a null geodesic from $p\in\wmcN^+$  an
affine-parameter $s\in \R$ in the direction  $\ell^+$ at $p$. Let
$\mcV^+\subset \mcU^+$ be the domain of injectivity of $\Psi^+$.
Then $\Psi^+(\mcV^+)$ is an open subset of $\mcM$ containing
$\wmcN^-$.

Let the set $\mcU^-\subset\wmcN^-\times \R$, the map $\Psi^-$, and
the set $\mcV^-\subset \mcU^-$ be the corresponding constructs on
$\wmcN^-$, using the integral curves of $\omega$. Then
$\Psi^-(\mcV^-)$ is an open subset of $\mcM$ containing $\mcN^+$.

Set
$$
 \mcO:= \Psi^+(\mcV^+)\cap \Psi^-(\mcV^-) \supset \mcN^+\cup \mcN^-
 \,.
$$
Let $h$ be any complete smooth Riemannian metric on $\mcO$, rescale
$\ell$ and $\omega$  to new vector fields on $\mcO$, still denoted
by $\ell$ and $\omega$, so that $h(\ell,\ell)=1=h(\omega,\omega)$.
Then the integral curves of $\ell$ and $\omega$ are complete in
$\mcO$. The corresponding parameters $r_\pm$ on $\wmcN^\pm$ run over
$\R$ for all generators of $\wmcN^\pm$, as desired.

It should be pointed out that the above normalisation of $\ell$ and
$\omega$ has only been imposed for the sake of constructing $u$ and
$v$. Once we have the functions $u$ and $v$  on $\mcO$ we can revert
to any other normalisation of the fields $\ell$ and $\omega$, in
particular we can assume that \eq{22X13.p5b} holds.  It might then
not be true anymore that $\ell(u)=1$ on $\mcN^-$ and/or
$\omega(v)=1$ on $\mcN^+$, but these conditions are irrelevant for
our purposes in this section. In fact, the condition \eq{22X13.p5b}
plays no essential role in what follows in this section either.

\subsection{Regularity}
 \label{ss25X13.1}

Now, it is well known that coordinate systems obtained by shooting
geodesics lead to a loss of differentiability of the metric. The aim
of this section is to show that, in our context, the optical
functions $u$, $v$   are of the same
differentiability class as the metric.%
\footnote{The argument here has been suggested to us by Hans
Lindblad. We are grateful to Hans for useful discussions concerning
this point.}
As a result, after passing to a doubly-null coordinate system one
loses one derivative of the metric. While unfortunate, this is not a
serious problem for  semi-linear equations, as considered in this
section. On the other hand, this leads to difficulties when
attempting to apply our techniques to the harmonically-reduced
Einstein equations. This is  why we will restrict ourselves to
dimension four when analyzing the Einstein equations, as then a
doubly-null formulation of Einstein equations is directly available,
without having to pass to harmonic coordinates.

First, to avoid a conflict of notation, we will use the symbol  $x$
for the coordinate $u$ of Section~\ref{sec:iterati-scheme}, and $y$
for the coordinate $v$ used there, \emph{without  assuming that
$x$ or $y$ solve the eikonal equation.} Thus, we let $(x,y,x^A)$ be
any coordinate system such that $\mcN^-=\{x=0\}$, and
$\mcN^+=\{y=0\}$. We assume that these hypersurfaces are
characteristic for the metric $g$. We have just seen how to
construct solutions $u$ and $v$ to the eikonal equation, and we wish
to analyze their differentiability properties.

To obtain the desired estimates, we start by differentiating the
eikonal equation:
\bel{25X13.11}
 g^{\mu\nu} \partial_\mu v \, \partial_\nu v = 0
 \quad
 \Longrightarrow
 \quad
 g^{\mu\nu} \partial_\nu v \, \partial_\mu\partial_\alpha v =
 -\frac 12 \partial_\alpha(g^{\mu\nu}) \partial_\mu v \, \partial_\nu v
 \,.
\ee
Setting $f\equiv \varphi\equiv (\varphi_\alpha  ) :=
(\partial_\alpha v)$, we obtain a symmetric-hyperbolic evolution
system
\bel{25X13.12} g^{\mu\nu}\varphi_\mu \partial_\nu \varphi_\alpha =
-\frac 12\partial_\alpha(g^{\mu\nu}) \varphi_\mu \varphi_\nu
 \quad
 \Longleftrightarrow
  \quad
 A^\mu \partial_\mu \varphi = G
 \,,
\ee
with
\bel{25X13.13}
 A^\mu =(A^\mu{}_\alpha{}^\beta)= (-g^{\mu\nu}\varphi_\nu \delta_\alpha^\beta)
 \,,
 \quad
 G = (G_\alpha) = (\frac 12\partial_\alpha(g^{\mu\nu}) \varphi_\mu \varphi_\nu)
\ee
(the negative sign above is related to our convention $(-+\cdots +)$
for the  signature of the metric, together with the requirement that
$\nabla u$ and $\nabla v$ are both past pointing). The function $v$
is required to vanish on $\mcN^+$.

An obvious corresponding equation can be derived for the second null
coordinate $u$, which is required to vanish on $\mcN^-$.

We have:

\begin{Theorem}
\label{T25X13.1} Let $(\mcM,g)$ be a smooth space-time with a metric
$g$ with components
$$
 g_{\mu\nu} \in \cap_{0\leq j\leq \ell}C^j([0,a_0]\times[0,b_0];H^{\ell-j}(Y))
$$
in the coordinate system above, with some  $\ell\in\N$ satisfying
$\ell> \frac{n+6}2$. Let $\overline  u$, $\overline  v$   be
continuous functions  on $\mcN $, with $\overline  u \equiv 0$ on
$\mcN^-$, differentiable on $\mcN^+$ and  $
\partial_x \overline  u$ strictly positive there, and $\overline  v \equiv 0$ on $\mcN^+$, differentiable on $\mcN^-$ and  $
\partial_y \overline  v$ strictly positive there, with
\bel{25X13.11x}
 \overline  u |_{\mcN^+} \in \cap_{0\leq j\leq \ell }C^j([0,a_0];H^{\ell -j}(Y))
 \,,
 \quad
 \overline  v |_{\mcN^-} \in \cap_{0\leq j\leq \ell }C^j([0,b_0];H^{\ell -j}(Y))
 \,.
\ee
There exist $\ell$-independent constants $\mu_*>0$ and $
  a_* \in(0,a_0]
$,
with $b_0-\mu_* a_*>0$, such that the eikonal equations $g(\nabla u,
\nabla u)=0=g(\nabla v, \nabla v)$ have  unique solutions $u$ and
$v$, realising the initial data $\overline  u$ and $\overline  v$,
defined on \bel{28X13.1}
 \Omega_*:= \{x\in [0,a_*], 0\le y \le b_0 -\mu_* x \}\times Y
\ee
(see Figure~\ref{figure2}), of differentiability class $C^3(\Omega_*)$, with $\nabla u$ and $\nabla v$ without zeros and linearly independent there, and satisfying %
\bean \lefteqn{
   u  \in
 L^\infty([0,b_0-\mu_* a_*];H^{\ell}([0,a_*]\times Y))
 }
 &&
\\
 &&
 \cap_{1\leq j\leq \ell-1}C^j([0,b_0-\mu_* a_*];H^{\ell-j}([0,a_*]\times Y))
 \,,
  \label{11V14.11}
\\
 \lefteqn{
   v  \in
 L^\infty([0,a_*];H^{\ell}([0,b_0-\mu_* a_*]\times Y))
 }
 &&
  \nonumber
\\
 & &
  \cap_{1\leq j\leq \ell-1}C^j([0,a_*];H^{\ell-j}([0,b_0-\mu_* a_*]\times Y))
 \,.
\eeal{25X13.21}
The solutions $u$ and $v$ are smooth if the metric and the initial
data $\overline  u$ and $\overline v$ are.
\end{Theorem}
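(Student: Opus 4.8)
The plan is to reduce the eikonal equation to the characteristic Cauchy problem for a first-order symmetric hyperbolic system of the type treated in Theorem~\ref{T23X13.2}, apply that theorem, and then bootstrap the regularity. First I would fix attention on the null coordinate $v$, which is required to vanish on $\mcN^+=\{y=0\}$. Following \eq{25X13.11}--\eq{25X13.13}, differentiating $g^{\mu\nu}\partial_\mu v\,\partial_\nu v=0$ in the $x^\alpha$ directions turns the unknowns $\varphi_\alpha:=\partial_\alpha v$ into a solution of $A^\mu\partial_\mu\varphi=G$ with $A^\mu{}_\alpha{}^\beta=-g^{\mu\nu}\varphi_\nu\delta_\alpha^\beta$ and $G_\alpha=\tfrac12\partial_\alpha(g^{\mu\nu})\varphi_\mu\varphi_\nu$. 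The key structural point is that, because $\mcN^-=\{x=0\}$ and $\mcN^+=\{y=0\}$ are characteristic for $g$, we have $g^{xx}=0$ on $\mcN^-$ and $g^{yy}=0$ on $\mcN^+$; evaluating the coefficient matrices on the initial data one finds (after possibly a block rearrangement of $\varphi$ and a rescaling) that $A^x$ and $A^y$ have exactly the block-diagonal structure \eq{bid3} required by the abstract theorem, with the positivity $A^x_{\varphi\varphi}>0$, $A^y_{\psi\psi}>0$ coming from the fact that $\partial_x\overline u$, $\partial_y\overline v$ are strictly positive and $\nabla u$, $\nabla v$ are past pointing. I would spell out this splitting carefully: one component of $\varphi$ (roughly $\partial_y v$) propagates in the $x$-direction, the others in the $y$-direction, matching the $(\varphi,\psi)$ decomposition of Section~\ref{Sec:enid}.

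Next I would check the hypotheses of Theorem~\ref{T23X13.2}. The metric components lie in $\cap_{0\le j\le \ell}C^j([0,a_0]\times[0,b_0];H^{\ell-j}(Y))$, and the $A^\mu$, $G$ are smooth (rational, with non-vanishing denominators near the data) functions of $g$, $\partial g$ and of the unknown $\varphi$; since the theorem requires data of class $\ell-1$ for the system and we have data of class $\ell$ for the eikonal function, the loss of one derivative in forming $\varphi=\partial v$ is exactly absorbed, which is why the hypothesis is $\ell>\tfrac{n+6}2$ here rather than $\tfrac{n+9}2$. The transport equations \eq{23X13.p1}--\eq{23X13.p2} on $\mcN^\pm$ for this system are linear ODEs along the generators (they determine $\partial_\alpha v$ on $\mcN^-$ from $\overline v|_{\mcN^-}$ and on $\mcN^+$ from $\overline v|_{\mcN^+}=0$), hence are globally solvable on $[0,a_0]\times Y$ and $[0,b_0]\times Y$; the required $C^1$-bounds on the data follow from \eq{25X13.11x} by Sobolev embedding. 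Theorem~\ref{T23X13.2}, or rather Corollary~\ref{23X13.4}, then produces $\varphi$ on a neighborhood of $\mcN$ with the Sobolev regularity \eq{4III14.t1} with $k=\ell-1$; reconstructing $v$ by integrating $\partial_x v=\varphi_x$ from $\mcN^+$ (where $v=0$) gains one derivative, giving \eq{25X13.21}, and by the Sobolev embedding $H^{\ell-1-3i}\subset C^0$ for $\ell-1-3i>(n-1)/2$ one reaches $v\in C^3$ once $\ell>\tfrac{n+6}2$ (three applications of the bootstrap step, analogous to \eq{24X13.p13}). One must also verify that the $\varphi$ produced really is $\nabla v$ for a genuine function $v$: this is the standard integrability check, namely that $\partial_{[\alpha}\varphi_{\beta]}=0$ is propagated by the system from its vanishing on the data, so $\varphi$ is closed and hence exact on the simply-connected coordinate domain. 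Symmetrically one gets $u$ vanishing on $\mcN^-$ with $\partial_\alpha u$ solving the analogous system.

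The last points are the domain of existence and the non-degeneracy. The abstract theorem gives $\varphi$ on $[0,a_*]\times[0,b_0]\times Y$ for the $v$-system and $u$ on $[0,a_0]\times[0,b_*]\times Y$ for the $u$-system; but the statement asks for both $u$ and $v$ on the single \emph{triangular} region $\Omega_*=\{x\in[0,a_*],\,0\le y\le b_0-\mu_* x\}\times Y$. I would obtain this by a finite-speed-of-propagation / domain-of-dependence argument: choosing $\mu_*$ larger than the supremum of the relevant characteristic speed of the $u$-system (which is finite on a compact neighborhood of the data by the $C^{1,1}$ bounds already established), the solution $u$ on $[0,a_0]\times[0,b_*]\times Y$ extends, by re-solving the characteristic problem with $u$-data now on $\{y=b_* \}$ pushed along, to the wedge below the line $y=b_0-\mu_* x$; shrinking $a_*$ if necessary makes $b_0-\mu_* a_*>0$. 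Finally, $\nabla u$ and $\nabla v$ are nowhere zero because $g(\nabla u,\nabla u)=0$ with $\partial_x u>0$ on $\mcN^+$ persists (a null covector in Lorentzian signature is nonzero), and they are linearly independent on $\Omega_*$ because this holds on $Y$ (there $\nabla u\propto\omega_Y$, $\nabla v\propto\ell_Y$, distinct null directions) and is an open condition preserved by continuity after possibly shrinking $a_*$ and the $y$-range. Smoothness for smooth data is immediate from the smooth case of Theorem~\ref{T23X13.2}. I expect the \textbf{main obstacle} to be the bookkeeping in the first paragraph: exhibiting explicitly, using $g^{xx}|_{\mcN^-}=g^{yy}|_{\mcN^+}=0$, the precise block-diagonal form \eq{bid3} and the strict positivity \eq{strict} for the eikonal system, since the naive matrices $A^\mu{}_\alpha{}^\beta$ are not in that form until one chooses the right linear combinations of the $\varphi_\alpha$ as the $\varphi$- versus $\psi$-blocks and rescales by the (positive) transversal components.
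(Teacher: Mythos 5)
Your plan hinges on bringing the differentiated eikonal system \eq{25X13.12}--\eq{25X13.13} into the block-diagonal form \eq{bid3} and then invoking Theorem~\ref{T23X13.2}; this is where the argument breaks down. The coefficient matrices of the eikonal system are \emph{scalar} multiples of the identity, $A^\mu{}_\alpha{}^\beta = -g^{\mu\nu}\varphi_\nu\,\delta^\beta_\alpha$, and no fiberwise change of basis turns a scalar matrix $c\cdot\Id$ into one with a strictly positive block \emph{and} a vanishing block: those two requirements would force $c>0$ and $c=0$ simultaneously. Concretely, for the function $u$ one has $u\equiv 0$ on $\mcN^-$ so $\partial_y u|_{\mcN^-}=\partial_A u|_{\mcN^-}=0$, and $g^{xx}|_{\mcN^-}=0$ since $\mcN^-$ is characteristic, whence $A^x=-g^{x\nu}\varphi_\nu\cdot\Id$ \emph{vanishes identically} on $\mcN^-$. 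The system is a single characteristic-direction transport problem (the characteristics are the null geodesics, all tangent to $\mcN^-$ at $x=0$), not a doubly-null system, and Theorem~\ref{T23X13.2} is simply not applicable to it. The derivative count confirms the mismatch: applying Theorem~\ref{T23X13.2} to $\varphi=\partial v$, whose data lose one derivative relative to $\overline v$, would require $\ell-1>\frac{n+9}{2}$, i.e.\ $\ell>\frac{n+11}{2}$, well above the stated $\ell>\frac{n+6}{2}$ (your remark that the one-derivative loss ``explains'' the lower threshold runs in the wrong direction). The weaker threshold is precisely the dividend of the paper's bespoke energy estimate: the current $E^\mu=-h^{\alpha\beta}\partial_\alpha u\,\partial_\beta u\,\nabla^\mu u$ points along the (null) gradient $\nabla u$, so the flux through $\mcN^-$ vanishes identically, and the cascade of derivative losses in the general iteration scheme is avoided.

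The second gap concerns the triangular domain $\Omega_*$. You propose a finite-speed-of-propagation argument, taking $\mu_*$ larger than the characteristic speed; but the characteristics here are the null geodesics of $g$, the slanted boundary is only marginally spacelike for this system, and the sign of the energy flux through it is genuinely delicate. The paper's proof first changes coordinates $\bar x=\chi(x,y,x^A)x$ with $\chi$ chosen via \eq{6XI13.1} so that $\partial_x g^{\bar x\bar x}|_{\mcN^-}=0$ and $\partial_{\bar x}u\equiv 1$ on $\mcN^-$, and then shows by a Taylor expansion in $x$ (estimates \eq{12XI13.2}--\eq{6XI13.3}) that, for the tuned slope $\sigma_i=a_i/(2b_0)$, the conormal to the slanted boundary $II_{\sigma_i}$ pairs non-negatively with $\nabla u_i$, so that boundary term can be discarded with the correct sign. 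Without that normalization the $O(x^2)$ corrections are not controlled and the $i$-independence of the slope fails. A minor slip at the end: to reconstruct $v$ from $\partial v$ one integrates $\partial_y v$ from $\mcN^+=\{y=0\}$ where $v\equiv 0$, not $\partial_x v$; in the paper this issue does not arise because the approximating sequence $u_i$ is constructed as genuine functions, so $\psi_i=\partial u_i$ is automatically a gradient and the integrability check is unnecessary.
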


\begin{figure}[t]
\begin{center} {
\psfrag{a0}{\huge $a_*$} \psfrag{b0}{\huge $b_0$} \psfrag{aa}{\huge
$b_0-\mu_* a_*$} \psfrag{x}{\huge$x$} \psfrag{y}{\huge$y$}
\psfrag{nm}{\huge $\mcN^-$} \psfrag{np}{\huge $\mcN^+$}
\psfrag{0}{\huge $0$}
\psfrag{omega}{\huge $\Omega_{*}$}
\resizebox{3in}{!}{\includegraphics{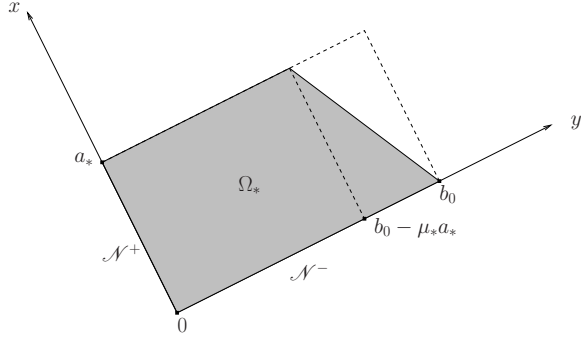}}
 \caption{The set $\Omega_{*}$.\label{figure2}}
 }
\end{center}
\end{figure}

\begin{Remark}{\rm
 \label{R28X13.1}
The constant $\mu_*$
  is only needed for the function $v$, and can be set to zero if $u$
only is considered. The functions $u$ and $v$  possess
differentiability properties similar to those in
\eq{11V14.11}-\eq{25X13.21} on that part of $\Omega_*$ which is not
covered by \eq{11V14.11}-\eq{25X13.21}, which we didn't exhibit as
the result is somewhat cumbersome to write formally.
}\qed\end{Remark}

\begin{proof} Since the metric is $C^2$, existence follows from the
arguments above, and we only need to justify the regularity
properties. The result  is established through a simplified version
of the arguments from Section~\ref{sec:iterati-scheme}. Special care
has to be taken in the proof to make sure that there are no unwanted
contributions to the  energy from some boundaries.

Let us start with the initial data for the function $u$. On $\mcN^+$
the inverse metric takes the form, for any function $\chi$, (see,
e.g., \cite[Appendix~A]{CCM2})
\bel{26X13.1}
 \overline {g(\nabla \chi, \nabla \chi)}|_{\mcN^+} = \overline{g}^{xx}(\partial_ x\overline  \chi )^2
 +
 2\overline{g}^{xy}\partial_x \overline  \chi\, \overline{\partial_y \chi}
 +
 2\overline{g}^{xA}\partial_x \overline  \chi\, \partial_A\overline  \chi
 +
 \overline{g}^{AB}\partial_A \overline  \chi\, \partial_A\overline  \chi
 \,.
\ee
Since neither $\overline{g}^{xy}$ nor $\partial_x \overline u$ has
zeros, it follows from \eq{26X13.1} that the equation $$g(\nabla u,
\nabla u)|_{\mcN^+}=0$$ allows us to calculate
$$
 \overline{\partial_y u}\in \cap_{0\leq j\leq \ell-1}C^j([0,a_0];H^{\ell-j-1}(Y))
$$
on $\mcN^+$ in terms of $\overline g$ and the tangential derivatives
of $\overline  u$, leading to
$$
  \overline \psi|_{\mcN^+} \equiv (\partial_\mu u) |_{\mcN^+} \in \cap_{0\leq j\leq \ell-1}C^j([0,a_0];H^{\ell-j-1}(Y))
  \,.
$$

Next, we will need to control $\nabla u$ on $\mcN^-$, this proceeds
as follows: On $\mcN^-$ we have, for any function $\overline  u$,
\bel{26X13.2}
 \overline {g(\nabla u, \nabla u)} = \overline{g}^{yy}(\partial_ y\overline  u )^2
 +
 2\overline{g}^{xy}\partial_y \overline  u\, \overline{\partial_x u}
 +
 2\overline{g}^{yA}\partial_y \overline  u\, \partial_A\overline  u
 +
 \overline{g}^{AB}\partial_A \overline  u\, \partial_A\overline  u
 \,.
\ee
Since $\overline  u\equiv 0$ in our case, the equation
$\overline{g(\nabla u, \nabla u)}=0$ holds identically. We further
have
\bel{26X13.2a}
 \nabla u|_{\mcN^-} = \overline {g ^{xy} \partial_x u} \partial_y
 \,,
\ee
and we need an equation for $\partial_x u|_{x=0}$. For this we can
use the $u$-equivalent of \eq{25X13.11},
\bel{26X13.4}
 g^{\mu\nu} \partial_\nu u \, \partial_\mu \partial_x u =
 -\frac 12 \partial_x(g^{\mu\nu}) \partial_\mu u \, \partial_\nu u
 \,,
\ee
which on $ {\mcN^-}$ becomes
\bel{26X13.5}
 \overline g^{xy } \overline{\partial_x u} \,   \partial_y (\overline{\partial_x u})  =
 -\frac 12 \partial_x g^{xx}|_{\mcN^-}  (\overline{\partial_x u})^2
\ee
(note that $g^{xx}$ vanishes on $\mcN^-$, but there is a priori no
reason why $ \partial_x g^{xx}|_{\mcN^-} $ should vanish as well).
From this it is straightforward to obtain
\bel{26X13.6}
   {\partial_x u}|_{\mcN^-}
   \in
   \cap_{0\leq j\leq \ell-1}C^j([0,b_0];H^{\ell-j-1}(Y))
 \,.
\ee
Summarising:
\beaa
 &
  \overline \psi|_{\mcN^+} \equiv (\partial_\mu u) |_{\mcN^+} \in  \cap_{0\leq j\leq \ell-1}C^j([0,a_0];H^{\ell-j-1}(Y))
  \,,
  &
\\
 &
  \overline \psi|_{\mcN^-}   \in  \cap_{0\leq j\leq \ell-1}C^j([0,b_0];H^{\ell-j-1}(Y))
  \,.
  &
\eeaa

We continue with the energy inequality. Let $h=h_{\alpha\beta} dx^
\alpha dx^ \beta$ be any smooth Riemannian metric on
$\Omega_{a_0,b_0}\times Y$. The $L^2$-energy-density vector
associated with \eq{25X13.12} can be defined as
\bel{26X13.7}
 E^\mu := h(\psi, A^\mu \psi) = - \psi^\mu h(\psi,\psi) =  -h^{\alpha\beta} \partial_\alpha u  \, \partial_\beta u \, \nabla^\mu u
  \,.
\ee
Similarly to Section~\ref{Sec:enid}, the energy inequality with $k=0$
is obtained by integrating the divergence of $e^{-\lambda y } E^\mu
$ over a suitable set, say $\Omega_{a,b,\sigma}$, with $0\le a \le
a_0$, $0\le b \le b_0$, with
\bel{26X13.8}
 \Omega_{a,b ,\sigma}:=\{ 0\le y \le b
 \,,
 \,
 0\le \bar x \le a -\sigma y
 \}
 \,,
\ee
where $\bar x$ will be defined shortly, and where $0<\sigma<
a_0/2b_0 $ is a small constant which will also be determined
shortly.

Indeed, for further purposes we will need to have good control of
the causal character of the level sets of $x$.  This is achieved by
modifying $x$ so that,  after suitable redefinitions, $\partial_x
g^{xx}|_{\mcN^-}=0$. For this, let us  pass to a new coordinate
system
$$
 \bar x = \chi(x,y,x^A) x
 \,,
 \
 \bar y = y
 \,,
 \
 \bar x^A = x^A
 \quad
 \Longrightarrow
  \quad
  \partial_x = \partial_x(x\chi) \partial_{\bar x}
  \,,
$$
with a function $\chi$ which is determined as follows: We have
\beaa
 g^{\bar x \bar x} & = & g^{\mu\nu} \big (
    x \frac {\partial \chi}{\partial x^\mu} +  \frac {\partial x}{\partial x^\mu} \chi)(
    x \frac {\partial \chi}{\partial x^\nu} +  \frac {\partial x}{\partial x^\nu} \chi)
\\
 & = &     x^2 g^{\mu\nu} \frac {\partial \chi}{\partial x^\mu}   \frac {\partial \chi}{\partial x^\nu}
    + 2 x g^{\mu x }
      \frac {\partial \chi}{\partial x^\mu}  \chi
    +g^{xx }   \chi^2
    \,,
\eeaa
leading to
\beaa
 \partial_{\bar x} g^{\bar x \bar x} \big|_{  x = 0}
 & = &   \frac 1 \chi \partial_{  x} g^{\bar x \bar x} \big|_{  x = 0}
     =  2   g^{y x }\big|_{\bar x = 0}
       \partial_y \chi
    +\partial_x g^{xx }\big|_{  x = 0}   \chi
    \,.
\eeaa
This will vanish if we set
\bean
 \chi (0,y,x^A) &=& \exp\bigg(-\frac 12 \int_0^y \frac{\partial_x g^{xx }}{g^{xy}}\bigg|_{  x = 0}(s,x^A) dx
 \bigg)\times  \frac{\partial \overline u}{\partial x}(0,0,x^A)
\\
 &&
  \in  \cap_{0\leq j\leq \ell-1}C^j([0,b_0];H^{\ell-j-1}(Y))
 \,.
\eeal{6XI13.1}
We let $\chi(x,y,x^A)$ be any extension of $\chi(0,y,x^A)$ which is
smooth in all its arguments for $x>0$; the existence of such
extensions is standard. We pass  to the new coordinate system, and
change the notation $(\bar x,\bar y,\bar x^A)$ back to  $(x,y,x^A)$
for the new coordinates. The factor $ \frac{\partial \overline
u}{\partial x}(0,0,x^A)$ in \eq{6XI13.1} has been chosen to obtain
\bel{6XI13.1a}
 \partial _{  x}u(x=0,y=0,x^A)=1
 \,.
\ee
In the new coordinates, from \eq{26X13.5} we find
\bel{6XI13.2}
 \partial _{  x}u(x=0,y,x^A)=1
 \,.
\ee

Now, $\partial \Omega_{a,b,\sigma}$ takes the form
\bean
 \partial \Omega_{a,b ,\sigma}
  & = &
   \mcN^- \cup \mcN^+ \cup \underbrace{\big(\{y\in [0,b ]\,, x= a -\sigma y\}\times Y\big) }_{=:II_\sigma}
\\
 &&
  \cup    \underbrace{\big(\{y=b  \,, 0\le x \le a -\sigma b \}\times Y\big)}_{=:I_\sigma}
 \,,
\eeal{26X13.9}
see Figure~\ref{figure1}.
\begin{figure}[th]
\begin{center} {
\psfrag{a0}{\huge $a_0$} \psfrag{b0}{\huge $b_0$} \psfrag{c0}{\huge
$a_0-\sigma b_0$} \psfrag{x}{\huge$x$} \psfrag{y}{\huge$y$}
\psfrag{nm}{\huge $\mcN^-$} \psfrag{np}{\huge $\mcN^+$}
\psfrag{0}{\huge $0$} \psfrag{1s}{\huge $II_{\sigma}$}
\psfrag{2s}{\huge $I_{\sigma}$} \psfrag{omega}{\huge
$\Omega_{a_0,b_0,\sigma}$}
\resizebox{3in}{!}{\includegraphics[scale=1]{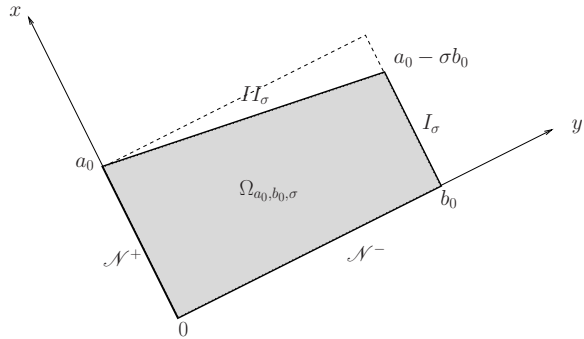}}
\caption{ The set $\Omega_{a_0,b_0,\sigma}$.\label{figure1}}
}
\end{center}
\end{figure}
Before  analyzing the boundary terms arising, recall that we wish to
obtain estimates on various norms of the field. This will be
achieved by repeating the inductive scheme of
Section~\ref{sec:iterati-scheme}, but now using  Sobolev spaces
associated with the level sets of $y$ instead of $H^k(Y)$. For this
we let $\overline  u_i$ be a sequence of smooth functions on
$\mcN^+$ converging to $\overline  u$, with $\overline  u_i=0$ on
$\mcN^-$, each $u_i$ solving a linear equation as done in
Section~\ref{sec:iterati-scheme} with coefficients determined by
$u_{i-1}$. Let $c_1$ and $C_1$ be any positive constants   such that
\bel{27X13.1}
 \sup_{i \in \N  }
 \sup_{   \mcN } \big( |\partial u_i| + |\partial_x \partial u_i| + |\partial^2_x \partial u_i| \big) \le C_1
 \,,
 \quad
 \inf_{i \in \N }
 \inf_ {  \mcN }   \partial_x  u_i\ge  c_1 > 0
 \,.
\ee
Note that a pair of such constants can be determined purely in terms
of the initial data for $u$ on $\mcN$.

To the definition of the sequence $0<a_i$, given just before
\eq{bid9m}, we add the requirement that $a_i \le 1$, and that
\beal{27X13.1}
 &
 \inf_{\Omega_{ a_i,b_0,\sigma_i} } \partial_x  u_i\ge  \frac 12 c_1
 \,,
 &
\\
&
 \sup_{\Omega_{ a_i,b_0,\sigma_i} }\big(  |  \partial u_i|
  +
   |\partial_x  \partial u_i|
  +
   |\partial_x^2  \partial u_i|
    \big)
    \le  C_1 +1
 \,.
 &
\eeal{27X13.1b}

Consider the $L^2$-energy identity on $ \Omega_{a_i,b_0,\sigma_i} $
associated with the equation satisfied by $u_i$,
\bel{11V14.1}
 \int _{\partial\Omega_{a_i,b_0,\sigma}} e^{-\lambda y} E^\mu n_\mu
  =
  \int _{\Omega_{a_i,b_0,\sigma}} \nabla_\mu ( e^{-\lambda y} E^\mu)
  \,,
\ee
with $E^\mu$ given by
\bel{26X13.7x}
 E^\mu := h(\psi_{i}, A^\mu(\psi_{i-1}) \psi_i) = - \psi_{i-1}^\mu h(\psi_i,\psi_i) =  -h^{\alpha\beta} \partial_\alpha u_{i }  \, \partial_\beta u _i\, \nabla^\mu u_{i-1}
  \,.
\ee
On  $\mcN^-=\{x=0\}$ the conormal $n_\mu dx^\mu$ satisfies $n_y=
n_A=0$, so by \eq{26X13.2a} the boundary integrand vanishes
$$
  h(\psi_i,\psi_i)n_\mu \nabla ^\mu u_{i-1}     =
   h(\psi_i,\psi_i) n_x \underbrace{\nabla ^x u_{i-1} }_{=0} =
  0
  \ \mbox{on $\mcN^-$}
 \,.
$$

On  $\mcN^+$ the conormal $n_\mu dx^\mu$ satisfies $n_x= n_A=0$, so
by \eq{26X13.2a} the boundary integrand satisfies
\beaa
 h(\psi_i,\psi_i)\nabla ^\mu u_{i-1} \, n_\mu |_{\mcN^+} & = &
  h(\psi_i,\psi_i) \nabla ^y u_{i-1} \, n_y |_{\mcN^+}=
  h(\psi_i,\psi_i) g^{xy}\partial_x u_{i-1} \, n_y |_{\mcN^+}
\\
  & \sim &
   h(\psi_i,\psi_i)
 \,,
\eeaa
where  ``$f\sim g$''   means that the functions $f$ and $g$ are
bounded by positive constant multiples of each other.

On  $II_{\sigma}$ the conormal $n=n_\mu dx^\mu$ is proportional to $
dx+\sigma dy$. Differentiability of the metric implies that there
exists a constant $C_2$ such that, for $x>0$,
$$
 |g_{\mu\nu}(x,\cdot) - g_{\mu\nu}(0,\cdot)|\le C_2 x
 \,.
$$

By definition of $a_i$, on $\Omega_{a_i,b_0,0}$  it holds that
\bel{12XI13.1}
 |\partial_x^2 \partial u_i |\le (1+C_1)
 \,.
\ee
Since $u_i$ vanishes on $\mcN^-$ so do $\partial_y u_i$ and
$\partial_A u_i$. Further, from \eq{6XI13.2}, $\partial_x\partial_y
u_i$ and $\partial_x\partial_A u_i$ vanish on $\mcN^-$ as well and
we obtain
\bel{12XI13.2}
 |\partial_A u_i| + |\partial_y u_i| \le (1+C_1) x^2
 \,.
\ee
This gives, writing the conormal $n_\mu$ to the level sets of
$II_{\sigma_i}$ as $n_{\mu}dx^\mu = n_x(dx+\sigma_i dy)$,
 with $n_x\ge \delta >0$ and $0<\sigma_i<1$, for any $i$,
\bean
 g^{\mu\nu}  n_\nu \partial_\mu u_i
  & = &
   n_x
 ( g^{\mu x}  + \sigma_i
 g^{\mu y})\partial_\mu u_i
\\
  & = &  \underbrace{n_x}_{\ge \delta}
 \big(
 ( \underbrace{g^{x x}}_{O(x^2)\ge -C C_3 x^2}  + \sigma_i
 \underbrace{g^{x y}}_{\ge c})\underbrace{\partial_x u_i }_{1+O(x) \ge 1 -(1+C_1)x \ge 1/2}
 \nonumber
\\
 &&   +
 \underbrace{( g^{y x}  + \sigma_i
 g^{y y})\partial_y u_i  +
 ( g^{ A x}  + \sigma_i
 g^{A y})\partial_A u_i  }_{\ge -C(1+C_1) x^2}
 \big)
 \nonumber
\\
 &\ge &
  \frac {\delta} 2 \big(\underbrace{(c\sigma_i-\underbrace{CC_3 x^2}_{\le C C_3 a_i x})}_{\ge c \sigma_i /2}
  -  \underbrace{2 C  (1+C_1) x^2}_{\le 2 C (1+C_1) x a_i}
  \big)
  \ge 0
 \,,
  \phantom{xxx}
\eeal{6XI13.3}
for
\bel{6XI13.5}
 x\le \min\big(\frac 1 {2(1+C_1)}, \frac{c  }{2C C_3 }\times \frac{  \sigma_i}{  a_i}
 , \frac{c  }{4C (1+C_1) }\times \frac{  \sigma_i}{  a_i}
  \big)
 \,,
\ee
where $C_1$ is as in \eq{12XI13.1} and
\bel{6X13.4}
 C_3 = \sup | \partial_x^2 g^{\mu\nu}|
 \,.
\ee
Choosing
\bel{7XI13.1}
 \sigma_i = \frac {a_i}{2b_0}
\ee
leads to an $i$-independent bound in \eq{6XI13.5}.

The $II_{\sigma_i}$ boundary integral  gives now a contribution to
the energy identity which we  simply discard, replacing equality by
an inequality.

Note that with this choice we have
\bel{27X13.5}
 [0,\frac 12 a_i]\times [0,b_0]\times Y \subset \Omega_{a_i,b_0,\sigma_i}
  \subset [0,  a_i]\times [0,b_0]\times Y
 \,.
\ee

On $I_{\sigma_i}$ the conormal $n_\mu dx^\mu$ takes the form $n_y dy
$, and from \eq{26X13.2} the boundary integrand takes the form
\beaa
  e^{-\lambda b_0} h(\psi_i,\psi_i)\nabla ^\mu u_{i-1} \, n_\mu |_{I_{\sigma_i}} & = &
  e^{-\lambda b_0} h(\psi_i,\psi_i)\big( g^{xy} \partial_x u_{i-1}  \, n_y  +O(x)\big)|_{y=b_0}
\\
  & \sim &
   h(\psi_i,\psi_i)
 \,.
\eeaa

As a result we obtain
\bel{27X13.8}
 \forall \ 0\le b \le b_0
 \quad
 \mcE_{0,\lambda}[\psi_i,b] \le  C \mcE_{0,\lambda}[\psi_i,0]+ \int_{\Omega_{a_i,b,\sigma_i}} \nabla_\mu (e^{-\lambda y} E^\mu)
 \,,
\ee
similarly for higher-order energy inequalities, with
\bean
 \mcE_{k,
 \lambda}[\psi_i,b]&=&
e^{-\lambda b}
 \sum_{0\le
 j+\ell\le k}
 \int_{[0,a_i-\sigma_i b]\times Y} |
 \znabla_{q_{r_1}}\ldots \znabla_{q_{{r_j}}}  \partial_x^\ell \psi_i|^2
  dx
 d\mu_Y
\\
 & =&
 e^{-\lambda b} \sum_{0\le \ell \le k}
 \int_0^{a_i-\sigma_i b}
  \|\partial_x^\ell\psi(x,b)\|^2_{H^{k-\ell}(Y)} dx
 \,.
\eeal{bid827X13}

A simpler version of the arguments of
Section~\ref{sec:iterati-scheme} gives the result.

The estimates for $v$ are essentially standard, as we only need to
solve for a short-time in the evolving direction. Should one want to
use an iterative argument as in Section~\ref{sec:iterati-scheme}, we
note that given $\mu_*>0$ as in the statement of the theorem we can
impose an $i$-independent upper bound on the $a_i$'s so that the
boundary
$$
 \{x\in [0,a_*]\,, 0 \le y \le b_0-\mu_*x\} \times Y
$$
gives a non-negative contribution to the energy-identity, and hence
is harmless when considering energy estimates.
\end{proof}

\section{The wave-equation in doubly-null coordinates}
 \label{s25X13.2}

We are ready now to pass to the PDE problem.  Let $W$ be a vector
bundle over $\mcM$. We will be seeking a section  $h$ of   $W$,
defined on a neighborhood of $\mcN^-$  and of differentiability
class  at least $C^2$ there, such that the following hold:
\beqar \Box_g h&=& H(h, \nabla h, \cdot) \quad \mbox{on
$I^+(\mcN^+\cup \mcN^-)$,}
  \label{Eqonde}
\\
  h&=& h^+ \quad \text{on} \quad \mcN^+
  \,,
  \arrlabel{IV+}
\\
  h&=& h^- \quad \text{on} \quad \mcN^-\label{IV-}
  \,.
\eeqar
with prescribed fields $h^\pm$, for some map $H$, allowed to depend
upon the coordinates. For simplicity we assume $H$ to be smooth in
all its arguments, though the results here apply to maps of finite,
sufficiently large, order of differentiability in $h$ and $\nabla
h$, and of Sobolev differentiability in the coordinates: the
resulting thresholds can be easily read from the conditions set
forth in Section~\ref{Sec:enid}.

Let $(u,v,x^A)$ be a coordinate system as in Section~\ref{s25X13.1},
and   let  $\omega, $ and $\ell$ be the vector fields defined there,
with
\bel{decomp}
 g(\omega, \omega)= g(\ell, \ell)= 0
 \,,
 \quad g(\omega, \ell) = -2
 \,.
\ee

As already pointed out, $\ell$ and $\omega$ are determined up to one
multiplicative strictly positive factor,
\bel{14XI13.2}
 \ell\mapsto \alpha \ell\,,
 \quad
  \omega\mapsto \alpha^{-1} \omega
  \,,
  \quad
  \alpha=\alpha(u,v,x^A)>0
  \,.
\ee

Now, every vector orthogonal to $\ell$ is tangent to the level sets
of $v$. Similarly, a vector orthogonal to $\omega$ is tangent to the
level sets of $u$. Hence vectors orthogonal to both have no $u$- and
$v$-components in the coordinate system above. We can thus write
$\left(Vect\{\omega, \ell\}\right)^{\bot}=Vect\{e_B; \; B= 1,
\ldots, n-1\}$, where the $e_B$'s form an ON-basis of $TY$. Thus
$$
 g(e_A,e_B)=\delta^A_B
 \,,
 \quad
  \mbox{and}
  \quad
 e_A = e_A{}^B\partial_B  \
 \Longleftrightarrow
 \
 e_A{}^u= 0 = e_A{}^v
 \,.
$$

For further purposes, we note that the $e_A$'s are determined up to
an $O(n-1)$ rotation:
\bel{14XI13.1} e_A\mapsto \omega_A{}^B e_B \,, \quad
 \omega_A{}^B = \omega_A{}^B(u,v,x^C) \in O(n-1)
 \,;
\ee
this freedom can be used to impose constraints on the projection on
$Vect\{e_B; \; B= 1, \ldots, n-1\}$ of $\nabla_\ell e_A$ or
$\nabla_\omega e_A$.

The inverse metric in terms of this frame reads
$$
 g^{\sharp}= -{1\over2}(\ell\otimes\omega + \omega\otimes\ell)  + \sum_B e_B\otimes e_B
 \,,
$$
so that the wave operator   takes the form
$$
 -{1\over2}\nabla_\omega\nabla_\ell   -{1\over2}\nabla_\ell\nabla_\omega  + \sum _C\nabla_{e_C}\nabla_{e_C}  + \ldots
 \,,
$$
where $\ldots$ denotes first- and zero-derivative terms arising from
the precise nature of the field $h$. This can be rewritten as
$$
 -\nabla_\omega\nabla_\ell     +  \sum _C\nabla_{e_C}\nabla_{e_C}  -\frac 12 [\nabla_\ell, \nabla_\omega] + \ldots \,,
$$
or
$$
  -\nabla_\ell\nabla_\omega     +  \sum _C\nabla_{e_C}\nabla_{e_C} - \frac 12 [\nabla_\omega, \nabla_\ell] +\ldots
$$
(where the commutator terms can be absorbed in ``$\ldots$" in any
case). Setting
\bel{phi-psi}\varphi_0=\psi_0=h, \; \varphi_A=\psi_A= e_A(h), \;
\varphi_+=\omega(h), \; \psi_-= \ell(h) \,,
  \ee
leads to the following set of equations:
\bean \ell(\varphi_0)&=&\psi_0
 \,,
 \nonumber \\
\ell(\vp_+)-\sum_C {e_C}(\psi_C)&=& H_{\varphi_+}
 \,,
 \label{Eqonde1}\\
\ell(\vp_C) -e_C(\psi_-) &=&H_{\varphi_C}
 \,,
 \nonumber\\
\omega(\psi_-)-\sum_C e_C(\vp_C)&=& H_{\psi_-}
 \,,
  \label{Eqonde2}
\\
    \omega(\psi_C) -e_C(\vp_+) &=& H_{\psi_C}
 \,,
 \nonumber
\\
 \omega(\psi_0)&=& \vp_0
\,,
\eea
where $H_{\varphi_+}$, etc., contains $H$ and all remaining terms
that do not involve second derivatives of $h$.

This is a first-order system of PDEs in the unknown $f=
\begin{pmatrix}\vp\\ \psi \end{pmatrix}$,  with $ \varphi=
\begin{pmatrix}\vp_0\\ \vp_+\\\vp_A \end{pmatrix}$  and $ \psi=
\begin{pmatrix}\psi_0\\ \psi_-\\ \psi_A \end{pmatrix}$. Let us check
that it is symmetric hyperbolic, of the form considered in
Section~\ref{Sec:enid}. We have
$$
 A^\mu\nabla_\mu f = \newg(f)
  \,,
$$
equivalently
\bel{eqonde-sys} \begin{pmatrix}
 A^\mu_{\vp\vp}& A^\mu_{\vp\psi} \\  A^\mu_{\psi\vp}& A^\mu_{\psi\psi}
\end{pmatrix} \nabla_\mu
\begin{pmatrix}
 \vp\\ \psi
\end{pmatrix}= \begin{pmatrix}\newg_\vp\\\newg_\psi\end{pmatrix}
 \,,
  \ee
with
\bea
 &
 A^u_{\vp\vp}=\ell^u\cdot \Id\,,
 \quad
 A^u_{\vp\psi}=A^u_{\psi\vp}=A^u_{\psi\psi}= 0  \,,
 &
\\
 &
 A^v_{\psi\psi}=\omega^v\cdot \Id\,,
 \quad
 A^v_{\vp\psi}=A^v_{\psi\vp}=A^v_{\vp\vp}= 0  \,,
 &
\\
 &
 A^B_{\vp\psi} =A^B_{\psi\vp} =\;\;\;\; -\begin{pmatrix}0&0&0&\ldots&0 \\0&0&\delta^{B}_{1}&\ldots&\delta^{B}_{n-1} \\0&\delta^{B}_{1}&0&\ldots&0 \\ \vdots& \vdots&\vdots& & \vdots\\0&\delta^{B}_{n-1}&0&\ldots&0  \end{pmatrix}
  \,,
&
\\
 &
 A^B_{\vp\vp} = 0\,,
  \quad
    A^B_{\psi\psi}= \omega^B\cdot \Id
 \,,
&
\\
 &
  \newg_\vp(\vp,\psi)= \begin{pmatrix}  \psi_0\\ H_{\vp_+}\\H_{\vp_C}\end{pmatrix}\quad \text{and}\quad
 \newg_\psi(\vp,\psi)= \begin{pmatrix}  H_{\psi_-}\\H_{\psi_C}\\\vp_0\end{pmatrix}\,.
 &
\eea

\section{The existence theorem}
 \label{ss28X13.1}

We denote by $\overline {\phi}^-$ the restriction of a map $\phi$ to
$\mcN^-$ and by $\overline {\phi}^+$ the restriction of $\phi$ to
$\mcN^+$.

In order to apply the results  of the previous sections  to the
Cauchy problem  (\ref{IV+})  we need to show, given smooth data $
h^+$ on $\mcN^+$  and $  h^-$ on $\mcN^-$, how to determine the
initial data for $f$ on a suitable subset of  $\mcN^+\cap \mcN^-$,
and that these  fields are in the right spaces. We recall that
$$
  T\mcN^+= Vect\{\ell,\; e_1,\ldots,\; e_{n-1}\}
   \ \mbox{and} \
 T\mcN^-= Vect\{\omega,\;  e_1,\ldots,\; e_{n-1}\}
 \,,
$$
which implies that:
$$
\overline{ \omega ( h)}^-=  \omega (h^-)  \,, \
 \overline{ \ell (h)}^+=  \ell (h^+) \,,
 \
\overline{ {e_B}( h)}^\pm=  {e_B}(  {h}^\pm)\,.
$$
The remaining restrictions $\overline{ \ell (h)}^-$ and $\overline{
\omega (h)}^+$ will be determined using the wave equation: Indeed,
considering the restriction of (\ref{Eqonde1}) to $\mcN^+$ and the
restriction of (\ref{Eqonde2}) to $\mcN^-$ leads to the following,
in general non-linear, \emph{transport equations} for $\overline{
\ell(h)}^-$ and $\overline{ \omega (h)}^+:$
\begin{eqnarray}\label{Eq1}
- \omega(\overline{ \ell (h)}^-) +\overline{g^{BC}}^- \nabla_{e_B}
\nabla_{e_C}h^-&=&
 H_{\psi^-}(h^-, \partial h^-, \overline{ \ell (h)}^-, \cdot)
 \,,
\\
\overline{ \ell(h)}^-\Big|_{\mcN^+\cap\mcN^-} &=&
\ell(h^+)\Big|_{\mcN^+\cap\mcN^-} \,, \nonumber
\end{eqnarray}
and
\begin{eqnarray}
\label{Eq2} - \ell(\overline{ \omega (h)}^+) +\overline{g^{BC}}^-
\nabla_{e_B} \nabla_{e_C}h^+
 &=&
  H_{
 \varphi^+}(h^+, \partial h^+, \overline{ \omega (h)}^+, \cdot)
  \,,
\\
\;\overline{ \omega (h)}^+\Big|_{\mcN^+\cap\mcN^-} &=&
\omega(h^-)\Big|_{\mcN^+\cap\mcN^-} \nonumber
 \,.
\end{eqnarray}
These are ODEs along the integral curves of the vector fields
$\omega$ and $\ell$.

For every generator, say $\Gamma$, of $\mcN^-$ let $\Gamma_0$ be the
maximal interval of existence of the solution  of the transport
equation \eq{Eq2}. Thus the set
$$
  \mcN^-_0=\cup_\Gamma \Gamma_0 \subset\mcN^-
$$
is the largest subset of $\mcN^-$ on which the solution of the
transport equation, with the required data on $\mcN^-\cap \mcN^+$,
exists. By lower semi-continuity of the existence time of solutions
of ODES the set $\mcN^-_0$ is an open subset  of $\mcN^-$.

The set $\mcN_0^+$ is defined analogously.

Applying the construction of Section~\ref{ss29X13.1} to
$\mcN^-_0\cup \mcN^+_0$ instead of $\mcN^-\cup \mcN^+$, we obtain a
double-null coordinate system $(u,v,x^A)$ near $\mcN^+_0\cup
\mcN^-_0$ in which the function  $v$ runs from  $0$ to $\infty$
along all generators of $\mcN^-_0$, and the function $u$ runs from
$0$ to $\infty$ along all generators of $\mcN^+_0$.
Theorem~\ref{T23X13.2} and Remark~\ref{R29X13.11} apply, leading to:

\begin{Theorem}
 \label{T29X13.1}
 Let $\ell\ge \frac{n+11}{2}$.
Consider the Cauchy problem   \eq{IV+}  for a semilinear system of
wave equations, with $H=H(h,\nabla h,\cdot)$  of $C^\ell$
differentiability class
 in all arguments. Without loss of generality we can parameterize $\mcN^\pm$ by $[0,\infty)\times Y$, with the level sets of the first coordinate transverse to the generators of $\mcN^\pm$. Given the initial data
\bel{8V14.1}
 h^\pm\in \cap_{j=0}^\ell C^{j}([0,\infty), H^{\ell-j}(Y))
\ee
denote by
$$
 \mcN_0= \mcN_0^+\cup \mcN_0^- \subset \mcN^+\cup \mcN^-
$$
the maximal domain of existence on $\mcN^-\cup \mcN^+$ of the
transport equations \eq{Eq1}-\eq{Eq2}. There exists a neighborhood
$\mcV$ of $\mcN_0$ and a unique solution $h$ defined there with the
following properties: Reparameterising the generators of
$\mcN^\pm_0$ if necessary, we can  obtain $\mcN^\pm_0\approx
[0,\infty)\times Y$. Then for every $i\in \N$ there exist
$a_i,b_i>0$ so that the set (see Figure~\ref{fig4})
$$
 \mcV_i:=\bigg(\underbrace{\big([0,a_i]\times [0,i]\big)\cup \big([0,i]\times [0,b_i]\big)}_{=:\mcU_i}\bigg)\times Y
$$
is included in $\mcV$, and we  have
\bean \lefteqn{h\in
 L^{\infty}\Big(\mcU_i; H^{\ell-2}(Y)\Big)\cap W^{1,\infty}\Big(\mcU_i; H^{\ell-3 }(Y)\Big)
  }
  &&
\\
&&
  \cap _{0< 3j \le \ell - \frac{n+11}2}
    W^{j+1,\infty}\Big(\mcU_i; H^{\ell-2-3j}(Y)\Big)
     \subset
     C^{\ell_1-1,1}(\mcU_i\times Y)
    \,,
     \phantom{xxxx}
\eeal{10III14.t2}
with the last inclusion holding provided that $\ell > \frac{n+17}
2$, with $\ell_1\ge 1$ being the largest integer such that
$\ell-3\ell_1>\frac{n+11} 2$.
%
The solution depends continuously on initial data,
 and is smooth if the initial data are.
\end{Theorem}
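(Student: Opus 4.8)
The plan is to reduce the semilinear wave system \eqref{Eqonde}--\eqref{IV-} to the abstract symmetric hyperbolic framework of Chapter~\ref{sec:iterati-scheme} and then invoke Theorem~\ref{T23X13.2}, Corollary~\ref{23X13.4} and Remark~\ref{R29X13.11}. First I would fix the geometry. From the data $h^\pm$ the tangential first derivatives of $h$ along $\mcN^\pm$, as well as $\overline{\omega(h)}^-=\omega(h^-)$ and $\overline{\ell(h)}^+=\ell(h^+)$, are read off directly, while the remaining restrictions $\overline{\ell(h)}^-$ and $\overline{\omega(h)}^+$ are produced by integrating the transport ODEs \eqref{Eq1}--\eqref{Eq2} along the generators of $\mcN^-$ and $\mcN^+$. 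These ODEs have a maximal domain of existence $\mcN_0^\pm\subset\mcN^\pm$, open by lower semicontinuity of the lifespan, and I set $\mcN_0=\mcN_0^+\cup\mcN_0^-$. Applying the $\R$-parametrisation construction of Section~\ref{ss29X13.1} to $\mcN_0^-\cup\mcN_0^+$ then gives, in a neighborhood of each pair of generators issued from a common point of $Y$, a doubly-null coordinate system $(u,v,x^A)$ in which $v$ runs over $[0,\infty)$ along the generators of $\mcN_0^-$ and $u$ runs over $[0,\infty)$ along those of $\mcN_0^+$; by Theorem~\ref{T25X13.1} the optical functions $u,v$, and hence the adapted null frame $\{\ell,\omega,e_A\}$, are at least as regular as the metric.

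Next I would put $\Box_g h=H$ into the first-order form \eqref{eqonde-sys} with $f=(\varphi,\psi)$ as in \eqref{phi-psi} and with $A^\mu$, $\newg$ exhibited in Section~\ref{s25X13.2}. The point to check is that this is a system of the type treated in Chapter~\ref{Sec:enid}: the matrices $A^u=\ell^u\,\Id$ and $A^v=\omega^v\,\Id$ are determined purely by the geometry, hence $f$-independent, with $\ell^u>0$ and $\omega^v>0$ by construction, so that the block structure \eqref{bid3} and the positivity \eqref{strict} hold; since in addition $A^v$ does not involve $\varphi$ and $A^u$ does not involve $\psi$, Remark~\ref{R9X13.1} places us in the simpler regime governed by Proposition~\ref{Piter}. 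The Cauchy data $\overline\varphi,\overline\psi$ are built from the first jet of $h^\pm$, hence lie one Sobolev degree below \eqref{8V14.1}; applying Theorem~\ref{T23X13.2} with its integer parameter set to $\ell-1$, after verifying that the constant $\mcC$ of \eqref{Itcond23X13} is finite (which follows from \eqref{8V14.1}, the regularity of the coordinate construction, and the uniform $C^1$-bounds on the transport solutions on $\mcN_0^\pm$), gives a solution on a set $[0,a_*]\times[0,b_0]\times Y$. Corollary~\ref{23X13.4} upgrades it to a two-sided neighborhood of $\mcN^+\cup\mcN^-$, and Remark~\ref{R29X13.11}, using uniqueness on overlaps, patches the solutions over the various coordinate charts into a solution defined on a neighborhood $\mcV$ of all of $\mcN_0$; running the construction with $a_0=b_0=i$ yields the nested sets $\mcU_i$, $\mcV_i$.

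The differentiability statement \eqref{10III14.t2} is then a matter of bookkeeping. Theorem~\ref{T23X13.2} with parameter $\ell-1$ gives $f\in L^\infty H^{\ell-3}\cap_{0<3i\le\ell-1-\frac{n+9}2}W^{i,\infty}H^{\ell-2-3i}$; since the components of $f$ are $h$ together with all its first-order derivatives --- and in particular all of its $Y$-tangential derivatives --- these bounds promote $h$ itself by one Sobolev degree, so that $h\in L^\infty H^{\ell-2}\cap W^{1,\infty}H^{\ell-3}\cap_{0<3j\le\ell-\frac{n+11}2}W^{j+1,\infty}H^{\ell-2-3j}$, and the Hölder inclusion $C^{\ell_1-1,1}$, with $\ell_1$ the largest integer satisfying $\ell-3\ell_1>\frac{n+11}2$, follows from Sobolev embedding on the $(u,v)$-slices, exactly as in \eqref{24X13.p13}. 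Continuous dependence on the data is inherited from Theorem~\ref{T28X13.2} once one observes that the geometric reduction is itself continuous in the data, and propagation of smoothness is part of Theorem~\ref{T23X13.2}.

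The main obstacle is the geometric reduction rather than the energy estimates, which are those already in hand: one has to carry the doubly-null coordinate construction of Section~\ref{s25X13.1} through with controlled regularity --- this is the content of Theorem~\ref{T25X13.1}, whose proof requires careful treatment of the boundary contributions to the energy identity on the modified $x$-slices --- keep the Sobolev/Hölder differentiability budgets consistent across the change of variables, and glue the local coordinate charts, each valid only near a single pair of generators, into a genuine neighborhood of the possibly large, non-compact set $\mcN_0$.
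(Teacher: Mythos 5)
Your proposal is correct and follows the same route the paper takes: you reduce the wave equation to the first-order doubly-null system of Section~\ref{s25X13.2}, invoke the $\R$-parametrisation of Section~\ref{ss29X13.1} and the regularity Theorem~\ref{T25X13.1} for the optical functions, solve the transport equations \eqref{Eq1}--\eqref{Eq2} on $\mcN_0^\pm$, and then apply Theorem~\ref{T23X13.2} with Sobolev index $\ell-1$ (since the unknown $f$ consists of first derivatives of $h$), patching via Corollary~\ref{23X13.4} and Remark~\ref{R29X13.11}; the Sobolev bookkeeping you carry out to pass from $f\in L^\infty H^{\ell-3}\cap W^{i,\infty}H^{\ell-2-3i}$ to the stated regularity of $h$ matches \eqref{10III14.t2}. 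The only thing you add beyond the paper's (essentially one-line) argument is the observation, via Remark~\ref{R9X13.1}, that because $A^u$ and $A^v$ here are $f$-independent one is in the simpler regime of Proposition~\ref{Piter} rather than Proposition~\ref{Piter2} --- a useful remark, but not a different method.
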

\begin{figure}[t]
\begin{center} {
\psfrag{i}{\huge $i$} \psfrag{i1}{ \huge$i$+1} \psfrag{bi}{\huge
$b_i$} \psfrag{ai}{\huge $a_i$} \psfrag{ai1}{\huge$a_{i+1}$}
\psfrag{bi1}{\huge$b_{i+1}$} \psfrag{nm}{\huge $\mcN^-$}
\psfrag{np}{\huge $\mcN^+$} \psfrag{V}{\huge $\mcV$}
\resizebox{3in}{!}{\includegraphics{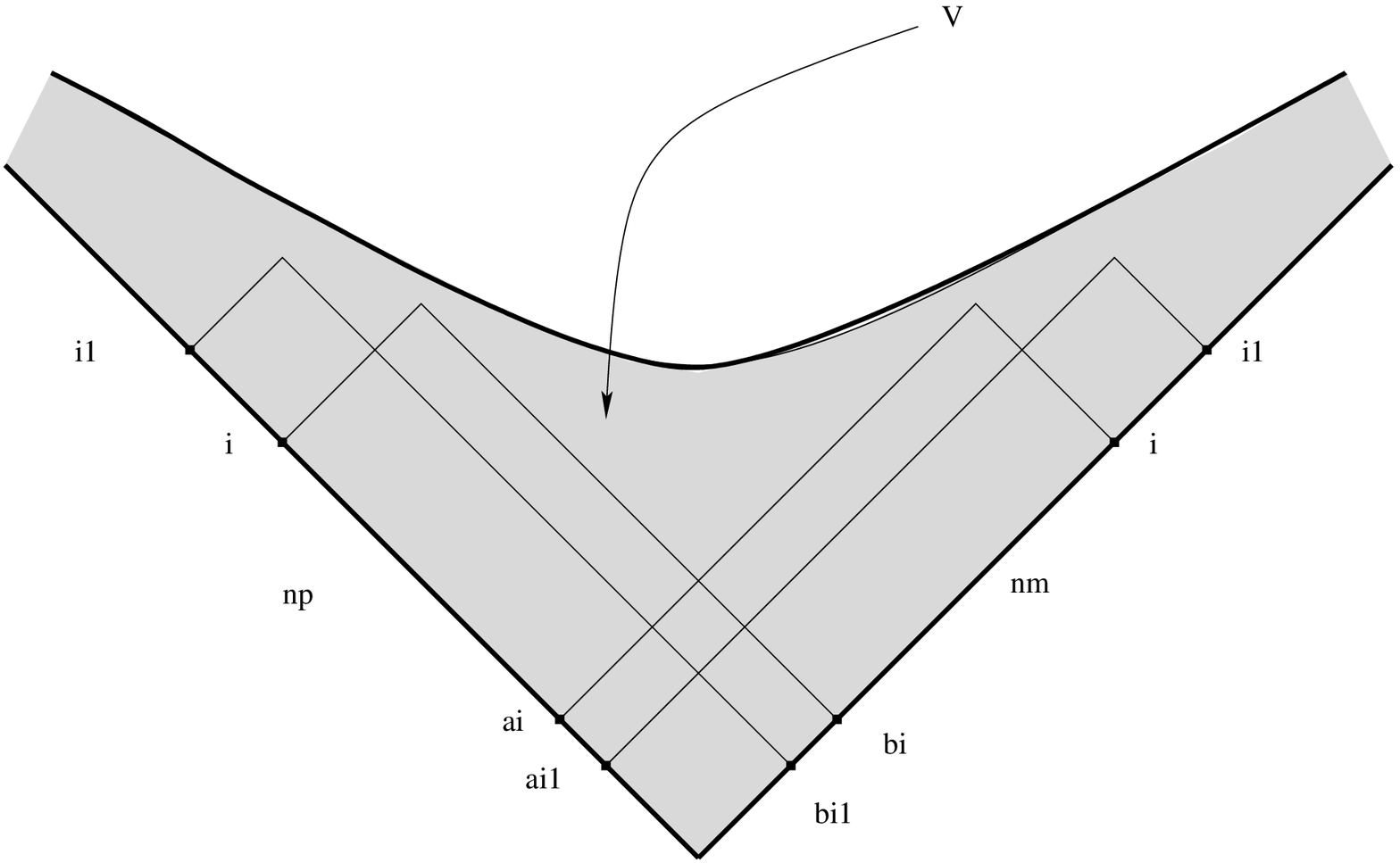}}} \caption{The
neighborhood $\mcV$ of $\mcN$.}\label{fig4}
\end{center}
\end{figure}

\begin{Remark}{\rm
\label{R8V14.11} Condition \eq{8V14.1} will hold for $h^\pm\in
C^\ell([0,\infty)\times Y)$.
}\qed\end{Remark}

\begin{Remark}{\rm
\label{R8V14.11x} An obvious analogue of Remark~\ref{R8V14.1}
concerning further regularity of $h$ applies.
}\qed\end{Remark}

\chapter{Einstein equations}

In this section we will show that our existence theorems above can
be used (in a somewhat indirect manner) to establish neighborhood
theorems for both Einstein equations with suitable sources and the
Friedrich conformal vacuum Einstein equations.

One could try to analyse whether the harmonic coordinate reduction
of Einstein equations leads to equations with a non-linearity
structure to which Theorem~\ref{T23X13.2} applies. Here a problem
arises, because our iteration scheme requires a doubly-null
decomposition of the principal symbol of the wave equation, which is
the wave operator. This requires going to harmonic coordinates, but
those lead to a loss of derivative of the coefficients. It is
conceivable that this can be overcome, but it appears simpler to
work directly in a formalism where the doubly-null decomposition of
the equation is built-in from the outset, namely the
Newman-Penrose-Friedrich-Christodoulou-Klainerman-Nicol\`o
equations. We will show that this decomposition fits indeed in our
set-up.

We use the conventions and notations of~\cite{ChLengardprep}. For
the convenience of the reader we include in
Appendix~\ref{Aformalism} a shortened version of that section
in~\cite{ChLengardprep} which introduces the relevant formalism.

\section{The Einstein vacuum equations}
 \label{ss13XI13.1}

We start with the vacuum Einstein equations, which we write as a set
of equations for a tetrad $e_q=e_q{}^\mu\partial_\mu$, for the
related connection coefficients defined as
\bel{13XI13.5}
 \nabla_i e_j = \Gamma_i{}^k{}_j e_k
 \,,
\ee and for the tetrad components $d^i{}_{jk\ell}$ of the Weyl
tensor. We assume that the scalar products $g_{ij}:=g(e_i,e_j)$ are
point-independent, with the matrix $g_{ij}$ having Lorentzian
signature. We require that $\nabla$ is $g$-compatible, which is
equivalent to
\bel{13XI13.6}
 \Gamma_{ijk}= -\Gamma_{ikj}
 \,,
 \
 \mbox{where $\Gamma_{ijk}:= g_{j\ell} \Gamma_{i}{}^\ell{}_{k}$}
 \,.
\ee

Consider the set of equations due to Friedrich
(see~\cite{FriedrichCargese} and references therein)
\begin{deqarr}
&&[e_{p},e_{q}] = (\Gamma_{p}\,^{l}\,_{q} -
\Gamma_{q}\,^{l}\,_{p})\,e_{l}
 \,,\label{E1x}
\\
&&e_{p}(\Gamma_{q}\,^{i}\,_{j}) - e_{q}(\Gamma_{p}\,^{i}\,_{j}) -
2\,\Gamma_{k}\,^{i}\,_{j}\,\Gamma_{[p}\,^{k}\,_{q]} +
2\,\Gamma_{[p}\,^{i}\,_{|k|} \Gamma_{q]}\,^{k}\,_{j} \nn\\&&\quad =
d^{i}\,_{jpq} + \delta^{ i}_{[p}R_{q]j} -  g_{j[p} R_{q]}{}^{i}
 + \frac R 3 g_{j[p} \delta_{q]} ^{i}
  \,,
 \label{conf6x}
\\
  \label{E3x}
 &&\DD_{i} d^{i}\,_{jkl} = J_{jkl}
 \,.
  \arrlabel{Hconf1Ex}
\end{deqarr}
The first equation \eq{E1x} says that $\Gamma$ has no torsion.
Recall that we assumed that the $\Gamma_{ijk}$'s are anti-symmetric
in the last two-indices; together with \eq{E1x} this implies that
$\Gamma$ is the Levi-Civita connection of $g$. We will assume that
$d_{ijkl}$ has the symmetries of the Weyl tensor, then the
left-hand-side of \eq{conf6x} is simply the definition of the
curvature tensor of the connection $\Gamma$, with $d_{ijkl}$ being
the Weyl tensor, $R_{ij}$ being the Ricci tensor and $R$ the Ricci
scalar.

 In vacuum ($R_{ij}\equiv \lambda g_{ij}$ for some constant $\lambda$),  \eq{E3x}
with $J_{jkl}\equiv 0$ follows from the Bianchi identities for the
curvature tensor.

As shown by Friedrich~\cite[Theorem~1]{F2}
(compare~\cite{TimConformal}), every solution of \eq{Hconf1Ex} with
$J_{jkl}\equiv 0$ satisfying suitable constraint equations on the
initial data surface is a solution of the vacuum Einstein equations.

In Section~\ref{ss15XI13.1} below we will consider a class of
non-vacuum Einstein equations, in which case we will complement the
above with equations for further fields satisfying wave equations,
and then $R_{ij}$ and $J_{jkl}$ in \eq{Hconf1Ex} will be viewed as
prescribed functions of the remaining fields, their first
derivatives, the tetrad, the Christoffel coefficients, and the
$d_{ijkl}$'s, as determined from the energy-momentum tensor of the
matter fields.

We would like to apply Theorem~\ref{T23X13.2} to the problem at
hand. The first step  is to show that we can bring a subset of
\eq{Hconf1Ex}  to the form needed there. This will be done using the
frame formalism of Christodoulou and Klainerman, as described in
Appendix~\ref{Aformalism}.

Before pursuing, we will need to reduce the gauge-freedom available.
For this we need to understand what conditions can be imposed on
coordinates and frames without losing generality.

Given a metric $g$, we have seen in Section~\ref{s14XI13.1} how to
construct a coordinate system $(u,v,x^A)$ and vector fields $e_i$,
$i=1,\ldots,4$, with $e_3$ proportional to the vector field $\ell$
constructed there,
 and $e_4$ proportional to $\omega$ there,
so that the metric takes the form \eq{metric} below, with
\bel{14XI13.11}
 e_3 = \partial_u
 \,,
 \quad
 e_4 = e_4{}^v\partial_v + e_4{}^A\partial_A
 \,.
\ee
With this choice of tetrads $e_i = e_i{}^\mu\partial_\mu$,
\eq{E1x} becomes  an evolution equation for the tetrad coefficients
$e_i{}^\mu$
\bel{14XI13.15}
 [e_3,e_i] = \partial_u e_i{}^\mu \partial_\mu = (\Gamma_{3}\,^{l}\,_{i} -
\Gamma_{i}\,^{l}\,_{3})\,e_{l}
 \,.
\ee
By construction, the $\partial_u e_a$'s have no $u$ and $v$
components, which gives the identities
\bel{14XI13.21}
 0 = (\Gamma_{3}\,^{3}\,_{a} -
\Gamma_{a}\,^{3}\,_{3})
 =(\Gamma_{3}\,^{4}\,_{a} -
\underbrace{\Gamma_{a}\,^{4}\,_{3}}_{=0})
 \,.
\ee
Equivalently, in the notation of Appendix~\ref{Aformalism},
\bel{14XI13.22}
  \eta_a = \zeta_a
  \,,
  \quad
  \underline \xi_a=0
  \,.
\ee
Similarly, $\partial_u e_4$ has no $u$ component, which implies
\bel{14XI13.25}
 0 = \underbrace{\Gamma_{3}\,^{3}\,_{4}}_{=0} -
 \Gamma_{4}\,^{3}\,_{3}
 \quad
 \Longleftrightarrow
 \quad
  \underline \upsilon = 0
 \,.
\ee

Next, the vector fields $e_a$, $a=1,2$ are determined up to
rotations in the planes $\text{Vect}\{e_1,e_2\}$, and we can get rid
of this freedom by imposing
\bel{14XI13.12}
 \Gamma_3{}^a{}_b = 0
 \,.
\ee

By construction, the integral curves of the vector fields $e_3$ and
$e_4$ are null geodesics, though not necessarily
affinely-parameterised:
\bel{epc} \nabla_{e_3}e_3\sim e_3\,,
 \quad
\nabla_{e_4}e_4\sim e_4\,. \ee
In this gauge, using the notation of Appendix~\ref{Aformalism}
(see~\eq{struct1f}),
\bel{uxig} \Gamma_3{}^a{}_3 = 0=\Gamma_4{}^a{}_4 \quad
 \Longleftrightarrow
 \quad
   \xib^a=0= \xi^a\,.
\ee

The vanishing of the rotation coefficients just listed allows us to
get rid of the second term in some of the combinations
$$
 e_{3}(\Gamma_{q}\,^{i}\,_{j}) - e_{q}(\Gamma_{3}\,^{i}\,_{j})
$$
appearing in \eq{conf6x}. In this way, we can algebraically
determine
$$
 \partial_u \Gamma_{q}\,^{a}\,_{3}
 \
 \mbox{and}
 \
 \partial_u \Gamma_{q}\,^{a}\,_{b}
 %
$$
in terms of the remaining fields appearing in \eq{conf6x}. Similarly
we have
$$
 e_{4}(\Gamma_{q}\,^{a}\,_{4}) - e_{q}(\Gamma_{4}\,^{a}\,_{4}) =
 e_{4}(\Gamma_{q}\,^{a}\,_{4})
 \,,
 \quad
 e_{4}(\Gamma_{q}\,^{3}\,_{3}) - e_{q}(\Gamma_{4}\,^{3}\,_{3}) =
 e_{4}(\Gamma_{q}\,^{3}\,_{3})
 \,,
$$
which gives equations for $ e_{4}(\Gamma_{q}\,^{a}\,_{4})$ and  $
e_{4}(\Gamma_{q}\,^{3}\,_{3})$.

Keeping in mind \eq{14XI13.22} and the symmetries of the
$\Gamma_i{}^j{}_k$'s, all the non-vanishing connection coefficients
satisfy ODEs along the integral curves of $e_3=\partial_u$ or of
$e_4$.

The analysis of the divergence equation \eq{E3x} in
Appendix~\ref{Aformalism} leads in vacuum to  the following two
collections of fields,
\beal{phid}
\varphi&=&(e_i,\Gamma_i{}^a{}_b,\Gamma_i{}^a{}_3,\alp,\oubeta,\ro,\si,\obeta)\,,\\
\psi&=&(\Gamma_i{}^a{}_4,\Gamma_i{}^3{}_3,\ubeta,\osi,\oro,\beta,\ualp)
 \,,
\eeal{psid}
with the gauge conditions just given,
\bean
 &
 e_3{}^u=1
 \,,
 \quad
 0=e_3{}^v=e_3{}^A=e_4{}^u=e_a{}^u=e_a{}^v
  \,,
  &
\\
& \Gamma_3{}^3{}_a= \Gamma_a{}^3{}_3
  \,,
  \quad
 0= \Gamma_3{}^a{}_3=  \Gamma_4{}^i{}_4
 = \Gamma_3{}^a{}_b
    \,,
   &
\eeal{15XI13.1}
to which Theorem~\ref{T23X13.2}, p.~\pageref{T23X13.2} and
Remark~\ref{R29X13.11}, p.~\pageref{R29X13.11} apply. This will be
used to establish our main result for the vacuum Einstein equations.
However, before stating the theorem, an overview of some initial
value problems for the vacuum Einstein equations is in order.

As discussed in detail in~\cite{ChPaetz}, the characteristic initial
data for the vacuum Einstein equations on each of the hypersurfaces
$\mcN^\pm$ consist of a symmetric tensor field $\tilde g$ with
signature $(0,+,\ldots,+)$, so that the integral curves of the
kernel of $\tilde g$ describe the generators of $\mcN^\pm$. To the
tensor field $\tilde g$ one needs to add a connection $\kappa$ on
the bundle of tangents to the generators. In a coordinate system
$(r,x^A)$ on $\mcN$ so that $\partial_r$ is tangent to the
generators we have $\nabla_{\partial_r} \partial_r = \kappa
\partial_r$. The fields $\tilde g $ and $\kappa$ are not arbitrary,
but are subject to a constraint, the Raychaudhuri equation. If we
write
\bel{24V14.1}
 \tilde g= \og_{AB}(r,x^C) dx^A dx^B
\ee
and, in dimension $n+1$, we set
\bel{5VI14.1}
 \tau = \frac 12 \og^{AB} \partial_r \og_{AB}
 \,,
 \quad
 \sigma_{AB}= \frac 12 \partial_r \og_{AB} - \frac 12 \tau \og_{AB}
 \,,
\ee
then, in vacuum, the Raychaudhuri constraint equation reads
\bel{18V14.1}
    \partial_r\tau - \kappa \tau + |\sigma|^2 + \frac{\tau^2}{n-1} = 0
  \,.
  \ee

Here it is appropriate to mention the alternative approach of
Rendall~\cite{RendallCIVP}, where one prescribes the conformal class
of $\tilde g$ and one solves \eq{18V14.1} for the conformal factor,
after adding the requirement  that $\kappa$ vanishes identically.
Thus, in Rendall's scheme the starting point is an initial data
symmetric tensor field $\gamma=\gamma_{AB}(r,x^C)dx^A dx^B$ which is
assumed to form a one-parameter family of Riemannian metrics
$r\mapsto \gamma(r,x^A)$ on the level sets of $r$,  all
assumed to be diffeomorphic to a fixed $(n-1)$-dimensional manifold
$Y$. The conformal factor $\Omega$ relating $\coneg $ and the
initial data $\gamma$, $\overline g_{AB} = \Omega^2 \gamma_{AB}$ can
be written as
\begin{equation}
 \Omega = \varphi \left( \frac{\det s}{\det \gamma}\right)^{1/(2n-2)}
 \,.
 \label{definition_Omega}
\end{equation}
where $s=s_{AB}(x^C)dx^A dx^B$ is any $r$-independent convenient
auxiliary metric on  the surfaces $r=\const$. Note that the field
$\sigma_{AB}$ defined in \eq{5VI14.1} is independent of $\varphi$,
thus is defined uniquely by the representative $\gamma$ of the
conformal class of $\coneg$. One has
\begin{equation}
 \tau =(n-1)\partial_r\log\varphi
 \,,
\label{relation_tau_phi}
\end{equation}
which allows one to rewrite \eq{18V14.1} as a second-order
\textit{linear} ODE:
\begin{equation}
 \partial^2_{r}\varphi -\kappa\partial_r\varphi + \frac{|\sigma|^2}{n-1}\varphi =0
 \,.
 \label{constraint_phi}
\end{equation}
In this case, after solving \eqref{constraint_phi}, one \emph{has
to replace the initial hypersurfaces by its subset on which
$\varphi>0$.}

Recall next that, again in the approach of Rendall
(compare~\cite{CCM2}), the remaining metric functions on $\mcN$ are
obtained by solving linear ODE's along the generators of $\mcN$. One
could then worry that the requirement that the resulting tensor has
Lorentzian signature might lead to the need of passing to a further
subset of $\mcN$. This is indeed the case in the original
formulation of \cite{RendallCIVP}, the problem goes away when
handled appropriately, as it can be reformulated in such a way that
the remaining metric functions are freely
prescribable~\cite{ChPaetz}.

We finally note that the characteristic data on each of $\mcN^\pm$
have to be complemented by certain data on $\mcN^+\cap \mcN^-$, the
precise description of which is irrelevant here; the reader is
referred to \cite{RendallCIVP,ChPaetz,CCM2} for details.

To continue, it is useful to summarize some known
results about Cauchy problems for the Einstein equations:

\begin{Theorem}[Rendall]
   \label{T18V14.1}
   Given smooth characteristic vacuum initial data on $\mcN=\mcN^+\cup \mcN^-$, complemented by suitable data on $Y:=\mcN^+\cap \mcN^-$,  there exists a unique, up to isometry, vacuum metric defined in a future neighborhood of $\mcN^+\cap \mcN^-$, as shown in Figure~\ref{fig5}.
\end{Theorem}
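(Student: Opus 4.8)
The plan is to combine the reduction of the vacuum Einstein equations to a symmetric hyperbolic system in a doubly-null gauge, as set up just above, with Theorem~\ref{T23X13.2} and Remark~\ref{R29X13.11}. First I would recall from the discussion preceding the statement that the characteristic data $(\tilde g,\kappa)$ on $\mcN^\pm$, subject to the Raychaudhuri constraint \eq{18V14.1}, together with the complementary data on $Y=\mcN^+\cap\mcN^-$, determine by solving linear transport (ODE) equations along the generators all the frame components $e_i{}^\mu$, the connection coefficients $\Gamma_i{}^j{}_k$, and the Weyl tensor components $d^i{}_{jk\ell}$ on $\mcN$; this is exactly the step for which the references \cite{RendallCIVP,ChPaetz,CCM2} are cited, and in the formulation of \cite{ChPaetz} these remaining fields are freely prescribable so that no passage to a subset of $\mcN$ is forced (beyond possibly shrinking $\mcN^\pm$ so that the constraint ODE \eq{constraint_phi} keeps $\varphi>0$). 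This produces the initial data $\overline f=(\overline\varphi,\overline\psi)$ on $\mcN$ for the first-order system built from \eq{phid}-\eq{psid} in the gauge \eq{15XI13.1}.

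Second, I would verify that this system is of the form \eq{fos}, with the block structure \eq{bid2.5}-\eq{bid3} and the strict positivity \eq{strict}: the tetrad/connection/Weyl evolution equations \eq{E1x}, \eq{conf6x}, \eq{E3x}, after using the gauge conditions \eq{14XI13.22}, \eq{14XI13.25}, \eq{14XI13.12}, \eq{uxig} to eliminate transversal derivatives, split into those propagated along $e_3=\partial_u$ (the $\varphi$-block, with $A^u_{\varphi\varphi}=e_3{}^u\,\Id=\Id>0$) and those propagated along $e_4$ (the $\psi$-block, with $A^v_{\psi\psi}$ proportional to $e_4{}^v\,\Id>0$); the Bianchi part \eq{E3x} is the symmetric-hyperbolic subsystem whose symmetrizability in the null frame is classical (Friedrich \cite{F2}, Christodoulou-Klainerman). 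This is precisely the verification already asserted in the paragraph containing \eq{15XI13.1}. Applying Theorem~\ref{T23X13.2}, and Remark~\ref{R29X13.11} to patch across the $a_0,b_0\to\infty$ exhaustion, yields a solution $f=(\varphi,\psi)$ of the reduced system on a future neighborhood of $\mcN^+\cap\mcN^-$, with the regularity stated there; for smooth data the solution is smooth.

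Third, I would run the standard ``propagation of constraints / of the gauge'' argument to conclude that the solution of the reduced system is in fact a solution of the full Einstein equations \eq{Hconf1Ex} with $J_{jk\ell}\equiv0$, $R_{ij}\equiv0$, and hence (invoking Friedrich~\cite[Theorem~1]{F2}, compare~\cite{TimConformal}) gives a genuine vacuum metric $g=g_{ij}\,\theta^i\theta^j$ realising the prescribed characteristic data. Here one checks that the torsion equation \eq{E1x}, the algebraic definition of curvature in \eq{conf6x}, the vanishing of $R_{ij}$, and the Bianchi identity \eq{E3x} are satisfied by the constructed fields: the quantities measuring the failure of these identities satisfy a homogeneous symmetric hyperbolic system with trivial characteristic data (the constraints being assumed on $\mcN$), so uniqueness in the characteristic Cauchy problem, i.e. the uniqueness clause of Theorem~\ref{T23X13.2}, forces them to vanish. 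Uniqueness up to isometry of the resulting metric follows from the geometric uniqueness of the doubly-null gauge together with the uniqueness part of Theorem~\ref{T23X13.2}.

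The main obstacle is the constraint-propagation step: one must organise the "zero-quantities" (torsion, the curvature-minus-$d$ defect, Ricci, and the Bianchi defect $J$) into a closed first-order symmetric hyperbolic subsidiary system in the \emph{same} doubly-null gauge, check that the gauge conditions \eq{15XI13.1} are themselves propagated (so that, e.g., $e_3{}^u\equiv1$ and $\Gamma_3{}^a{}_b\equiv0$ persist), and verify that all these subsidiary quantities vanish on $\mcN$ as a consequence of the assumed characteristic constraints and the data on $Y$. This is carried out in Friedrich's work and is the content for which \cite{F2,TimConformal} are cited; the remaining steps are the routine translation of the geometric data into the hypotheses of Theorem~\ref{T23X13.2} and an invocation of that theorem.
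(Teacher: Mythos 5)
The paper does not prove this statement: Theorem~\ref{T18V14.1} is attributed to Rendall, stated without proof, and used as an external input (the intended justification being a citation of \cite{RendallCIVP}). Rendall's actual argument proceeds via the harmonically reduced Einstein equations -- a system of quasilinear wave equations -- for which constraint/gauge propagation in the characteristic setting is a classical fact; it does not use the doubly-null frame reduction of the present paper at all.

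Your proposal instead tries to re-derive Rendall's theorem from the paper's own machinery (Theorem~\ref{T23X13.2} applied to the first-order system built from \eq{phid}--\eq{psid} in the gauge \eq{15XI13.1}), and there is a genuine gap at exactly the point you compress into ``the standard propagation of constraints argument.'' You assert that the subsidiary system for the zero-quantities is ``carried out in Friedrich's work and is the content for which \cite{F2,TimConformal} are cited.'' That is not accurate. Friedrich~\cite[Theorem~1]{F2} establishes constraint propagation for \emph{his} symmetric-hyperbolic reduction (and \cite{TimConformal} for a conformal-field-equation wave reduction); neither reference writes down or analyses the subsidiary system for the specific Christodoulou--Klainerman doubly-null first-order system used here with the gauge choices \eq{14XI13.22}, \eq{14XI13.25}, \eq{14XI13.12}, \eq{uxig}. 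In fact, the paper itself says explicitly, in the proof of Theorem~\ref{TEin1}, that showing every solution of the reduced system \eq{phid}--\eq{psid} is a vacuum metric ``would require a considerable amount of work,'' and that they ``believe'' it but do not prove it. The entire Luk-style bootstrap on the time function $t=u+v$ in that proof is designed precisely to circumvent the constraint-propagation step you are invoking -- and it does so by \emph{using} Theorem~\ref{T18V14.1} as an input to show $t_*>0$. If your Step~3 were available, Theorem~\ref{TEin1} would follow directly from Theorem~\ref{T23X13.2} and the bootstrap would be redundant; the fact that the authors did not take that route is strong evidence that the step you wave at is where the real difficulty lies. To make your argument rigorous you would have to write down the closed first-order subsidiary system for torsion, the curvature-minus-$d$ defect, the Ricci tensor, the Bianchi defect, and the gauge residuals ($e_3{}^u-1$, $\Gamma_3{}^a{}_b$, $\underline\xi$, $\underline\upsilon$, etc.), verify that it is again a doubly-null symmetric hyperbolic system covered by the hypotheses of Theorem~\ref{T23X13.2}, and check that all these quantities vanish on $\mcN$ given the characteristic constraints and corner data -- none of which is done here or in the cited references for this particular reduction.
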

\begin{figure}[th]
\begin{center} {
\psfrag{n+}{\huge $\mcN^+$} \psfrag{n-}{\huge $\mcN^-$}
\psfrag{V}{\huge $V_0$}
\resizebox{2.2in}{!}{\includegraphics{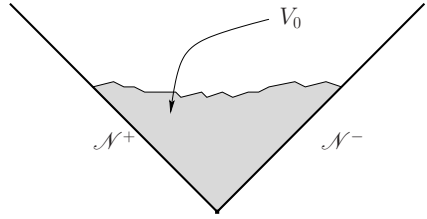}}}
\caption{The guaranteed domain of existence of the solution in
Rendall's Theorem.}\label{fig5}
\end{center}
\end{figure}

The following result is standard:

\begin{Theorem}
   \label{T18V14.2}
   Given  smooth vacuum initial data on a spacelike hypersurface $\Sigma$ with non-empty boundary $\partial \Sigma$ there exists   a unique, up to isometry, vacuum metric defined in a future neighborhood of $\Sigma$, bounded near $\partial \Sigma$ by smooth null ``ingoing'' hypersurfaces orthogonal to $\partial \Sigma$, as shown in Figure~\ref{fig5a}.
\end{Theorem}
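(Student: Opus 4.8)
The statement to prove is Theorem~\ref{T18V14.2}: local existence and uniqueness of a vacuum metric in a future neighborhood of a spacelike hypersurface $\Sigma$ with boundary $\partial\Sigma$, bounded near $\partial\Sigma$ by the ingoing null hypersurfaces orthogonal to $\partial\Sigma$. Since the theorem statement explicitly says "The following result is standard," my plan is to assemble the standard pieces rather than to reprove Choquet-Bruhat--Geroch-type existence from scratch.

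The plan is to proceed in three steps. First, treat the genuinely interior part: away from $\partial\Sigma$, choose a slightly shrunk spacelike hypersurface $\Sigma'\subset\Sigma$ with $\overline{\Sigma'}\subset\mathrm{int}\,\Sigma$, and invoke the classical local existence and uniqueness theorem for the vacuum Einstein equations in harmonic (wave) coordinates (Choquet-Bruhat), together with geometric uniqueness up to isometry on the domain of dependence, to obtain a vacuum development of a neighborhood of $\overline{\Sigma'}$ in $\Sigma$. This is purely the standard Cauchy problem and needs no further comment. Second, analyze a collar neighborhood of $\partial\Sigma$: near $\partial\Sigma$ the boundary of the prescribed region is formed by the two families of null geodesics orthogonal to $\partial\Sigma$ --- in our causal-future setting, by the ingoing null hypersurface emanating to the future from $\partial\Sigma$. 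The idea is to extend the initial data slightly past $\partial\Sigma$ into an open manifold $\widetilde\Sigma\supset\Sigma$, solve the Cauchy problem there by the standard theorem, and then observe, by the domain-of-dependence property of the Einstein equations in harmonic gauge (finite speed of propagation), that the solution on the causal future of $\Sigma$ that lies to the past of the ingoing null hypersurface through $\partial\Sigma$ is independent of the chosen extension of the data. This is exactly the argument already used in Remark~\ref{R10V2014.1} to reduce non-compact $Y$ to the compact case, and it yields both existence on the stated region and uniqueness up to isometry. Third, I would check consistency: the interior development of Step~1 and the collar development of Step~2 agree on their overlap by uniqueness in domains of dependence, so they patch to a single vacuum metric on the region described in Figure~\ref{fig5a}; the bounding hypersurface is smooth because it is the null hypersurface orthogonal to the smooth submanifold $\partial\Sigma$ in the smooth metric $g$, generated by null geodesics, and near $\partial\Sigma$ these generators do not focus.

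The only mildly delicate point --- and the one I would expect to be the main obstacle if one insisted on full rigor --- is the "extend past the boundary" construction of Step~2: one must verify that smooth vacuum initial data on $\Sigma$ (a manifold with boundary, with data including the induced metric and second fundamental form satisfying the constraints) can be extended to smooth data satisfying the constraints on a slightly larger boundaryless manifold $\widetilde\Sigma$. This is a standard consequence of the fact that the vacuum constraint operator is underdetermined-elliptic and admits extension/gluing results (Corvino--Schoen, or more elementarily a Borel-type extension of the data followed by a localized solution of the constraints, or simply an explicit extension since one only needs a \emph{collar} and any smooth symmetric-tensor extension can be corrected); alternatively one avoids it entirely by noting that the region in question lies in the domain of dependence of $\Sigma$ itself, so the harmonic-gauge Cauchy development on $\mathrm{int}\,\Sigma$ already covers a one-sided neighborhood of the generators of the ingoing null hypersurface through $\partial\Sigma$, by the same finite-propagation-speed reasoning. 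With that observation, Step~2 reduces to Step~1 plus a domain-of-dependence statement, and the proof is complete. I would simply cite the relevant standard references (e.g.~\cite{RendallCIVP} for the characteristic side and the classical literature for the spacelike Cauchy problem) and omit the routine details.
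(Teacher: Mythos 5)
The paper offers no proof of Theorem~\ref{T18V14.2}; it is cited as ``standard,'' so there is nothing internal to compare against. Your sketch is correct, and you have correctly identified both the minor subtlety (extending constraint data past $\partial\Sigma$) and the cleanest way around it. In fact, the shortest route is precisely the alternative you mention at the end: one does not need any extension construction at all. The harmonic-gauge Cauchy development of $\mathrm{int}\,\Sigma$ (obtained, say, by exhausting $\mathrm{int}\,\Sigma$ by compacta, developing each, and patching by geometric uniqueness) has $\mathrm{int}\,\Sigma$ as a Cauchy surface, and its future boundary near $\partial\Sigma$ \emph{is} the ingoing null hypersurface generated by the orthogonal null geodesics emanating from $\partial\Sigma$ --- this is just the standard description of the future Cauchy horizon $H^+(\mathrm{int}\,\Sigma)$ near an edge. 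Smoothness of this bounding hypersurface in a small enough neighborhood of $\partial\Sigma$ follows because the generators do not focus there, as you say. Uniqueness up to isometry on this region is then a direct consequence of geometric uniqueness for the spacelike Cauchy problem restricted to domains of dependence. Your first two steps (shrunk interior slice plus an extension past the boundary) are a valid but more cumbersome alternative; the extension argument, while repairable via Corvino--Schoen-type gluing or via the observation that the final region is anyway inside $D^+(\mathrm{int}\,\Sigma)$, is exactly the kind of extra machinery the domain-of-dependence argument lets you bypass. So the proposal is right, and you have in effect written two proofs; the second, which you relegate to a remark, is the one that makes the theorem ``standard.''
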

\begin{figure}[th]
\begin{center} {
\psfrag{sigma}{\huge $\Sigma$}
%
\resizebox{1.5in}{!}{\includegraphics{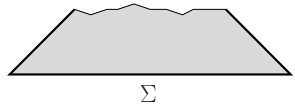}}} \caption{The
guaranteed future domain of existence of the solution with initial
data on a hypersurface with boundary.}\label{fig5a}
\end{center}
\end{figure}

One has of course a similar domain of existence to the past of
$\Sigma$, but this is irrelevant for our purposes.

From Theorems~\ref{T18V14.1} and \ref{T18V14.2} one easily obtains
existence of solutions of the mixed Cauchy problems illustrated in
Figures~\ref{fig7} and \ref{fig6}.
\begin{figure}[ht]
\begin{center} {
\psfrag{n-}{\huge $\mcN^-$} \psfrag{V0}{\huge $V_0$}
\psfrag{sig}{\huge $\Sigma$}
\resizebox{1.8in}{!}{\includegraphics{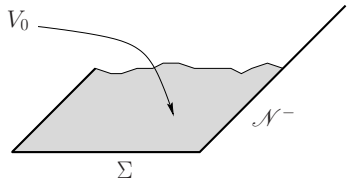}}} \caption{The
guaranteed domain of existence of solutions of a mixed Cauchy
problem with a ``left'' boundary and a characteristic initial data
hypersurface emanating normally from the ``right''
boundary.}\label{fig7}
\end{center}
\end{figure}
\begin{figure}[th]
\begin{center} {
\psfrag{n+}{\huge $\mcN^+$} \psfrag{n-}{\huge $\mcN^-$}
\psfrag{V0}{\huge $V_0$} \psfrag{sig}{\huge $\Sigma$}
\resizebox{2.1in}{!}{\includegraphics{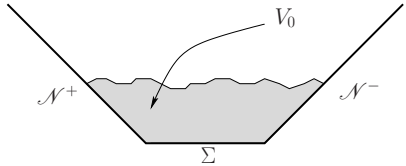}}} \caption{The
guaranteed domain of existence of solutions of a mixed Cauchy
problem with characteristic initial data hypersurfaces emanating
normally from the boundaries of a spacelike
hypersurface.}\label{fig6}
\end{center}
\end{figure}

We are ready to pass now to our main result, which for simplicity we
state for smooth metrics. The interested reader can   chase the
losses of differentiability which arise in various steps of the
proof to obtain the corresponding theorem with initial data of
finite Sobolev differentiability; compare~\cite{Luk}:

\begin{Theorem}
\label{TEin1} For any set of smooth characteristic initial data for
the vacuum Einstein equations on two transversally intersecting null
hypersurfaces $\mcN:=\mcN^+\cup \mcN^-$ there exists a smooth vacuum
metric defined in a  future  neighborhood $\mcU$ of $\mcN$. The
solution is unique up to diffeomorphism when $\mcU$ is appropriately
chosen and when appropriate initial data on $\mcN^+\cap \mcN^-$ are
given.
\end{Theorem}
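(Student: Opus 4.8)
The plan is to deduce Theorem~\ref{TEin1} from Theorem~\ref{T23X13.2} together with Rendall's Theorem~\ref{T18V14.1} and the standard mixed-problem existence Theorems~\ref{T18V14.1}--\ref{T18V14.2}, by a gluing argument along the lines already sketched in Remark~\ref{R29X13.11}. First, I would fix the gauge: given the characteristic data $(\tilde g,\kappa)$ on $\mcN^\pm$, solve the Raychaudhuri constraint and the remaining linear transport equations along the generators to obtain the full set of data for the Friedrich frame fields $(e_i,\Gamma_i{}^j{}_k,d^i{}_{jk\ell})$ on $\mcN$, in the gauge \eqref{15XI13.1}. As explained after \eqref{uxig}, in this gauge every non-vanishing frame, connection and Weyl component satisfies an ODE along $\partial_u=e_3$ or along $e_4$, so the transport equations of Section~\ref{sec:iterati-scheme} have smooth global solutions on $\mcN^\pm$ — no shrinking of the initial hypersurfaces is needed here because, unlike in the conformal-factor formulation \eqref{constraint_phi}, the signature-positivity issue has been dealt with at the level of the data. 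One then checks that the subsystem of \eqref{Hconf1Ex} obtained by selecting the components \eqref{phid}--\eqref{psid} has exactly the doubly-null structure \eqref{bid2.5}--\eqref{bid3} required by Theorem~\ref{T23X13.2}: $A^u$ acts only on $\varphi$, $A^v$ only on $\psi$, the blocks $A^u_{\varphi\varphi}=e_3{}^u\,\Id=\Id$ and $A^v_{\psi\psi}=e_4{}^v\,\Id$ are positive near $\mcN$, and the connections preserve the splitting.

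Next I would apply Theorem~\ref{T23X13.2} (and, by the $u\leftrightarrow v$ symmetry, Corollary~\ref{23X13.4}, then Remark~\ref{R29X13.11}) to this symmetric hyperbolic system to obtain a smooth solution $f=(\varphi,\psi)$ on a full future neighborhood $\mcU$ of $\mcN$. This produces a candidate tetrad $e_i$, connection $\Gamma_i{}^j{}_k$ and tensor $d^i{}_{jk\ell}$ satisfying the \emph{evolutionary} subsystem of Friedrich's equations. The crucial remaining point is the \textbf{propagation of constraints}: one must show that the components of \eqref{Hconf1Ex} that were \emph{not} used as evolution equations — the remaining Bianchi-type identities, the torsion-free condition \eqref{E1x} in the non-evolutionary directions, the algebraic relations $\eta_a=\zeta_a$, $\underline\xi_a=0$, $\underline\upsilon=0$, the anti-symmetry \eqref{13XI13.6}, and the vanishing Ricci/$J$ conditions — hold throughout $\mcU$ given that they hold on $\mcN$. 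This is done in the standard way: one derives, from the evolution system and the commutation relations $[e_p,e_q]=(\Gamma_p{}^l{}_q-\Gamma_q{}^l{}_p)e_l$, a closed linear homogeneous symmetric-hyperbolic system for the constraint quantities (the "zero-quantities" of Friedrich), with vanishing characteristic initial data; uniqueness for that auxiliary system, which again follows from Theorem~\ref{T23X13.2} or directly from the energy inequality \eqref{bid5} with $k=0$, forces all constraint quantities to vanish on $\mcU$. Having that, $\Gamma$ is the Levi-Civita connection of the Lorentzian metric $g:=g_{ij}e^i\otimes e^j$ built from the frame (the point-independence of $g_{ij}$ is itself one of the propagated relations via \eqref{13XI13.6}), and $g$ solves the vacuum Einstein equations $R_{ij}=0$ by Friedrich's Theorem~\cite[Theorem~1]{F2}.

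For uniqueness up to diffeomorphism, I would argue that any two smooth vacuum metrics realising the same characteristic data, with $\mcU$ appropriately chosen, can be brought into the gauge \eqref{15XI13.1} by a diffeomorphism (construct the double-null coordinates $u,v,x^A$ as in Section~\ref{s14XI13.1}, which is legitimate since the metrics are smooth, and adjust the frame by the $O(n-1)$ rotation \eqref{14XI13.1} and the rescaling \eqref{14XI13.2} to achieve \eqref{14XI13.12}, \eqref{uxig}); in that gauge both metrics give frame data solving the same symmetric hyperbolic Cauchy problem with the same initial data on $\mcN$, so they coincide by the uniqueness part of Theorem~\ref{T23X13.2}. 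The main obstacle in the whole argument is the constraint-propagation step: one must verify that the "leftover" components of \eqref{Hconf1Ex}, after the algebraic eliminations performed in the gauge-fixing, do close up into a homogeneous first-order system for the zero-quantities whose coefficients are controlled by the already-constructed solution — this is where the specific structure of Friedrich's frame equations (and the careful bookkeeping in Appendix~\ref{Aformalism}) is essential, and where losses of differentiability, were one tracking them, would have to be absorbed. The non-vacuum case of Section~\ref{ss15XI13.1} is then handled by the same scheme with $R_{ij}$ and $J_{jkl}$ regarded as prescribed functions of the matter fields, which themselves satisfy wave equations covered by Theorem~\ref{T29X13.1}.
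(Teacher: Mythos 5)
Your proposal takes a genuinely different route from the paper, and it has a gap precisely at the step that the paper's authors chose to avoid. You propose the ``direct'' approach: solve the reduced symmetric-hyperbolic system of Section~\ref{sec:iterati-scheme} via Theorem~\ref{T23X13.2}, then prove \emph{a posteriori} that the zero-quantities (the non-evolutionary components of \eqref{Hconf1Ex}, the residual torsion and gauge relations, the Ricci conditions) vanish everywhere because they satisfy a closed homogeneous symmetric-hyperbolic system with vanishing characteristic data. The paper explicitly declines this path, saying that ``such a direct proof would require a considerable amount of work,'' and instead adapts Luk's idea: let $t=u+v$ and define $t_*$ as the supremum of $t$ such that a genuine vacuum solution, constructed from the already-established local theorems (Rendall's Theorem~\ref{T18V14.1} near the corner, the spacelike Theorem~\ref{T18V14.2} and the mixed problems of Figures~\ref{fig7}--\ref{fig6}), covers $\big([0,a_*]\times[0,b_0]\big)\cap\{u+v<t\}$. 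On that slab the frame fields $(\varphi_E,\psi_E)$ extracted from the vacuum metric satisfy the same reduced system as the $(\varphi,\psi)$ produced by Theorem~\ref{T23X13.2}, hence coincide with them by uniqueness. The a~priori bounds from Theorem~\ref{T23X13.2} then force $(\varphi_E,\psi_E)$ to extend smoothly up to $\{u+v=t_*\}$, so the local theorems can be invoked again to push beyond $t_*$ unless $t_*=a_*+b_0$. In short, the paper uses the symmetric-hyperbolic theory only for a~priori control and uniqueness, and imports the validity of the Einstein equations from the classical local results, rather than re-deriving it via constraint propagation.

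The gap in your version is that the sentence ``this is done in the standard way'' is doing a great deal of unverified work. The set of zero-quantities arising from \eqref{Hconf1Ex} in the characteristic gauge \eqref{15XI13.1} is large and inhomogeneous (torsion components, antisymmetry defects of $\Gamma_{ijk}$, the relations $\eta_a=\zeta_a$, $\underline\xi_a=\underline\upsilon=0$, the unused Ricci/Bianchi components, the non-evolutionary combinations in \eqref{conf6x}), and one must check not only that they obey a linear first-order system whose coefficients depend on the already-constructed fields, but also that this system is again of the doubly-null symmetric-hyperbolic type required for a characteristic uniqueness statement (i.e.\ the analogue of \eqref{bid3} with the corresponding positivity); the standard spacelike constraint-propagation argument does not transfer for free. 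You also implicitly need to know on $\mcN$ itself that all constraints hold, which requires re-examining the transport-equation hierarchy in this frame gauge. None of this is impossible -- the authors say they believe it works -- but as written it is an assertion, not a proof, and it is exactly the piece of work the paper was structured to circumvent. Your uniqueness paragraph is closer in spirit to what the paper does (gauge-fix both metrics to the double-null frame of Section~\ref{s14XI13.1} and compare via the uniqueness in Theorem~\ref{T23X13.2}), and that part is sound.
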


\begin{proof} As shown in Section~\ref{ss29X13.1}, without loss of
generality we can parameterise each of $\mcN^\pm$ as
$[0,\infty)\times Y$. Symmetry under interchange of $u$ and $v$,
together with the argument presented in Remark~\ref{R29X13.11} shows
that it suffices to establish that given $ b_0>0$
there exists $a_*>0$ and a solution of the vacuum Einstein equations
defined in a doubly-null coordinate system covering the set
$[0,a_*]\times [0,b_0]\times Y$.

Theorem~\ref{T23X13.2} shows indeed that there exists such a
constant $a_*$ and a set of fields \eq{phid}-\eq{psid} solving the
equations described above  with the initial data determined from the
general relativistic initial data by a standard procedure. The
theorem would immediately follow if one knew that every resulting
set of fields \eq{phid}-\eq{psid} provides a solution of the
Einstein equations. While we believe that this is the case, such a
direct proof would require a considerable amount of work.
Fortunately one can proceed in a less work-intensive manner,
adapting the idea of Luk~\cite{Luk} to use the function $u+v$ as a
tool to ``build-up'' the solution:

Let $\mcU$ be any maximal domain of existence of a solution of the
vacuum Einstein equations assuming the given initial data. (Note
that the     question, whether a \emph{unique} such maximal domain
exists is irrelevant for our purposes.) As explained in
Section~\ref{s25X13.1}, there exists a neighborhood $\mcV_0$ of
$\mcN$ in $\mcU$ on which we can introduce a coordinate system
$(u,v,x^A)$ comprising a pair of null coordinates $u$ and $v$. On
$\mcV_0$ define
\bel{18V14.11}
 t:=u+v
 \,,
\ee
then $ \nabla t$ is timelike, and hence the level sets of $t$ are
spacelike.

Define
\bean
 t_*&:= & \sup\big\{ t\ |\   \mbox{the coordinates $u$ and $v$ cover the set}
\\
 &&
   \mbox{ $([0,a_*]\times [0,b_0])\cap \{u+v< t\}$}\big\}
 \,.
\eeal{18V14.12}
It follows from Theorem~\ref{T18V14.1} that $t_*>0$.

On the set
\bel{19V.1}
   \mbox{ $\big(([0,a_*]\times [0,b_0])\cap \{u+v< t_*\}\big)\times Y$}
 \,.
\ee
we have a solution of the vacuum Einstein equations, and therefore
corresponding fields $(\varphi,\psi)$ as in \eq{phid}-\eq{psid}
calculated from the vacuum metric, with $\mathring \beta=\beta$,
$\mathring {\underline \beta}=\underline\beta$, $\mathring \sigma =
\sigma$, and $\mathring \rho =\rho$. Let us denote those fields by
$(\varphi_E,\psi_E)$. But on this set we also have a smooth solution
$(\varphi,\psi)$ of the equations described in
Appendix~\ref{Aformalism}, with initial data calculated form the
solution of the Einstein equation.
Since both fields satisfy the same system of equations and have identical initial data, uniqueness gives  %
$$
 (\varphi,\psi)
 = (\varphi_E,\psi_E)
 \,.
$$
Suppose that $t_*< a_*$, as shown in Figure~\ref{fig8}.
\begin{figure}[t]
\begin{center} {
\psfrag{n+}{ \Huge $\mathsmaller{\mcN^+}$} \psfrag{n-}{\huge
$\mcN^-$} \psfrag{ts}{\huge $t_*$} \psfrag{as}{\huge $a_*$}
\psfrag{a0}{\huge $a_0$} \psfrag{b0}{\huge $b_0$}
%
\resizebox{3in}{!}{\includegraphics{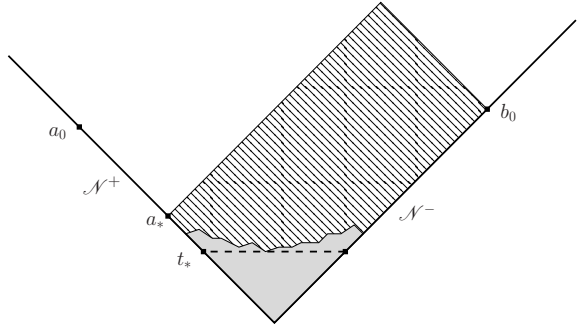}}}
\caption{The case $a_*>t_*$.}\label{fig8}
\end{center}
\end{figure}
Since  $ (\varphi,\psi)$ extend smoothly to the boundary $t=t_*$, so
do
  $ (\varphi_E,\psi_E)$. The pair $ (\varphi_E,\psi_E)$ at $t=t_*$ can be used to determine smooth Cauchy data for the vacuum Einstein equations for a Cauchy problem as shown in Figure~\ref{fig6}. The solution of this Cauchy problem allows us to extend the solution beyond $t=t_*$, contradicting the fact that $t_*$ was maximal.

The hypothesis that $a_*\le t_*\le b_0$ leads to a contradiction by
an identical argument, using instead the Cauchy problem illustrated
in Figure~\ref{fig7}, see Figures~\ref{fig9} and \ref{fig10}.
\begin{figure}[t]
\begin{center} {
\psfrag{n+}{\huge $\mcN^+$} \psfrag{n-}{\huge $\mcN^-$}
\psfrag{asts}{\huge $a_*=t_*$}
\psfrag{a0}{\huge $a_0$} \psfrag{b0}{\huge $b_0$}
%
\resizebox{3in}{!}{\includegraphics{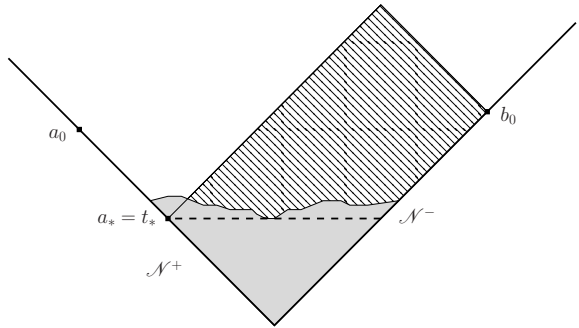}}} \caption{The case
$a_*= t_*$.}\label{fig9}
\end{center}
\end{figure}
\begin{figure}[t]
\begin{center} {
\psfrag{n+}{\huge $\mcN^+$} \psfrag{n-}{\huge $\mcN^-$}
\psfrag{ts}{\huge $t_*$} \psfrag{as}{\huge $a_*$} \psfrag{a0}{\huge
$a_0$} \psfrag{b0}{\huge $b_0$}
%
\resizebox{2,5in}{!}{\includegraphics{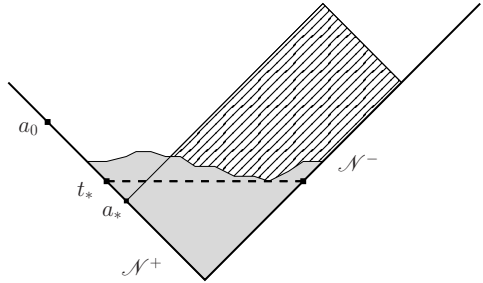}}} \caption{The case
$a_*<t_*<b_0$.}\label{fig10}
\end{center}
\end{figure}

The hypothesis that $b_0\le t_*< a_*+ b_0$ (see Figure~\ref{fig11})
\begin{figure}[ht]
\begin{center} {
\psfrag{n+}{\huge $\mcN^+$} \psfrag{n-}{\huge $\mcN^-$}
\psfrag{b0ts}{\huge $b_0=t_*$} \psfrag{as}{\huge $a_*$}
\psfrag{a0}{\huge $a_0$}
%
\resizebox{3in}{!}{\includegraphics{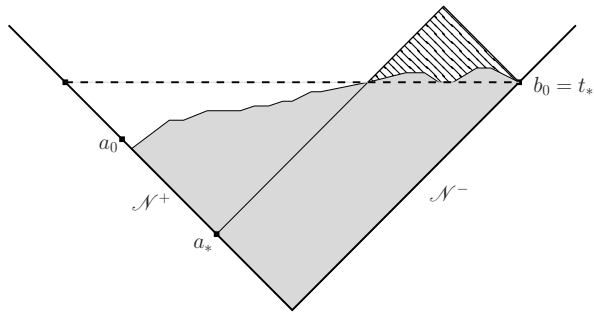}}} \caption{The case
$a_*< t_*=b_0$.}\label{fig11}
\end{center}
\end{figure}
leads to a contradiction by an identical argument, using
Theorem~\ref{T18V14.2}, compare Figure~\ref{fig5a}.

Hence $t_*=a_*+b_0$, and the result is established.
\end{proof}

\section{Einstein equations with sources satisfying wave equations}
 \label{ss15XI13.1}

The analysis of the previous section generalises immediately to
Einstein equations with matter fields satisfying wave equations,
such as the Einstein-scalar field system or the
Einstein-Yang-Mills-Higgs equations. More generally, consider a
system of equations of the form
\bel{19V14.1}
 R_{\mu\nu} -\frac R 2 g_{\mu\nu} = T_{\mu\nu}
 \,,
 \quad
 T_{\mu\nu}=T_{\mu\nu}(\Phi,\partial\Phi,g,\partial g)
 \,,
\ee
with $\nabla_\mu T^\mu{}_\nu = 0 $ whenever the matter fields $\Phi$
satisfy a set of wave-equations of the form
\bel{19V14.2}
 \Box_g \Phi = F(\Phi,\partial \Phi,g,\partial g)
 \,.
\ee
As explained in Section~\ref{s25X13.1}, one can obtain a doubly-null
system of equations from \eq{19V14.2}. The Einstein equations
\eq{19V14.1} are treated as in the vacuum case, with non-zero source
terms $J_{ijk}$ in the Bianchi equations determined by the matter
fields. This leads to an obvious equivalent of Theorem~\ref{TEin1},
the reader should have no difficulties formulating a precise
statement.

\section{Friedrich's conformal equations}
 \label{ss15XI13.2}

Let
$\tilde{g}$ be the physical space-time metric (not to be confused
with the initial data tensor field of \eq{24V14.1}), let $\Omega$ be
a function and let $g = \Omega^2 \tilde{g}$ be the unphysical
conformally rescaled counterpart of $\tilde{g}$. (To make easier
reference to~\cite{F2,F1,Friedrich:aDS,Friedrich:i0}, throughout
this section the symbol $g$ denotes the {\em unphysical} metric.)
Consider any frame field $e_k = e^{\mu}\,_k\,\partial_{x^{\mu}}$
such that the $g(e_i, e_k) \equiv g_{ik}$'s are constants, with $i$,
$k$, etc. running from zero to three. Using the Einstein vacuum
field equations, Friedrich~\cite{F2,F1} has derived a set of
equations for the fields 
 \renewcommand{\DD}{\nabla}
\[
e^{\mu}\,_k,\,\,\,\,\, \Gamma_i\,^j\,_k,\,\,\,\,\, d^{i}\,_{jkl} =
\Omega^{-1}\,C^{i}\,_{jkl},\,\,\,\,\, L_{ij} = \frac{1}{2}\,R_{ij} -
\frac{1}{12}\,R \,g_{ij},
\]
\[
\Omega,\,\,\,\,\, s = \frac{1}{4}\DD_{i}\DD^{i}\Omega +
\frac{1}{24}\,R\,\Omega,
\]
where $\Gamma_i\,^j\,_k$ denotes the Levi-Civita connection
coefficients in the frame $e_k$, $\nabla_i e_k = \Gamma_i\,^j\,_k
e_j$,
 while $C^{i}\,_{jkl}$, $R_{ij}$,
and $R$ stand, respectively, for the Weyl tensor, the Ricci tensor,
and the Ricci scalar of $g$. Friedrich's ``conformal field
equations'' read
\begin{deqarr}
&&[e_{p},e_{q}] = (\Gamma_{p}\,^{l}\,_{q} -
\Gamma_{q}\,^{l}\,_{p})\,e_{l}\,,
 \label{18XI13.1a}
\\
&&e_{p}(\Gamma_{q}\,^{i}\,_{j}) - e_{q}(\Gamma_{p}\,^{i}\,_{j}) -
2\,\Gamma_{k}\,^{i}\,_{j}\,\Gamma_{[p}\,^{k}\,_{q]} +
2\,\Gamma_{[p}\,^{i}\,_{|k|} \Gamma_{q]}\,^{k}\,_{j}
\nn\\
&&\quad = 2\,g^{i}\,_{[p}\, L_{q]j} - 2\,g^{ik}\,g_{j[p}\,L_{q]k} +
\Omega \, d^{i}\,_{jpq}\,,
 \label{18XI13.1b}
\\
&&\DD_{i} d^{i}\,_{jkl} = 0\,,
 \label{18XI13.1c}
\\
&&\DD_{i} L_{jk} - \DD_{j} L_{ik} = \DD_{l} \Omega \,
d^{l}\,_{kij}\,,
\\
&&\DD_{i}\, \DD_{j} \Omega = - \Omega \, L_{ij} + s g_{ij}\,,
\\
&&\DD_{i} s = - L_{ij} \DD^{j} \Omega\,,
\\
&&6 \Omega\,s - 3\,\DD_{j}\Omega\,\DD^{j}\Omega =0
 \,.
\label{18XI13.1g} \arrlabel{18XI13.1}
\end{deqarr}
The first equation expresses the fact that the Levi-Civita
connection is torsion free; the second is the definition of the
Riemann tensor; the third is the Bianchi identity assuming that
$\tilde{g}$ is Ricci flat. The remaining equations are obtained by
algebraic manipulations from the vacuum Einstein equations, using
the conformal transformation laws for the various objects at hand.
In regions where $\Omega > 0$ the system is equivalent to the vacuum
Einstein equations~\cite{F1,F2}.

%

We have seen in Section~\ref{ss13XI13.1} how to bring
\eq{18XI13.1a}-\eq{18XI13.1c} to a form to which
Theorem~\ref{T23X13.2} applies. It remains to provide equations for
the fields $L_{ij}$, $s$ and $\Omega$. For this we can use a subset
of the wave equations derived in~\cite{TimConformal}:
\begin{eqnarray}
 \Box _{ g} L_{i j}&=&  4 L_{i k } L_{j}{}^{k } -  g_{i j}| L|^2
  - 2\Omega d_{i m  j}{}^{\ell }  L_{\ell }{}^{m }
 + \frac{1}{6}\nabla_{i }\nabla_{j} R
  \label{cwe1}
  \,,
  \\
  \Box_g  s  &=& \Omega| L|^2 -\frac{1}{6}\nabla_{k } R\,\nabla^{k }\Omega  - \frac{1}{6} s  R
  \label{cwe2}
  \,,
  \\
  \Box_{ g}\Omega &=& 4 s-\frac{1}{6} \Omega  R
  \label{cwe3}
  \label{cwe5}
  \,,
\end{eqnarray}
with the conformal gauge $R=0$. In order to control the first
derivatives of the Christoffel symbols that appear in $\Box _{ g}
L_{i j}$ we add to the above set of equations the set of equations
obtained by differentiating \eq{18XI13.1a}-\eq{18XI13.1c} with
respect to all coordinates. This collection of fields will be
referred to as \emph{Friedrich's fields}.

The wave equations \eq{cwe1}-\eq{cwe3} are rewritten as a
doubly-null system as in Section~\ref{s14XI13.1}, noting that the
inverse metric $g^{\mu\nu}=g^{ij} e^\mu{}_i  e^{\nu}{}_j$ is
directly in a doubly-null form by construction. This leads to a
system of equations to which Theorem~\ref{T23X13.2} applies provided
that the initial data have the properties required there.

For this, we will assume that the characteristic initial data on
two transversally intersecting null hypersurfaces $\mcN:=\mcN^+\cup
\mcN^-$ are smoothly conformally extendable across a boundary at
infinity. The reader is referred to~\cite{PaetzScri,ChPaetz3} for a
detailed description of this class of initial data.

Given such initial data, we can use Theorem~\ref{T18V14.1} to solve
the Einstein equations to the future of $\mcN$. The solution can be
used to provide the initial data for Friedrich's collection of
fields just described on $\mcN$. We can then extend the resulting
initial data to a hypersurface which extends beyond the conformal
boundary at infinity. Theorem~\ref{T23X13.2} guarantees the
existence of a uniform neighborhood of the extended hypersurface and
a smooth solution of the Friedrich fields there. An argument
identical to the one in the proof of Theorem~\ref{T18V14.1} shows
that the solution of the Einstein equations exists on a uniform
neighborhood of $\mcN$ in the region where $\Omega>0$. This leads
to:

\begin{Theorem}
\label{TEin2} For any set of characteristic initial data for the
vacuum Einstein equations on two transversally intersecting null
hypersurfaces $\mcN:=\mcN^+\cup \mcN^-$ which are smoothly
conformally extendable across a boundary at infinity there exists a
smooth vacuum metric defined in a future neighborhood $\mcU$ of
$\mcN$ such that the resulting space-time has a  smooth non-empty
conformal boundary at null infinity .
The solution is unique up to diffeomorphism when $\mcU$ is
appropriately chosen and when appropriate initial data on
$\mcN^+\cap \mcN^-$ are given.
\end{Theorem}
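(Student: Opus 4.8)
The plan is to reduce Theorem~\ref{TEin2} to the existence result Theorem~\ref{T23X13.2} together with the ``build-up along $t=u+v$'' argument already carried out in the proof of Theorem~\ref{TEin1}, the only genuinely new ingredient being the passage across the conformal boundary at null infinity. First I would put Friedrich's system into the right shape: after imposing the gauge \eqref{15XI13.1} as in Section~\ref{ss13XI13.1}, the first-order equations \eqref{18XI13.1a}--\eqref{18XI13.1c} for the tetrad, connection coefficients and rescaled Weyl tensor take the doubly-null symmetric-hyperbolic form required by Theorem~\ref{T23X13.2}; one then adjoins the wave equations \eqref{cwe1}--\eqref{cwe3} for $L_{ij}$, $s$, $\Omega$ (in the conformal gauge $R=0$), rewritten as a first-order doubly-null system as in Section~\ref{s14XI13.1} — the inverse metric $g^{\mu\nu}=g^{ij}e^\mu{}_i e^\nu{}_j$ being in doubly-null form by construction with this frame — together with the equations obtained by differentiating \eqref{18XI13.1a}--\eqref{18XI13.1c} with respect to all coordinates, so as to control the derivatives of the $\Gamma_i{}^j{}_k$'s appearing in $\Box_g L_{ij}$. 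Call the resulting system, in the variables ``Friedrich's fields'', the \emph{reduced conformal system}; Theorem~\ref{T23X13.2} and Remark~\ref{R29X13.11} apply to it.

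Next I would produce the initial data. Starting from characteristic data on $\mcN=\mcN^+\cup\mcN^-$ which are smoothly conformally extendable across a boundary at infinity, I would: (i) apply Theorem~\ref{T18V14.1} (Rendall) to obtain a smooth vacuum metric in a future neighborhood of $\mcN^+\cap\mcN^-$; (ii) compute, on $\mcN$, the corresponding Friedrich fields from that solution (with $\mathring\beta=\beta$, $\mathring{\underline\beta}=\underline\beta$, $\mathring\sigma=\sigma$, $\mathring\rho=\rho$, and $\Omega$ the conformal factor), observing that smooth conformal extendability of the characteristic data is precisely what allows these data to be extended smoothly to a hypersurface $\widehat{\mcN}$ running beyond $\{\Omega=0\}$, with $\nabla\Omega\neq 0$ on that set; (iii) apply Theorem~\ref{T23X13.2} to the reduced conformal system with data on $\widehat{\mcN}$, obtaining a smooth solution on a \emph{uniform} (thickness bounded below) neighborhood of $\widehat{\mcN}$, hence in particular on a uniform neighborhood of $\mcN$ containing a piece of $\{\Omega=0\}$.

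Finally I would identify this reduced solution with a genuine vacuum Einstein metric on the region where $\Omega>0$, using the transport argument of the proof of Theorem~\ref{TEin1}. On the overlap of the uniform neighborhood with the (possibly non-uniform) Rendall neighborhood the reduced solution and the Friedrich fields of the Rendall metric satisfy the same equations with the same initial data, so by uniqueness in Theorem~\ref{T23X13.2} they coincide; thus on $\{u+v<t_*\}\cap\{\Omega>0\}$ the reduced solution is (via the equivalence of \eqref{18XI13.1} with the vacuum Einstein equations where $\Omega>0$, see~\cite{F1,F2}) a bona fide vacuum space-time. As in Theorem~\ref{TEin1} one sets $t=u+v$ (so $\{u+v=\const\}$ is spacelike) and shows that the supremum $t_*$ of times up to which the coordinate box is covered must exhaust $[0,a_*]\times[0,b_0]$: at $t=t_*$ the Friedrich fields extend smoothly to the slice and furnish Cauchy data for a mixed conformal Cauchy problem (spacelike, or spacelike plus an ingoing null piece, as in Figures~\ref{fig7}--\ref{fig6}), whose solution pushes $t_*$ up, contradicting maximality. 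This yields a smooth vacuum metric on a future neighborhood $\mcU$ of $\mcN$ whose conformal completion has a smooth, non-empty conformal boundary at null infinity, namely the image of $\{\Omega=0\}\cap\mcU$. Uniqueness up to diffeomorphism follows as for Theorem~\ref{TEin1}: the gauge \eqref{15XI13.1} removes the frame and coordinate freedom, Theorem~\ref{T23X13.2} gives uniqueness of the reduced solution, and the residual freedom is the choice of $\mcU$ and of the data on $\mcN^+\cap\mcN^-$; smoothness of the solution for smooth data is built into Theorem~\ref{T23X13.2}.

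The main obstacle is the step just described: ensuring that a solution of the \emph{reduced} conformal system is actually a solution of the \emph{full} Friedrich system \eqref{18XI13.1a}--\eqref{18XI13.1g} (constraint propagation for the conformal equations), and that the auxiliary fields one adjoined — the coordinate derivatives of the first-order variables — really are those derivatives. As in Theorem~\ref{TEin1}, I would not attack this head-on; instead one transports the geometric solution supplied by Rendall's theorem and by Friedrich's conformal Cauchy theory across the slices $\{u+v=\const\}$, invoking uniqueness in Theorem~\ref{T23X13.2} only to glue it to the reduced solution on overlaps. The delicate bookkeeping point is to make the mixed conformal Cauchy problems at $t=t_*$ precise — so that the extension beyond $t_*$ stays in the gauge $R=0$ and genuinely realises the prescribed Friedrich fields as Cauchy data — after which the argument closes exactly as in the proof of Theorem~\ref{TEin1}.
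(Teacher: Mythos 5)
Your proposal follows the paper's argument essentially step for step: gauge-fix as in Section~\ref{ss13XI13.1}, adjoin the wave equations \eqref{cwe1}--\eqref{cwe3} and the differentiated versions of \eqref{18XI13.1a}--\eqref{18XI13.1c}, rewrite in doubly-null form and invoke Theorem~\ref{T23X13.2}, use Rendall's theorem to generate data for Friedrich's fields and extend these smoothly past $\{\Omega=0\}$, and then run the $t=u+v$ build-up from the proof of Theorem~\ref{TEin1} to identify the reduced solution with a genuine vacuum metric where $\Omega>0$. Your treatment is in fact somewhat more explicit than the paper's (which refers to the build-up argument only by a cross-reference), and you correctly flag constraint propagation as the point that the transport argument is designed to sidestep.
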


An identical theorem applies to initial data given on a null cone.
When the initial data are sufficiently near to the Minkowskian ones,
all causal geodesics will be future complete in the resulting vacuum
space-time, with the null geodesics acquiring an end point on a
conformal boundary at null infinity.

\appendix

\chapter[A]{Doubly-null decompositions of the vacuum Einstein equations}\label{Aformalism}\label{C2}

The material in this appendix follows closely the presentation in
\cite{ChLengardprep}.

\section{Connection coefficients in a doubly null frame}
\label{SWcdnf}

Consider any field of vectors $e_i$, $i=1,\ldots,4$, such that
\be
(g_{ij}):=(g(e_i,e_j))=\left(\begin{array}{ccc}  \delta^a_b & 0 & 0 \\
0& 0
& -2 \\ 0 & -2 & 0
 \\ \end{array} \right)\,,
 \label{metric}
\ee
where indices $i,j$ {\em etc.\/} run from 1 to 4, while indices $a,b$
{\em etc.\/} run from 1 to 2. One therefore has $$ (g^{ij}):=
g(\theta^i,\theta^j)=\left(\begin{array}{ccc} \delta^a_b & 0 & 0 \\
0 &0 & -1/2
\\  0& -1/2 & 0  \end{array} \right)\,, $$
where
$\theta^i$ is a basis of $T^*\M$ dual to $e_i$.
If $\alpha_i$, $i=1,\cdots,4$, is a usual Lorentzian orthonormal basis
of $T\M$,
$$g(\alpha_i,\alpha_j)=\eta_{ij}=\mathrm{diag}(+1,+1,+1,-1)\,,
$$
then a basis $e_i$ as above can be constructed by setting
$$e_a=\alpha_a\,, \quad e_3=\alpha_3+\alpha_4\,, \quad
e_4=\alpha_4-\alpha_3\,.$$
Let ${\mathrm{Vol}}_g$ be the Lorentzian volume element of $g$, with the
associated completely anti-symmetric tensor $\epsilon_{ijkl}$:
$$
{\mathrm{Vol}}_g = \beta^1\land \beta^2 \land \beta^3 \land \beta^4 =
\frac{1}{4!} \epsilon_{ijkl} \;\beta^i\land \beta^j \land
\beta^k \land \beta^l\,,
$$
where $\beta^i$ is a basis dual to $\alpha_j$.
We have $\theta^3=(\beta^3+\beta^4)/2$, $\theta^4=
(\beta^4-\beta^3)/2$, $\beta^3= \theta^3-\theta^4$, $\beta^4=
\theta^3+\theta^4$, hence
$$\Vol = 2 \theta^1\land\theta^2\land\theta^3\land\theta^4=
\frac{1}{4!} \epsilon_{ijkl} \;\theta^i\land \theta^j \land
\theta^k \land \theta^l\,.$$ It
follows that in the basis $e_i$ the entries of the $\epsilon$ tensor
are zeros, twos, and their negatives:
\be
\label{epstens}
\epsilon_{1234}=2\,.
\ee
We let
$$
 \ts=\mathrm{Vect}(\{e_1,e_2\})\,,
$$
where Vect$(X)$ denotes the
vector space spanned by the elements of the set $X$.

\newcommand{\hDlocal  }{D}%
\newcommand{\hGlocal }{\Gamma}%
For any
connection $\hDlocal  $ we define the connection coefficients $\hGlocal _i{}^j{}_k$
by the formula $$\hGlocal _i{}^j{}_k:=\theta^j(\hDlocal  _{e_i}e_k)\,,$$ so that
$$
 \hDlocal  _{e_i}e_k = \hGlocal _i{}^j{}_k e_j
 \,.
$$
The connection $\hDlocal  $ has no torsion if and only if
$$
\hDlocal  _{e_i}e_k -\hDlocal  _{e_k}e_i = [e_i,e_k]\,,
$$
and it is  metric compatible if and only if
\begin{equation}\label{metnotcomp}
\hDlocal  _{i}g_{jk}\equiv (\hDlocal  _{e_i}g)(e_j,e_k) =
-\hGlocal _{ijk}-\hGlocal _{ikj}=0\,.
\end{equation}
Here and elsewhere, $$\hGlocal _{ijk}:=g_{jm }\hGlocal _i{}^m{}_k\,.$$

The {\em null second fundamental forms} of a codimension two
submanifold $S$
 are the two symmetric tensors on $S$ defined
as\footnote{Those objects are only defined up to an overall
multiplicative function, related to the possibility of rescaling
the null vector fields $e_3$ and $e_4$; some definite choices of
this scale will be made later.} \be \chi(X,Y)=g(D_Xe_4,Y)\,,\qquad
\chib(X,Y)=g(D_Xe_3,Y)\,, \eeq where $D$ is the Levi-Civita
connection of $(\M,g)$, while $X,Y$ are tangent to $S$.  The {\em
torsion} of $S$ is a 1-form on $S$, defined for vector fields $X$
tangent to $S$ by the formula
\be
\ze(X)=-\frac{1}{2}g(D_Xe_3,e_4)=\frac{1}{2}g(D_Xe_4,e_3)
 \,.
\eeq
In the definitions above it is also assumed that $e_3$ and $e_4$ are
normal to $S$, so that $\ts$ coincides, over $S$, with the
distribution $TS$ of the planes tangent to $S$.  (Throughout the
indices are raised and lowered with the metric $g$.)

Following%
\footnote{We are grateful to Klainerman and Nicol\`{o} for
making their tex files available to us.}
Klainerman and Nicol\`{o}, we use the following labeling of the remaining Newman-Penrose
coefficients associated with the frame fields $e_i$:
\renewcommand{\hxi}{\xi}%
\renewcommand{\hxib}{\underline{\xi}}%
\renewcommand{\heta}{\eta}%
\renewcommand{\hetab}{\underline{\eta}}%
\renewcommand{\home}{\omega}%
\renewcommand{\homb}{\underline{\omega}}%
\renewcommand{\hups}{\upsilon}%
\renewcommand{\hupsb}{\underline{\upsilon}}%
\begin{deqarr}
\hxi_{a}&=&\frac{1}{2}g(\dd_{e_4}e_4,e_{a})\,,\label{NP}\\
\hxib_{a}&=&\frac{1}{2}g(\dd_{e_3}e_3,e_{a})\,,\arrlabel{NP.}\\
\heta_{a}&=&-\frac{1}{2}g(\dd_{e_3}e_{a},e_4)
=\frac{1}{2}g(\dd_{e_3}e_4,e_{a})\,,\\
\hetab_{a}&=&-\frac{1}{2}g(\dd_{e_4}e_{a},e_3)
=\frac{1}{2}g(\dd_{e_4}e_3,e_{a})\,,\\
2\home&=&-\frac{1}{2}g(\dd_{e_4}e_3,e_4)
\,, \\
2\homb&=&-\frac{1}{2}g(\dd_{e_3}e_4,e_3)
\,,\\
2\hups&=& - \frac{1}{2}g(\dd_{e_3}e_3,e_4)\,,\\
 2\hupsb&=&- \frac{1}{2}g(\dd_{e_4}e_4,e_3)\,.\label{NPlast}
\end{deqarr}
(The principle that determines
which symbols are underlined, and which are not, should be clear from
\Eq{struct1} below: all the terms
\emph{at the right hand side} of that equation have a coefficient in
front of $e_4$ which is underlined.)  The above definitions, together
with the properties of the  connection coefficients $\hGlocal _{ijk}$,
imply the following:
\renewcommand{\hchi}{\chi}%
\renewcommand{\hchib}{\underline{\chi}}%
\renewcommand{\hze}{\zeta}%
\renewcommand{\hzeb}{\underline{\zeta}}%
\renewcommand{\hxi}{\xi}%
\renewcommand{\hxib}{\underline{\xi}}%
\begin{deqarr}
\hchi_{ab}& = & \hGlocal _{ab4}= -\hGlocal _{a4b} = 2 \hGlocal _a{}^3{}_b = - 2 \hGlocal _{ab}{}^3
\,,\nnn\label{gammas0}\\
\hchib_{ab}& = & \hGlocal _{ab3}= -\hGlocal _{a3b} = 2 \hGlocal _a{}^4{}_b = - 2 \hGlocal _{ab}{}^4
\,,\label{gammasb}\\
\hze_a  & = & \hGlocal _{a}{^3}{}_3 = -\frac 12 \hGlocal _{a43} =  \hGlocal _{a4}{}^4
\,,\nnn\\
\hzeb_a  & = & \hGlocal _{a}{^4}{}_4 = -\frac 12 \hGlocal _{a34} = - \hGlocal _{a3}{}^3
\,,\nnn\\
\hxi_a  & = &  \hGlocal _4{}^3{}_{a} = - \hGlocal _{4a}{}^3 = \frac 12 \hGlocal _{4a4}= -\frac 12 \hGlocal _{44a}
\,,\nnn\\
\hxib_a  & = &  \hGlocal _3{}^4{}_{a} = - \hGlocal _{3a}{}^4 = \frac 12 \hGlocal _{3a3}= -\frac 12 \hGlocal _{33a}
\,,\nnn\\
\heta_a  & = & \hGlocal _{3}{^3}{}_a = -\frac 12 \hGlocal _{34a} =  \frac
12\hGlocal _{3a4} = -\hGlocal _{3a}{}^3
\,,\nnn\\
\hetab_a  & = & \hGlocal _{4}{^4}{}_a = -\frac 12 \hGlocal _{43a} =  \frac
12\hGlocal _{4a3} = -\hGlocal _{4a}{}^4
\,,\nnn\\
2\home  & = & \hGlocal _{4}{^3}{}_3 = -\frac 12 \hGlocal _{443} =  \hGlocal _{44}{}^4
\,,\nnn\\
2\homb  & = & \hGlocal _{3}{^4}{}_4 = -\frac 12 \hGlocal _{334} =  \hGlocal _{33}{}^3
\,,\nnn\\
2\hups  & = & \hGlocal _{3}{^3}{}_3 = -\frac 12 \hGlocal _{343} =  \hGlocal _{34}{}^4
\,,\nnn\\
2\hupsb  & = & \hGlocal _{4}{^4}{}_4 = -\frac 12 \hGlocal _{434} =  \hGlocal _{43}{}^3
\,. \arrlabel{gammas}
\end{deqarr}
This leads to
\renewcommand{\hnabb}{\nabb}%
\renewcommand{\hdddd}{\ddd}%
\renewcommand{\hchib}{\chib}%
\renewcommand{\hdivv}{\divv}%
\renewcommand{\hJ}{J}%
\begin{deqarr}
\hDlocal  _{a}e_b&=&\hnabb_{a}e_b+\frac{1}{2}\hchi_{ab}e_3+\frac{1}{2}
        \hchib_{ab}e_4\nnn\,,\\
\hDlocal  _{3}e_a&=&\hdddd_3 e_a+\heta_{a}e_3+\hxib_{a}e_4\nnn\,,\\
\hDlocal  _{4}e_a&=&\hdddd_4 e_a+\hetab_{a}e_4+\hxi_{a}e_3\,,\nnn\\
\hDlocal  _{a}e_3&=&\hchib_{a}{}^{b}e_b+\hze_{a}e_3\nnn\,,\\
\hDlocal  _{a}e_4&=&\hchi_{a}{}^{b}e_b+\hzeb_{a}e_4\nnn\,,\\
\hDlocal  _{3}e_3&=&2\xib^ae_a+2\hups e_3\,,\label{struct1f}\\
\hDlocal  _{4}e_4&=&2 \hxi^ae_a +2\hupsb e_4\,,\nnn\\
\hDlocal  _{4}e_3&=&2\hetab^{b}e_{b}+2\home e_3\,,\nnn\\
\hDlocal  _{3}e_4&=&2\heta^{b}e_{b}+2\homb e_4\,.
\arrlabel{struct1}
\end{deqarr}
Here and elsewhere, $\hnabb_{a}e_b$, $\hdddd_3 e_a$ and $\hdddd_4 e_a$ are defined as the orthogonal projection of the left-hand side of the
corresponding equation to $\ts$.
We stress that no
simplifying assumptions have been made concerning the nature of the
vector fields $e_a$, except for the orthonormality relations
\eq{metric}.

\section{The double-null decomposition of Weyl-type  tensors}
 \label{ss9V14.1}

Let $\wdd^i{}_{jkl}$  be any tensor field with the symmetries of
the Weyl tensor, \be\label{weylsyms} \wdd_{ijkl}= \wdd_{klij}\,,
\quad \wdd_{ijkl}= -\wdd_{jikl}\,, \quad g^{jk}\wdd_{ijkl}=0\,,
\quad  \wdd_{i[jkl]}= 0\,;\ee we decompose $\wdd^i{}_{jkl}$ into
its null components, relative to the null pair $\{e_3,e_4\}$, as
follows:
\begin{deqarr}
&&\ua(\wdd)(X,Y)=\wdd(X,e_3,Y,e_3)\,,\quad \alp(\wdd)(X,Y)=\wdd(X,e_4,Y,e_4)\,,\nnn\\
&&\ub(\wdd)(X)=\frac{1}{2}\wdd(X,e_3,e_3,e_4)\,,\quad
\beta(\wdd)(X)=\frac{1}{2}\wdd(X,e_4,e_3,e_4)\,,\nnn\\
&&\ro(\wdd)=\frac{1}{4}\wdd(e_3,e_4,e_3,e_4)\,,\quad
\si(\wdd):=\ro(\sthd \wdd)=\frac{1}{4}{}\sthd\wdd(e_3,e_4,e_3,e_4)\,,
 \phantom{xxxxxxx}
\eql{decomp}
\end{deqarr}
where $X,Y$ are arbitrary vector fields orthogonal to $e_3$ and $e_4$,
while  $\sthd$ denotes the space-time Hodge dual
with respect to the first two indices of $\wdd_{ijkl}$:
\bel{13XI13.2}
 \sthd\wdd_{ijkl} = \frac 12 \epsilon_{ij}{}^{mn}\wdd_{mnkl}
 \,.
\ee
The fields $\alpha$ and $\ua$ are symmetric and
traceless. From \Eq{decomp} one finds
\begin{deqarr}
\wdd_{a3b3}=\ua_{ab}\,, & \wdd_{a4b4}= \alp_{ab}\,, \nnn\\
\wdd_{a334}=2\ub_{a}\,, & \wdd_{a434}= 2\beta_{a}\,, \nnn\\
\wdd_{3434}=4\rho_{}\,, & \wdd_{ab34}= 2\sigma\epsilon_{ab}\,, \nnn\\
\wdd_{abc3}=\epsilon_{ab}{}\;{}^\star{}\ub_{c}\,, & \wdd_{abc4}=
-\epsilon_{ab}\;{}^\star{}\beta_{c}\,, \nnn\\
\wdd^a{}_{3b4}=-\rho\delta^a_{b} +\sigma\epsilon^a{}_{b}\,, &
\wdd_{abcd}= -\rho \epsilon_{ab}\epsilon_{cd}
 \,,
 \arrlabel{identi}
\end{deqarr}
where
\be\label{epsdef}\epsilon_{12}=-\epsilon_{21}=1\,,\qquad
\epsilon_{11}=\epsilon_{22}=0\,.\ee
Further, $^\star$ denotes the Hodge dual on $\ts$ with respect to the
metric induced by $g$ on $\ts$:
\be
\label{twohodge}
{}^\star{}\beta_a=\epsilon_a{}^b \beta_b\,.
\ee

\section{The double-null decomposition of the Bianchi equations}
\label{dndBe}

Recall the second Bianchi identity for the Levi-Civita connection $D$,
\bel{12XI13.10}
 D_i R_{jk\ell m} +
 D_j R_{ ki\ell m} +
 D_k R_{ij\ell m} =
 0
 \,.
\ee
Contracting $i$ with $m$ one obtains
\bel{12XI13.11}
 D_i R_{jk\ell}{}^{ i} +
 D_j R_{ k \ell  } -
 D_k R_{j\ell } =
 0
 \,.
\ee
Inserting into this equation  the expression for the Riemann tensor in terms of the Weyl and Ricci tensors,
%
%
%
\bea
 R_{jk\ell}{}^{i}[ g] = W_{jk\ell}{}^{i} + 2\left(g_{\ell[j} L_{k]}{}^{i}  - \delta_{[j}^{i}L_{k]\ell} \right)
  \,,
 \label{conf6}
\;
\end{eqnarray}
where
\begin{equation}
  L_{ij} := \frac{1}{2}R_{ij} - \frac{1}{12} R g_{ij}
 \,,
\end{equation}
we obtain
\begin{eqnarray}
\label{Weyleq}
D_i W^i{}_{jk\ell} &= &J_{jk\ell}\,,
\end{eqnarray}
where
\bel{12XI13.14}
J_{jkl}
 =
 D_{[j} R_{ k] \ell  }
  - \frac 1{6} g_{\ell [k}D_{j]}R
  \,.
\ee
Here, and elsewhere, square brackets around a
set of $\ell$ indices denote antisymmetrization  with a multiplicative
factor $1/\ell!$.

Recall that the dual $\sthd W^i{}_{jk\ell}$ of $ W^i{}_{jk\ell}$ is defined as
$$
 \sthd W_{ijk\ell}:=  \frac 12 \epsilon_{ijmn}W^{mn}{}_{k\ell}
 \,.
$$
The well-known identity
$$
 \epsilon_{ijmn}W^{mn}{}_{k\ell} = \epsilon_{k\ell mn}W^{mn}{}_{ij}
 \,,
$$
together with \eq{Weyleq} leads to
\begin{eqnarray}
\label{dualWeyleq}
D_i {} \sthd W^i{}_{jk\ell} &= &\sthd J_{jk\ell}\,,
\end{eqnarray}
where
\bel{13XI13.1}
\sthd J_{jmn}
 :=
 \frac 12 \epsilon_{mn}{}^{k\ell} J_{jk\ell}=
   \frac 12 \epsilon_{mn}{}^{k\ell}\big(D_{[j} R_{ k] \ell  }
  - \frac 16 g_{\ell [k}D_{j]}R\big)
  \,.
\ee

Equations~\eq{Weyleq} and \eq{dualWeyleq} are often referred to as the \emph{Bianchi equations}.

We use, as in  Section~\ref{ss9V14.1}, the symbol $d_{ijkl}$ for the Weyl tensor $W_{ijkl}$. In vacuum \eq{Weyleq} becomes
\begin{eqnarray}
\label{Weyleq2}
\hDlocal  _i(g^{im}\wdd{}_{mjkl}) = g^{im}\hDlocal  _i\wdd{}_{mjkl}  =0\,.
\end{eqnarray}
\Eq{Weyleq2} with $k=3$ and $k=4$ gives
\begin{deqarr}
\arrlabel{Weyleq2.3}
\hDlocal  _3\wdd{}_{43kl} &= & 2 h^{ab}\hDlocal  _a\wdd{}_{b3kl}  -2 \hJ_{3kl}\,,\nnn\\
 \hDlocal  _4\wdd{}_{34kl} & =& 2 h^{ab}\hDlocal  _a\wdd{}_{b4kl}   -2 \hJ_{4kl}\,,
\end{deqarr}
which will give equations for $\beta, \ub,\sigma$ and $\rho$;
we use the symbol $h$ to denote the metric induced on $\ts$ by
$g$: for all $X,Y\in T\M$,
\begin{eqnarray}
\label{hmet}
h(X,Y)=g(X,Y) +\frac 12  g(e_3,X)g(e_4,Y) +\frac 12  g(e_4,X)g(e_3,Y)\,.
\end{eqnarray}
The equations for $\alpha_{ab}$ and $\ua_{ab}$ can be obtained
from
\be\label{boleq}\hDlocal  _i d^i{}_{ab4} = \hJ_{ab4}\,. \ee For any
tensor field $T_{ab}$ we denote by $\overline{T_{ab}}$ the
symmetric traceless part of $T_{ab}$, and by $\tr T$ its trace. As already pointed out,
we
set
\bea\label{be3D}
\hdddd_3 \beta_a
 & := &
   e_3(\beta_a) - \hGlocal _3{}^b{}_a \beta_b\,,
\\
   \label{a3eq}
   \hdddd_3 \alp_{ab}  & := &
    e_3(\alp_{ab}) - \hGlocal _3{}^c{}_a \alp_{cb}-
\hGlocal _3{}^c{}_b \alp_{ac} \,.
\eea
Following Christodoulou and
Klainerman~\cite{Ch-Kl}, we use the notation $\heta \hot \beta $
for \emph{twice} the trace-free symmetric tensor product of
vectors,
\be\label{tstp}(X\hot Y)^{ab} = 
X^aY^b + X^bY^a - g^{ab} X_c Y^c
\,,\ee similarly for covectors. We let $\hnabb$ be the orthogonal
projection on
  $\ts$ of the relevant covariant derivatives in directions tangent to
  $\ts$, \emph{e.g.}
\be\label{oprcd}\hnabb_{a}e_b = \hGlocal _a{}^c{}_b e_c\,.
\ee

Tedious but otherwise straightforward, calculations allow one to
obtain the equations satisfied by the tensor field $d$, listed out
as \Eq{Bianchiid} below. A useful symmetry principle, which allows
to reduce the number of calculations by half, is to note that
under the interchange of $e_3$ with $e_4 $ the underlined rotation
coefficients \eq{gammas} are exchanged with the non-underlined
ones. On the other hand, the null components of the tensor $\wdd$
transform as follows:
\begin{eqnarray}
\alpha \leftrightarrow \ua\,, &\qquad \rho \leftrightarrow \rho\,,
\nn\\
\beta \leftrightarrow -\ub\,, &\qquad \sigma \leftrightarrow -\sigma\,.
\label{transfrules}
\end{eqnarray}
 A convenient identity in the relevant manipulations  is
\begin{equation}
\hnabb_{c}\epsilon_{ab}= -2f_c\epsilon_{ab}=-(\hze_c+\hzeb_c)\epsilon_{ab}\,,
\label{epsident}
\end{equation}
as well as
\begin{equation}
\hnabb_{c}\epsilon_{a}{}^b= 0\,. \label{epsident2}
\end{equation}

The dynamical equations obtained by the
doubly-null decomposition of Equation \eq{Weyleq}
read\footnote{Equations~\eq{Bianchiid} are essentially a subset
of the Newman-Penrose equations written out in a tensor formalism.
The equations in~\cite{Ch-Kl} or in~\cite{KlainermanNicoloBook} can be
obtained from \eq{Bianchiid} by specialisation, and
straightforward changes of notation. We have corrected some inessential
misprints in the equations in~\cite{KlainermanNicoloBook}.}
\begin{deqarr}
\hdddd_4\ualp  & =& - \frac 12 \tr\hchi \ualp-\hnabb\hot \ubeta +
(2\home-2\hupsb)\ualp
 \nn
\\
 && -3(\overline{\hchib}\ro - \dual\overline{\hchib}\si)
-(4\hetab -\hze)\hot \ubeta+ 2\overline{\hJ({\cdot,\cdot,e_3})}
\,,\nnn\\
 \hdddd_3 \ubeta  & = & - 2\tr \hchib {}\ubeta  -\hdivv \ualp + 2\hups \ubeta -
\ualp\cdot(\heta-2\hze) 
 + 3(-\hxib\ro + \dual\hxib\si)
 \nn
\\
 &&
-\hJ({e_3,\cdot,e_3})\,,\nnn\\
\hdddd_4  \ubeta& = &-\tr\hchi \ubeta-\hnabb \ro +\dual\hnabb
  \si+2\KlNhat{\hchib}\cdot\beta+2\home \ubeta
  +3(-\hetab\ro+\dual\hetab\si)\nn\\
  &&
  +(\hze+\hzeb)\ro - (\dual\hze+\dual\hzeb)\si -
  \hxi\cdot\ualpha
 +
 \hJ({e_4,e_3,\cdot})
 \,,
  \nnn
\\
 \dd_3\ro
 & =&
 -\frac{3}{2}\tr \hchib\ro-\hdivv
\ub-\frac{1}{2}\KlNhat{\hchi}\cdot\ua
  +(2\hze + \hzeb -2\heta)\cdot\ub\nn
\\
 &&
  +2\hxib\cdot \beta
 + 4
(\hups + \homb)\rho + \frac 12 \hJ_{334}\,,
\nnn\\
\dd_4\ro
& =& -\frac{3}{2}\tr \hchi\ro
+\hdivv\beta-\frac{1}{2}\KlNhat{\hchib}\cdot\alpha
  -(2\hzeb + \hze -2\hetab)\cdot\beta\nn
\\ && -2\hxi\cdot \ubeta
 + 4
(\hupsb + \home)\rho + \frac 12 \hJ_{443}\,,
\nnn
\\
\dd_3\si & = &-\frac{3}{2}\tr \hchib{}\si-\hdivv \dual\ubeta +2
(\homb+\hups)\sigma -\frac{1}{2}{\hchit}\cdot\dual\ualpha
-2\hxib\cdot\dual\beta \nn \\ && +(\hzeb + 2\hze-2\heta
)\cdot\dual\ubeta
 - \frac 12 a(\hJ{(e_3,\cdot,\cdot)}) 
\,,\nnn
\\
\dd_4\si & = &-\frac{3}{2}\tr \hchi{}\si-\hdivv
\dual\beta
+2 (\home+\hupsb)\sigma
+\frac{1}{2}{\hchibt}\cdot\dual\alpha
-2\hxi\cdot\dual\ub
\nn \\ && +(\hze + 2\hzeb-2\hetab )\cdot\dual\beta
 - \frac 12 \epsilon^{ab}\hJ_{4ab}\,,\nnn\\
%
 \hdddd_3 \beta & = &-\tr\hchib \beta +\hnabb \ro +\dual\hnabb
  \si+2\KlNhat{\hchi}\cdot\ub+2\homb \beta
  +3(\heta\ro+\dual\heta\si)\nn\\
  &&
  -(\hze+\hzeb)\ro - (\dual\hze+\dual\hzeb)\si +
  \hxib\cdot\alp
  -
  \hJ({e_3,e_4,\cdot)}
\,,\nnn
\\
 \hdddd_4 \beta  & = & - 2\tr \hchi \beta
+\hdivv \alp + 2\hupsb \beta +
\alp\cdot(\hetab-2\hzeb) 
+ 3(\hxi\ro + \dual\hxi\si)
\nn\\ &&
-\hJ({e_4,\cdot,e_4})\,,\nnn
\\
\hdddd_3\alp  & =& - \frac 12 \tr\hchib \alp
+\hnabb\hot \beta +
(2\homb-2\hups)\alp
\nn
\\
&& -3(\overline{\hchi}\ro + \dual\overline{\hchi}\si) +
(4\heta -\hzeb)\hot \beta+ 2\overline{\hJ({\cdot,\cdot,e_4})}
\,.
\arrlabel{Bianchiid}
\end{deqarr}
For the convenience of the reader we give a summary
of notations used: The operators $\hdddd_4$ and $\hdddd_3$ are
defined as the orthogonal projections on $\ts$ of the $\hDlocal  $-covariant
derivatives along the null directions $e_3$ and $e_4$, e.g.:
$$ \hdddd_3 e_a
= \hGlocal _3{}^b{}_a e_b\,,
\qquad \hdddd_4 e_a =
\hGlocal _4{}^b{}_a e_b\,.
$$
In particular
$$
\hdddd_3\rho = \dd_3\ro= e_3(\rho)\,,\qquad \hdddd_3\sigma = \dd_3\sigma= e_3(\sigma)\,,
$$ {\em etc.}, with $\hdddd_3 \beta$ and $\hdddd_3 \alp_{ab}$ written
out explicitly in \Eq{be3D} and \Eq{a3eq}. Next, the $\nabb_a$'s
are differential operators in directions tangent to $\ts$ defined as
the orthogonal projection on $\ts$ of the relevant covariant
derivatives in directions tangent to $\ts$, \emph{cf.}
\Eq{oprcd}. We use the symbol  $\divv$  to denote the
``$\ts$-divergence'' operator: if $X=X^a e_a$ and $Y=
Y^{ab}e_a\otimes e_b$ then
$$\divv X =\hnabb_a X^a\,,
 \quad
 \divv  Y  = (\hnabb_a Y^{ab}) e_b
  \,.
$$
We have also set
$$
\hchit{}^{ab}= \hchi^{ba}\,.
$$
Next, a bar over a valence-two tensor denotes its {\em symmetric
traceless part},
{\emph{e.g.}}
$$
\KlNhat{\hchib}_{ab} = \frac 12 \left\{\hchib_{ab} + \hchib_{ba}-  g^{cd}\hchib_{cd}g_{ab}\right\}\,,
$$
while, for any two-index tensor $\chi_{ab}$,
$$
 a(\chi) = \varepsilon^{ab}\chi_{ab}
 \,.
$$
To avoid
ambiguities, we emphasize that in Equations~\eq{Bianchiid} the free
slot in $\hJ$, whenever occurring, refers to vectors in $\ts$, in
particular
\beaa
 a(\hJ{(e_4,\cdot,\cdot)}):= \epsilon^{ab}\hJ_{4ab}
\,, \quad a(\hJ{(e_3,\cdot,\cdot)}):= \epsilon^{ab}\hJ_{3ab}\,.
\eeaa
Finally the symbol $\hot$ has been defined in \Eq{tstp}.

\section{Bianchi equations and symmetric hyperbolic systems}

Let us pass now to a specific null reformulation of the equations at hand. Let $\alpha$, $\beta$, {\em etc}, be the null components of
$\wdd$, and for reasons which will become apparent below introduce
\beqar
&\obe:=\beta\,,\quad \oube:=\ubeta\,,&
\\ & \osi:= \sigma\,,\quad \orho:=\rho\,.&
\arrlabel{ovar}\eeqar
A convenient doubly-null form of  \Eq{Bianchiid} is obtained, in vacuum, by   rewriting \eq{Bianchiid} using \eq{ovar}
as follows
\footnote{\label{footfreedom}There is a certain
amount of freedom which undifferentiated terms at the right should be decorated
with ``o'''s, which is irrelevant for our purposes in this work.}
\begin{deqarr}
 \hdddd_4\ualp + \frac 12 \tr\hchi \ualp
 & =& -\nabb\hot \ubeta   +
(2\home-2\hupsb)\ualp -3(\overline{\hchib}\ro - \dual\overline{\hchib}\si)
 \nn
\\
 &&
-(4\hetab -\hze)\hot \ubeta
+ 2\overline{\hJ({\cdot,\cdot,e_3})}
\,,\nnn
\\
 \hdddd_3 \ubeta + 2\tr \hchib {}\ubeta
 & = &  -\hdivv \ualp + 2\hups \ubeta -
\ualp\cdot(\heta-2\hze) 
 + 3(-\hxib\ro + \dual\hxib\si)
 \nn
\\
 &&
-\hJ({e_3,\cdot,e_3})\,,\nnn
\arrlabel{Bianchiid2.1}
\\
\nydeqno
 \hdddd_4 \oubeta+\tr\hchi \oubeta  & =  &-\hnabb \oro +\dual\hnabb
  \osi+2\KlNhat{\hchib}\cdot\beta+2\home \oubeta
  +3(-\hetab\oro+\dual\hetab\osi)\nn\\
  &&
  +(\hze+\hzeb)\oro - (\dual\hze+\dual\hzeb)\osi -
  \hxi\cdot\ualpha
 +
 \hJ({e_4,e_3,\cdot})
 \,,
  \nnn
\\
D_3\osi+\frac{3}{2}\tr \hchib{}\osi & = &-\hdivv \dual\oubeta +2
(\homb+\hups)\osi -\frac{1}{2}{\hchit}\cdot\dual\ualpha
-2\hxib\cdot\dual\beta \nn \\ && +(\hzeb + 2\hze-2\heta
)\cdot\dual\oubeta
 - \frac 12 a(\hJ{(e_3,\cdot,\cdot)}) 
\,,\nnn\\
D_3\oro+\frac{3}{2}\tr \hchib\oro
& =&
 -\hdivv
\oubeta-\frac{1}{2}\KlNhat{\hchi}\cdot\ua
  +(2\hze + \hzeb -2\heta)\cdot\oubeta\nn
\\
 &&
  +2\hxib\cdot \beta
 + 4
(\hups + \homb)\orho + \frac 12 \hJ_{334}\,,
\nnn
 \arrlabel{Bianchiid2.2}
 \\
D_4\ro+\frac{3}{2}\tr \hchi\ro \nydeqno
& =&
 \hdivv\beta-\frac{1}{2}\KlNhat{\hchib}\cdot\alpha
  -(2\hzeb + \hze -2\hetab)\cdot\beta\nn
\\ && -2\hxi\cdot \ubeta
 + 4
(\hupsb + \home)\rho + \frac 12 \hJ_{443}\,,
\nnn
\\D_4\si+\frac{3}{2}\tr \hchi{}\si & = &-\hdivv
\dual\beta
+2 (\home+\hupsb)\sigma
+\frac{1}{2}{\hchibt}\cdot\dual\alpha
-2\hxi\cdot\dual\ub
\nn \\ && +(\hze + 2\hzeb-2\hetab )\cdot\dual\beta
 - \frac 12 \epsilon^{ab}\hJ_{4ab}\,,\nnn
\\
 \hdddd_3 \beta+\tr\hchib \beta  & = &\hnabb \ro +\dual\hnabb
  \si+2\KlNhat{\hchi}\cdot\ub+2\homb \beta
  +3(\heta\ro+\dual\heta\si)\nn\\
  &&
  -(\hze+\hzeb)\ro - (\dual\hze+\dual\hzeb)\si +
  \hxib\cdot\alp
  -
  \hJ({e_3,e_4,\cdot)}
\,,
\phantom{xxx}\nnn
 \arrlabel{Bianchiid2.3}
\\
 \hdddd_4 \obeta + 2\tr \hchi \obeta  \nydeqno
 & = &  \hdivv \alp + 2\hupsb \obeta +
\alp\cdot(\hetab-2\hzeb) 
+ 3(\hxi\ro + \dual\hxi\si)
\nn\\ &&
-\hJ({e_4,\cdot,e_4})\,,\nnn\\
\hdddd_3\alp + \frac 12 \tr\hchib \alp
 & =&  \hnabb\hot \obeta +
(2\homb-2\hups)\alp
 -3(\overline{\hchi}\ro + \dual\overline{\hchi}\si)
\nn
\\
&&
 +
(4\heta -\hzeb)\hot \beta+ 2\overline{\hJ({\cdot,\cdot,e_4})}
\,. \label{Bianchiid2.4}
\arrlabel{Bianchiid2.4}
\end{deqarr}
We have kept the source terms $\hJ$ for future reference; however, in vacuum, which is of interest here, we have $\hJ\equiv 0$.

Let us show that the principal part of
each of the systems \eq{Bianchiid2.1}-\eq{Bianchiid2.4}
is
symmetric hyperbolic, and of the form required in our analysis, when the scalar products are appropriately
chosen.

\begin{enumerate}
\item{\bf The $(\ualpha,\ubeta)$ equations \eq{Bianchiid2.1}:} We have $\ualpha_{12}=\ualpha_{21}$,
$\ualpha_{11}=-\ualpha_{22} $ hence the pair $(\ualpha,\ubeta)$
can be parameterized by
$f=(\ualpha_{11},\ualpha_{12},\ubeta_{1},\ubeta_{2})$.
\Eq{Bianchiid2.1} can be rewritten as
\newcommand{\oB}{\mathring{B}}
\begin{equation}
\label{symhyp} A^\mu\partial_\mu f + A f = F\,,
\end{equation}
with
\begin{equation}
\label{symhypB1} A^\mu\partial_\mu  = \left(\begin{array}{rrrr}
e_4 & 0 &e_1 & -e_2 \cr 0 &e_4 &e_2 & e_1\cr e_1 &e_2 &e_3 & 0\cr
-e_2 &e_1 & 0&e_3
      \end{array}
\right) \,,
\end{equation}
which is obviously symmetric with respect to the scalar product
\begin{deqarr}
\langle f,f\rangle &=&
\ualpha_{11}^2+\ualpha_{12}^2+\ubeta_1^2+\ubeta_2^2
\\ &=& \frac 12 h^{ac}h^{bd}\ualpha_{ab}\ualpha_{cd} +
h^{ab}\ubeta_a\ubeta_b\,. \label{absp}\end{deqarr}
\item
{\bf The $(\obetab,(\osigma,\orho))$ equations \eq{Bianchiid2.2}:}
The analysis of \eq{Bianchiid2.2} is obtained by obvious renamings
and permutations from that of \eq{Bianchiid2.3}, leading to a system
with identical principal part.

\item{\bf The $((\rho,\sigma),\beta)$ equations \eq{Bianchiid2.3}:} We
set $f=((\rho,\sigma),\beta)=(\rho,\sigma,\beta_1,\beta_2)$.
\Eq{Bianchiid2.3} can be rewritten in the form \eq{symhyp}
with
\begin{equation}
\label{symhypB2} A^\mu\partial_\mu  = \left(\begin{array}{rrrr}
e_4 & 0 &-e_1 & -e_2 \cr 0 &e_4 &-e_2 & e_1\cr -e_1 &-e_2 &e_3 &
0\cr -e_2 &e_1 & 0&e_3
      \end{array}
\right) \,,
\end{equation}
which is obviously symmetric with respect to the scalar product
\begin{eqnarray*}
\langle f,f\rangle &=& \rho^2 + \sigma^2 + \beta_1^2+ \beta_2^2
\\ &=&  \rho^2+\sigma^2 +
h^{ab}\beta_a\beta_b\,.
\end{eqnarray*}%
\item{\bf The $(\obeta,\alpha)$ equations \eq{Bianchiid2.4}:}
The analysis of \eq{Bianchiid2.4} is obtained by obvious renamings
and permutations from that of \eq{Bianchiid2.1}, done above.
\end{enumerate}

\bibliographystyle{amsplain}
\bibliography{%
../../references/reffile,%
../../references/newbiblio,%
../../references/newbiblio2,%
../../references/bibl,%
../../references/howard,%
../../references/bartnik,%
../../references/myGR,%
../../references/newbib,%
../../references/Energy,%
../../references/netbiblio,%
../../references/PDE}

\end{document}